\newif\iflong
\theoremstyle{plain}
\newtheorem*{claim}{Claim}
\title{Model checking and validity in propositional and modal inclusion logics\footnote{The second and the last author acknowledges support from Jenny and Antti Wihuri Foundation. The last author is also supported by the grant 292767 of the Academy of Finland. The third author is supported by the DFG grant ME 4279/1-1. }}
\author{Lauri Hella$^1$ \and Antti Kuusisto$^2$ \and Arne Meier$^3$ \and Jonni Virtema$^4$}
\date{\small $^1$ University of Tampere, Finland, \texttt{lauri.hella@uta.fi}\linebreak
$^2$ University of Bremen, Germany, \texttt{antti.j.kuusisto@gmail.com}\linebreak
$^3$ Leibniz Universität Hannover, Germany, \texttt{meier@thi.uni-hannover.de}\linebreak
$^4$ University of Helsinki, Finland, \texttt{jonni.virtema@helsinki.fi}}
\begin{document}

\maketitle

\begin{abstract}
\noindent Propositional and modal inclusion logic are formalisms that belong to the family of logics based on team semantics.
This article investigates the model checking and validity problems of these logics. 
We identify complexity bounds for both problems, covering both lax and strict team semantics. 
By doing so we come close to finalising the programme that ultimately aims to classify the complexities of the basic reasoning problems for modal and propositional dependence, independence, and inclusion logics.
\iflong
\bigskip

\noindent\textbf{Keywords:} Inclusion Logic -- Model Checking -- Complexity
\fi
\end{abstract}


%

\section{Introduction}
Team semantics is the mathematical framework of modern logics of dependence and independence, which, unlike Tarski semantics, is not based on singletons as satisfying elements (e.g., first-order assignments or points of a Kripke structure) but on sets of such elements. More precisely, a first-order team is a set of first-order assignments that have the same domain of variables. As a result, a team can be interpreted as a database table, where variables correspond to attributes and assignments to records. Team semantics originates from the work of Hodges \cite{Hodges97c}, where it was shown that Hintikka's IF-logic can be based on a compositional (as opposed to game-theoretic) semantics. In 2007, V\"{a}\"{a}n\"{a}nen \cite{vaananen07} proposed a fresh approach to logics of dependence and independence. V\"{a}\"{a}n\"{a}nen adopted team semantics as a core notion for his \emph{dependence logic}. Dependence logic extends first-order logic by atomic statements such as \emph{the value of variable $x$ is determined by the value of $y$}. Clearly such a statement is not meaningful under a single assignment, however, when evaluated over a team such a statement corresponds precisely to functional dependence of database theory when the team is interpreted as a database table.

Besides functional dependence, there are many other important dependency notions used in fields like statistics and database theory, which give rise to interesting logics based on team semantics.  
The two most widely studied of these new logics are 
\emph{independence logic} of Gr\"{a}del and V\"{a}\"{a}n\"{a}nen \cite{gv13},  and \emph{inclusion logic} of Galliani \cite{Galliani12}. Inclusion logic extends first-order logic by atomic statements of the form $x\subseteq y$, which is satisfied in a team $X$ if any value that appears as a value for $x$ in $X$ also appears as a value of $y$ in $X$. Dependence and independence logics are equal-expressive with existential second-order logic and accordingly capture the complexity class $\NP$ \cite{vaananen07,gv13}. Surprisingly, inclusion logic has the same expressive power as \emph{positive greatest fixed point logic}  $\GFPp$ \cite{GH13}. Since on finite structures, $\GFPp$ coincides with \emph{least fixed point logic} $\LFP$, it follows from the Immermann-Vardi-Theorem that inclusion logic captures the complexity class $\P$ on finite ordered structures. Interestingly under a semantical variant of inclusion logic called \emph{strict semantics} the expressive power of inclusion logic rises to existential second-order logic \cite{ghk13}. Moreover, the fragment of inclusion logic (under strict semantics) in which only $k$ universally quantified variables may occur captures the complexity class $\ntRAM(n^k)$ (i.e., structures that can be recognised by a nondeterministic random access machine in time $\mathcal O(n^k)$)   \cite{hk15}. That being so, indeed, inclusion logic and its fragments have very interesting descriptive complexity theoretic properties. 

In this paper, we study propositional and modal inclusion logic under both the standard semantics (i.e., \emph{lax semantics}) and strict semantics. The research around propositional and modal logics with team semantics has concentrated on classifying the complexity and definability of the related logics. Due to very active research efforts, the complexity and definability landscape of these logics is understood rather well; see the survey of Durand et~al.\ \cite{dukovo16} and the references therein for an overview of the current state of the research. In the context of propositional logic (modal logic, resp.) a team is a set of propositional assignments with a common domain of variables (a subset of the domain a Kripke structure, resp.).  \emph{Extended propositional inclusion logic} (\emph{extended modal inclusion logic}, resp.) extends propositional logic (modal logic, resp.) with \emph{propositional inclusion atoms}  $\varphi\subseteq \psi$, where $\varphi$ and $\psi$ are formulae of propositional logic (modal logic, resp.). The following definability results hold for the standard lax semantics. A class of team pointed Kripke models is definable in extended modal inclusion logic iff $\mK,\emptyset$ is in the class for every model $\mK$, the class is closed under taking unions, and the class is closed under the so-called \emph{team k-bisimulation}, for some finite $k$ \cite{hs15}. From this, a corresponding characterization for extended propositional inclusion logic follows directly. 
In \cite{sv15b,sv16} (global) model definability and frame definability of team based modal logics are studied. It is shown that surprisingly, in both cases, (extended) modal inclusion logic collapses to modal logic.

This paper investigates the complexity of the model checking and the validity problem for propositional and modal inclusion logic. The complexity of the satisfiability problem of modal inclusion logic was studied by Hella et~al.\ \cite{hkmv15}. The study on the validity problem of propositional inclusion logic was initiated by Hannula et~al.\ \cite{hkvv15}, where the focus was on more expressive logics in the propositional setting. Consequently, the current paper directly extends the research effort initiated in these papers. It is important to note that since the logics studied in this paper, are closed under negation, the connection between the satisfiability problem and the validity problem fails.
In \cite{hkvv15} it was shown that, under lax semantics, the validity problem for propositional inclusion logic is $\co\NP$-complete. Here we obtain an identical result for the strict semantics.  However, surprisingly, for model checking the picture looks quite different. We establish that whereas the model checking problem for propositional inclusion logic is $\P$-complete under lax semantics, the problem becomes $\NP$-complete for the strict variant. Also surprisingly, for model checking, we obtain remarkable in the modal setting; modal inclusion logic is $\P$-complete under lax semantics and $\NP$-complete under strict semantics. Nevertheless, for the validity problem, the modal variants are much more complex; we establish $\co\NEXPTIME$-hardness for both strict and lax semantics.
\iflong
For an overview of the results of this paper together with the known complexity results from the literature, see tables \ref{tbl:sat}--\ref{tbl:val} on page \pageref{tbl:sat}. 
\fi

\section{Propositional logics with team semantics}\label{preliminaries}

Let $D$ be a finite, possibly empty set of proposition symbols. A function $s\colon D\to \{0,1\}$ is called an \emph{assignment}. A set $X$ of assignments $s\colon D\to \{0,1\}$ is called a \emph{team}. The set $D$ is the \emph{domain} of $X$. 
We denote by $2^D$ the set of \emph{all assignments} $s\colon D\to \{0,1\}$. If $\vec p =(p_1, \ldots ,p_n)$ is a tuple of propositions  and $s$ is an assignment, we write $s(\vec p)$ for $\left(s(p_1),\dots,s(p_n)\right)$.

Let $\Phi$ be a set of proposition symbols. The syntax of propositional logic $\PL(\Phi)$ is given by the following grammar:
$$
\varphi \ddfn p\mid \neg p \mid (\varphi \wedge \varphi) \mid (\varphi \vee \varphi), \text{ where $p\in\Phi$}.
$$

We denote by $\modelsPL$ the ordinary satisfaction relation of propositional logic defined via assignments in the  standard way. Next we give team semantics for propositional logic. 
\begin{definition}[Lax team semantics]
Let $\Phi$ be a set of atomic propositions and let $X$ be a team. The satisfaction relation $X\models \varphi$ is defined as follows.
\begin{align*}
X\models p  \quad\Leftrightarrow\quad& \forall s\in X: s(p)=1,\\
X\models \neg p \quad\Leftrightarrow\quad& \forall s\in X: s(p)=0.\\
X\models (\varphi\land\psi) \quad\Leftrightarrow\quad& X\models\varphi \text{ and } X\models\psi.\\
X\models (\varphi\lor\psi) \quad\Leftrightarrow\quad& Y\models\varphi \text{ and } 
Z\models\psi,\text{ for some $Y,Z$ such that $Y\cup Z= X$}.
\end{align*}
\end{definition}

The lax team semantics is considered the standard semantics for team-based logics.
In this paper, we also consider a variant of team semantics called the \emph{strict team semantics}.
In strict team semantics, the above clause for disjunction is redefined as follows:
\[
X\models_s (\varphi\lor\psi) \,\Leftrightarrow\, Y\models\varphi \text{ and } 
Z\models\psi,\text{ for some $Y,Z$ such that  $Y\cap Z= \emptyset$ and $Y\cup Z= X$}.
\]
When $\LL$ denotes a team-based propositional logic, we let $\LL_s$ denote the variant of the logic with strict semantics. Moreover, in order to improve readability, for strict semantics we use $\models_s$ instead of $\models$. As a result lax semantics is used unless otherwise specified.
%
%
%
The next proposition shows that the team semantics and the ordinary semantics for propositional logic defined via assignments (denoted by $\modelsPL$) coincide.
\begin{proposition}[\cite{vaananen07}]\label{whatever}
\label{PLflat}
Let $\varphi$ be a formula of propositional logic and let $X$ be a propositional team. Then 
\(
X\models \varphi \;\text{ iff }\; \forall s\in X: s\modelsPL \varphi.
\)
\end{proposition}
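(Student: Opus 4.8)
The plan is to prove the equivalence by a routine structural induction on $\varphi$, using that every connective of $\PL$ is treated ``pointwise'' by the team-semantic clauses. In the base cases $\varphi = p$ and $\varphi = \neg p$ there is nothing to do beyond unwinding the definitions: $X\models p$ holds by definition exactly when $s(p)=1$ for every $s\in X$, which is precisely ``$s\modelsPL p$ for all $s\in X$''; the case of a negated atom is symmetric. For a conjunction $\varphi = \psi\wedge\theta$, the clause for $\wedge$ gives $X\models\psi\wedge\theta$ iff $X\models\psi$ and $X\models\theta$; the induction hypothesis rewrites this as ``$s\modelsPL\psi$ for all $s\in X$'' and ``$s\modelsPL\theta$ for all $s\in X$'', which together are equivalent to ``$s\modelsPL\psi\wedge\theta$ for all $s\in X$''.

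The disjunction $\varphi = \psi\vee\theta$ is the only step requiring a small idea, namely choosing a canonical cover of the team. For the left-to-right direction, fix a cover $X = Y\cup Z$ with $Y\models\psi$ and $Z\models\theta$; by the induction hypothesis every $s\in Y$ satisfies $\psi$ and every $s\in Z$ satisfies $\theta$ under $\modelsPL$, and since each $s\in X$ lies in $Y$ or in $Z$ we conclude $s\modelsPL\psi\vee\theta$ for all $s\in X$. Conversely, assuming $s\modelsPL\psi\vee\theta$ for every $s\in X$, put $Y := \{\,s\in X : s\modelsPL\psi\,\}$ and $Z := \{\,s\in X : s\modelsPL\theta\,\}$; then $Y\cup Z = X$, and by the induction hypothesis $Y\models\psi$ and $Z\models\theta$, so $X\models\psi\vee\theta$ by the semantic clause for $\vee$.

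I do not expect a genuine obstacle here: the argument is entirely mechanical, the content being that none of the connectives of $\PL$ introduces any interaction between distinct assignments of the team. The only two points deserving a sentence of care are the empty team, where both sides of the claimed equivalence hold vacuously and which is subsumed by the induction (e.g.\ in the disjunction step one may take $Y = Z = \emptyset$), and the observation that exactly the same proof goes through for the strict semantics: it suffices to replace the canonical cover in the converse direction of the disjunction case by the disjoint split $Y := \{\,s\in X : s\modelsPL\psi\,\}$ and $Z := \{\,s\in X : s\modelsPL\theta \text{ and } s\not\modelsPL\psi\,\}$.
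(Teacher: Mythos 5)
Your proof is correct: the paper gives no proof of this proposition (it is simply cited to V\"a\"an\"anen), and your structural induction, with the canonical cover $Y=\{s\in X: s\modelsPL\psi\}$, $Z=\{s\in X: s\modelsPL\theta\}$ in the disjunction case, is exactly the standard flatness argument underlying the cited result. Your closing remark that the same induction with a disjoint split handles strict semantics also matches the paper's subsequent observation that the two semantics coincide on $\PL$.
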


The  syntax of \emph{propositional inclusion logic} $\PLinc(\Phi)$ is obtained by extending the syntax of $\PL(\Phi)$ by the grammar rule
\[
\varphi \ddfn   \vec p \subseteq \vec q,
\]
where $\vec p$ and $\vec q$ are finite tuples of proposition variables with the same length. 
The semantics for propositional inclusion atoms is defined as follows:
$$X\models {\vec p}\subseteq {\vec q}\text{ iff }\forall s\in X\, \exists t\in X :s(\vec p)=t(\vec q).$$

\begin{remark}
\emph{Extended propositional inclusion logic} is the variant of $\PLinc$ in which inclusion atoms of the form $\vec \varphi \subseteq \vec \psi$, where $\vec \varphi$ and $\vec \psi$ are tuples of $\PL$-formulae, are allowed. It is easy to see that this extension does not increase complexity of the logic and on that account, in this paper, we only consider the non-extended variant.
\end{remark}

It is easy to check that $\PLInc$  is not a downward closed logic\footnote{A logic \LL is downward closed if the implication $X\models \varphi$ and $Y\subseteq X$ $\Rightarrow$ $Y\models \varphi$ holds for every formula $\varphi\in \LL$ and teams $X$ and $Y$.}. However, analogously to FO-inclusion-logic \cite{Galliani12}, the same holds for $\PLInc$ w.r.t. unions: 

\begin{proposition}[Closure under unions]\label{closureunions}
Let $\varphi \in \PLInc$ and let $ X_i$, for $i\in I$, be teams. Suppose that $X_i\models \varphi$ for each $i\in I$. Then $\bigcup_{i\in I} X_i \models \varphi$. 
\end{proposition}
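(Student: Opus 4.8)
The plan is to prove the statement by induction on the structure of the formula $\varphi \in \PLInc$. The induction hypothesis will be exactly the closure-under-unions property: for every subformula $\psi$ of $\varphi$ and every family $\{Y_j\}_{j\in J}$ of teams with $Y_j \models \psi$ for all $j$, we have $\bigcup_{j\in J} Y_j \models \psi$. I will write $X := \bigcup_{i\in I} X_i$ throughout.

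For the base cases I would argue directly from the semantic clauses. For a literal $p$ (or $\neg p$): if every $s \in X_i$ has $s(p)=1$, then since every $s \in X$ lies in some $X_i$, we get $s(p)=1$ for all $s \in X$, so $X \models p$; the case $\neg p$ is symmetric. For an inclusion atom $\vec p \subseteq \vec q$: given $s \in X$, pick $i$ with $s \in X_i$; since $X_i \models \vec p \subseteq \vec q$ there is $t \in X_i \subseteq X$ with $s(\vec p) = t(\vec q)$, which is what is required. For the conjunction case $\psi \wedge \theta$: from $X_i \models \psi$ and $X_i \models \theta$ for all $i$, the induction hypothesis applied separately to $\psi$ and to $\theta$ gives $X \models \psi$ and $X \models \theta$, hence $X \models \psi \wedge \theta$.

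The disjunction case $\psi \vee \theta$ is the one requiring actual care, and is the main point of the argument. For each $i \in I$, since $X_i \models \psi \vee \theta$ (in lax semantics) there exist teams $Y_i, Z_i$ with $Y_i \cup Z_i = X_i$, $Y_i \models \psi$, and $Z_i \models \theta$. Set $Y := \bigcup_{i\in I} Y_i$ and $Z := \bigcup_{i\in I} Z_i$. Then $Y \cup Z = \bigcup_{i\in I}(Y_i \cup Z_i) = \bigcup_{i\in I} X_i = X$, so the split witnesses are a legitimate lax decomposition of $X$. By the induction hypothesis applied to the family $\{Y_i\}_{i\in I}$ for $\psi$, we get $Y \models \psi$; likewise $Z \models \theta$. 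Hence $X \models \psi \vee \theta$. I expect this to be essentially the only nontrivial step, and the key insight is that lax disjunction only demands a \emph{cover} of the team, not a partition — so the pointwise covers glue together into a cover of the union with no disjointness obligation to verify. (This is exactly why the analogous statement fails for strict semantics, and it is worth a one-line remark that the proof uses laxness here.) Assembling the cases completes the induction and hence the proof.
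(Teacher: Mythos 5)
Your proof is correct, and it is essentially the standard argument: the paper itself gives no proof of this proposition, merely noting that it is proven analogously to the union-closure of first-order inclusion logic (Galliani), and that standard proof is exactly your induction, with the only nontrivial step being the lax disjunction case where the pointwise covers are glued into a cover of the union. Your remark that laxness (cover rather than partition) is what makes this step go through, matching the paper's Example~\ref{ex:spincuc} showing failure under strict semantics, is accurate.
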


It is easy to see that, by Proposition \ref{whatever}, for propositional logic the strict and the lax semantics coincide; meaning that $X\models \varphi$ iff $X\models_s \varphi$ for all $X$ and $\varphi$. However this does not hold for propositional inclusion logic, for the following example shows that $\PLinc_s$ is not union closed. Moreover, we will show that the two different semantics lead to different complexities for the related model checking problems.
\begin{figure}[ht]
\begin{center}
\begin{minipage}[c]{0.38\textwidth}
\centering
\begin{tabular}{cC{3mm}C{3mm}C{3mm}}\toprule
&$p$& $q$ & $r$ \\\midrule
$s_1$&$1$ & $0 $ & $0$ \\
$s_2$&$1$ & $1 $ & $1$ \\
$s_3$&$0$ & $1 $ & $0$\\\bottomrule
\end{tabular}
\end{minipage}
\begin{minipage}[c]{0.38\textwidth}
\centering
\begin{tikzpicture}[dot/.style={circle,draw=black,fill=black,inner sep=0mm,minimum size=2mm}]
	\node at (-1,0) {$\mK:$};
	\node[dot,label={90:$w_1$}] (w1) at (0,0) {};
	\node[dot,label={90:$w_2$}] (w2) at (1,0) {};
	\node[dot,label={90:$w_3$}] (w3) at (2,0) {};

	\node[dot,label={-90:$s_1$}] (s1) at (0,-1.3) {};
	\node[dot,label={-90:$s_2$}] (s2) at (1,-1.3) {};
	\node[dot,label={-90:$s_3$}] (s3) at (2,-1.3) {};

\path[draw,black,-stealth'] 
(w1) edge node {} (s1)
(w2) edge node {} (s2)
(w3) edge node {} (s3)
(w1) edge node {} (s2)
(w3) edge node {} (s2);
\end{tikzpicture}
\end{minipage}
\end{center}

\caption{Assignments for teams in Example~\ref{ex:spincuc} and the Kripke model for Example \ref{ex:smincsing}.\label{pic0}}
\end{figure}
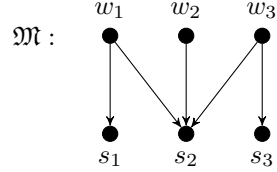

\begin{example}\label{ex:spincuc}
Let $s_1$, $s_2$, and $s_3$ be as in Table \ref{pic0} and define $\varphi:=\big(p\land (p\subseteq r)\big) \lor \big(q\land (q \subseteq r)\big)$. Note that $\{s_1,s_2\}\models_s \varphi$ and $\{s_2,s_3\}\models_s \varphi$, but $\{s_1,s_2,s_3\}\not\models_s\varphi$.
\end{example}

\section{Complexity of Propositional Inclusion Logic}

\begin{table*}\centering
\begin{tabular}{ccccccc}\toprule
	& \multicolumn{2}{c}{Satisfiability} & \multicolumn{2}{c}{Validity} & \multicolumn{2}{c}{Model checking}\\\cmidrule{2-3}\cmidrule{4-5}\cmidrule{6-7}
 & strict & lax  & strict & lax   & strict & lax \\\midrule
$\PL$ & \multicolumn{2}{c}{\Vhrulefill\; $\NP$ \cite{coo71a,levin73} \Vhrulefill}  & \multicolumn{2}{c}{\Vhrulefill\; $\co\NP$ \cite{coo71a,levin73} \Vhrulefill} & \multicolumn{2}{c}{\Vhrulefill\; $\NC{1}$ \cite{bus87} \Vhrulefill}\\
$\PLInc$ & $\EXPTIME$ \cite{hkmv15personal}
		    & $\EXPTIME$ \cite{hkmv15} & $\co\NP$ [Th.\ \ref{val-plinc-conp}]
		    & $\co\NP$ \cite{hkvv15} & $\NP$ [Th.\ \ref{mc-plinc-npc}]
		    & $\P$ [Th.\ \ref{mc-plinc-pc}]\\
\bottomrule\\
\end{tabular}
\caption{Complexity of the satisfiability, validity and model checking problems for  propositional logics under both systems of semantics. The shown complexity classes refer to completeness results.}
\label{pl-results}
\end{table*}

We now define the model checking, satisfiability, and validity problems in the context of team semantics.
Let $\LL$ be a  propositional logic with team semantics.  A formula $\varphi\in\LL$ is \emph{satisfiable}, if there exists a non-empty team $X$ such that $X\models \varphi$. A formula $\varphi\in\LL$ is \emph{valid}, if $X\models \varphi$ holds for all teams $X$ such that the proposition symbols in $\varphi$ are in the domain of $X$. The satisfiability problem $\SAT(\LL)$ and the validity problem $\VAL(\LL)$ are defined in the obvious way: Given a formula $\varphi\in\LL$, decide whether the formula is satisfiable (valid, respectively). For the model checking problem $\MC(\LL)$ we consider combined complexity: Given a formula $\varphi\in\LL$ and a team $X$, decide whether $X\models \varphi$. See Table \ref{pl-results} for known complexity results for $\PL$ and $\PLinc$, together with partial results of this paper.

It was shown in \cite{hkvv15} that the validity problem of $\PLInc$ is $\co\NP$-complete. 
Here we establish that the corresponding problem for $\PLInc_s$ is also $\co\NP$-complete. 
Our proof is similar to the one in \cite{hkvv15}. 
However the proof of  \cite{hkvv15} uses the fact that $\PLInc$ is union closed, while the same is not true for  $\PLInc_s$ (cf. Example \ref{ex:spincuc}).
\iflong\else
For the proofs of the following lemma and theorem see Appendix \ref{A:lemma5}.
\fi



\begin{lemma}\label{singletonslemma}
Let $X$ be a propositional team and $\varphi\in \PLinc_s$. If $\{s\}\models_s \varphi$ for every $s \in X$ then $X  \models_s \varphi$.
\end{lemma}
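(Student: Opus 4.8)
The plan is to prove the statement by structural induction on $\varphi$. I would phrase the induction claim exactly as the lemma, \emph{including} the case $X=\emptyset$; note that for $X=\emptyset$ the hypothesis is vacuous and the conclusion is the empty team property $\emptyset\models_s\varphi$, so the induction will establish this as a by-product (and will need it in the disjunction step). Throughout, fix $\varphi$ and a team $X$ with $\{s\}\models_s\varphi$ for every $s\in X$.

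For the base cases this is immediate. If $\varphi$ is a literal $p$ (resp.\ $\neg p$), then $\{s\}\models_s\varphi$ says $s(p)=1$ (resp.\ $s(p)=0$), and since this holds for every $s\in X$ we get $X\models_s\varphi$. If $\varphi$ is an inclusion atom $\vec p\subseteq\vec q$, then $\{s\}\models_s\vec p\subseteq\vec q$ forces $s(\vec p)=s(\vec q)$ for each $s\in X$, so every $s\in X$ is its own witness and $X\models_s\vec p\subseteq\vec q$. For a conjunction $\varphi=\psi_1\wedge\psi_2$, the hypothesis gives $\{s\}\models_s\psi_i$ for every $s\in X$ and $i\in\{1,2\}$, so the induction hypothesis yields $X\models_s\psi_1$ and $X\models_s\psi_2$, hence $X\models_s\psi_1\wedge\psi_2$.

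The only case needing an idea is disjunction $\varphi=\psi_1\vee\psi_2$, and the point is that a singleton team admits only trivial strict splits. If $\{s\}\models_s\psi_1\vee\psi_2$, then there are $Y,Z$ with $Y\cap Z=\emptyset$, $Y\cup Z=\{s\}$, $Y\models_s\psi_1$ and $Z\models_s\psi_2$; since $\{s\}$ has a single element, either $(Y,Z)=(\{s\},\emptyset)$ or $(Y,Z)=(\emptyset,\{s\})$, so $\{s\}\models_s\psi_1$ or $\{s\}\models_s\psi_2$. Now put $X_1:=\{\,s\in X : \{s\}\models_s\psi_1\,\}$ and $X_2:=X\setminus X_1$; by the previous observation every $s\in X_2$ satisfies $\{s\}\models_s\psi_2$. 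The teams $X_1$ and $X_2$ are disjoint and cover $X$, and the induction hypothesis applies to each of them (one of the $X_i$ may be empty, which is exactly where the induction hypothesis supplies $\emptyset\models_s\psi_i$), giving $X_1\models_s\psi_1$ and $X_2\models_s\psi_2$, whence $X\models_s\psi_1\vee\psi_2$.

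So there is no genuine obstacle; the only subtlety is that the disjointness required by strict semantics is automatic for singletons, and this is precisely what makes the induced partition $X=X_1\cup X_2$ disjoint. This is the step that replaces the appeal to union closure in the argument of \cite{hkvv15}, which is unavailable here since $\PLinc_s$ is not union closed (Example~\ref{ex:spincuc}).
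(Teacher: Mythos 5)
Your proof is correct and follows essentially the same route as the paper's: structural induction where literals, inclusion atoms, and conjunction are immediate, and the disjunction case exploits that a singleton admits only trivial strict splits, yielding a disjoint partition of $X$ to which the induction hypothesis applies. Your explicit treatment of the empty part of the partition (via the empty team property, delivered by the induction hypothesis applied to $\emptyset$) is a detail the paper leaves implicit, but the argument is the same.
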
\iflong
\begin{proof}
The proof is by a simple induction on the structure of the formula. 
The cases for atomic formulae and conjunction are trivial. 
The case for disjunction is easy: 
Assume that $\{s\}\models_s \varphi\lor \psi$ for every $s \in X$. 
Consequently for every $s \in X$ either $\{s\}\models_s \varphi$ or $\{s\}\models_s \psi$. 
As a result there exists $Y$ and $Z$ such that $Y\cup Z = X$, $Y\cap Z = \emptyset$, $\forall s\in Y: \{s\}\models_s \varphi$, and $\forall s\in Z: \{s\}\models_s \psi$. 
By the induction hypothesis  $Y\models_s \varphi$ and  $Z\models_s \psi$. 
Consequently, $X\models_s \varphi\lor \psi$.
\end{proof}
\fi

\begin{theorem}\label{val-plinc-conp}
The validity problem for $\PLInc_s$ is $\co\NP$-complete w.r.t. $\leqlogm$.
\end{theorem}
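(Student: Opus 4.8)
\emph{Proof plan.} The strategy is to show that validity of a $\PLInc_s$ formula can be decided by inspecting only \emph{singleton} teams. First I would establish the characterization: $\varphi\in\PLInc_s$ is valid if and only if $\{s\}\models_s\varphi$ for every assignment $s$ whose domain is the set $\mathrm{Prop}(\varphi)$ of proposition symbols occurring in $\varphi$. The left-to-right implication is immediate, since such singletons are among the teams quantified over in the definition of validity. For the converse, observe that on a singleton team $\{s\}$ a disjunction can only be split as $(\{s\},\emptyset)$ or $(\emptyset,\{s\})$, that the empty team satisfies every formula, and that an inclusion atom $\vec p\subseteq\vec q$ holds in $\{s\}$ exactly when $s(\vec p)=s(\vec q)$; hence a routine induction yields $\{s\}\models_s\varphi\iff s\modelsPL\varphi^{\ast}$, where $\varphi^{\ast}$ is the $\PL$-formula obtained from $\varphi$ by replacing each atom $\vec p\subseteq\vec q$ with $\bigwedge_i(p_i\leftrightarrow q_i)$. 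In particular, satisfaction by a singleton is local, i.e.\ depends only on the restriction of the assignment to $\mathrm{Prop}(\varphi)$. Now, given an arbitrary team $X$ with $\mathrm{Prop}(\varphi)\subseteq\operatorname{dom}(X)$, the hypothesis together with locality gives $\{s\}\models_s\varphi$ for every $s\in X$, and then $X\models_s\varphi$ follows from Lemma~\ref{singletonslemma}. This is precisely the step where Lemma~\ref{singletonslemma} substitutes for closure under unions (Proposition~\ref{closureunions}), which played this role in the $\PLInc$ proof of \cite{hkvv15}; and this characterization, in particular the correct treatment of teams whose domain strictly extends $\mathrm{Prop}(\varphi)$, is the only genuinely delicate point of the argument.

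Next, membership in $\co\NP$: it suffices to place non-validity in $\NP$. By the characterization, $\varphi$ fails to be valid iff there is an assignment $s$ over $\mathrm{Prop}(\varphi)$ with $s\not\modelsPL\varphi^{\ast}$. Hence a nondeterministic machine guesses such an $s$, computes $\varphi^{\ast}$ by a linear-time rewriting, and checks $s\not\modelsPL\varphi^{\ast}$ in polynomial time by ordinary propositional model checking; this decides non-validity, so $\VAL(\PLInc_s)\in\co\NP$.

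For $\co\NP$-hardness I would reduce from $\VAL(\PL)$, the tautology problem, which is $\co\NP$-complete w.r.t.\ $\leqlogm$ \cite{coo71a,levin73}. The syntax of $\PL$ is contained in that of $\PLInc_s$, and by Proposition~\ref{whatever} the strict and lax semantics coincide on $\PL$-formulae; moreover, again by Proposition~\ref{whatever}, a $\PL$-formula $\psi$ is valid as a $\PLInc_s$-formula iff $s\modelsPL\psi$ for all assignments $s$, i.e.\ iff $\psi$ is a tautology. Therefore the identity map is a logspace many-one reduction from $\VAL(\PL)$ to $\VAL(\PLInc_s)$, which establishes $\co\NP$-hardness and completes the proof.
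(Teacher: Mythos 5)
Your proposal is correct and follows essentially the same route as the paper: reduce validity to satisfaction by all singleton teams via Lemma~\ref{singletonslemma}, translate inclusion atoms over singletons into $\PL$-formulae (your $\bigwedge_i(p_i\leftrightarrow q_i)$ matches the paper's $\bigwedge_i(p_i\wedge q_i)\vee(\neg p_i\wedge\neg q_i)$), obtain the $\co\NP$ upper bound from propositional validity, and get hardness from $\VAL(\PL)$ via Proposition~\ref{whatever}. Your extra remarks on locality and on guessing a falsifying assignment are just a more explicit spelling-out of the same argument.
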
\iflong
\begin{proof}
The $\co\NP$-hardness follows via Proposition \ref{whatever} from the fact that the validity problem of $\PL$ is  $\co\NP$-hard.  
Accordingly, it suffices to show $\VAL(\PLInc_s)\in \co\NP$. 
It is easy to check that, by Lemma \ref{singletonslemma}, a formula $\varphi\in \PLInc_s$ is valid iff it is satisfied by all singleton teams $\{ s\}$. 
Note also that, over  a singleton team $\{ s\}$, an inclusion atom
$ (p_1,\dots ,p_n) \subseteq (q_1,\dots ,q_n)$ is equivalent to the $\PL$-formula 
\[  \bigwedge_{1\le i\le n} (p_i\wedge q_i)\vee(\neg p_i\wedge\neg q_i). \]
Denote by $\varphi^*$ the $\PL$-formula obtained by replacing all inclusion atoms in $\varphi$ by their $\PL$-translations. 
By the above, $\varphi$ is valid iff $\varphi^*$ is valid. 
Since $\VAL(\PL)$ is in $\co\NP$ the claim follows. 
\end{proof}
\fi

\subsection{Model checking in lax semantics is P-complete}
In this section we construct a reduction from the monotone circuit value problem to the model checking problem of $\PLInc$.
For a deep introduction to circuits see \cite{vol99} by Vollmer.

\begin{definition}
A \emph{monotone Boolean circuit} with $n$ input gates and one output gate is a 3-tuple $C=(V,E,\alpha)$, where $(V,E)$ is a finite, simple, directed, acyclic graph, and $\alpha\colon V \rightarrow \{\lor,\land,x_1,\dots,x_n\}$ is a function such that the following conditions hold:
\begin{enumerate} 
\item{Every $v\in V$ has in-degree $0$ or $2$.}
\item {There exists exactly one $w\in V$ with out-degree $0$. We call this node $w$ the \emph{output gate} of $C$ and denote it by $g_\mathrm{out}$.}
\item{If $v \in V$ is a node with in-degree $0$, then $\alpha(v) \in \{x_1,\dots,x_n\}$}.
\item{If $v \in V$ has in-degree $2$, then $\alpha(v)\in\{\lor,\land\}$.} 
\item{For each $1 \leq i \leq n$, there exists exactly one $v \in V$ with $\alpha(v)=x_i$.} 
\end{enumerate} 

Let $C=(V,E,\alpha)$ be a monotone Boolean circuit with $n$ input gates and one output gate. 
Any sequence $b_1,\dots,b_n\in\{0,1\}$ of bits of length $n$ is called an \emph{input} to the circuit $C$. 
A function $\beta\colon V\rightarrow \{0,1\}$ defined such that
\[
\beta(v) \dfn
\begin{cases}
b_i &\text{if $\alpha(v)=x_i$}\\
\min\big(\beta(v_1),\beta(v_2)\big) & \text{if $\alpha(v)=\land$, where }v_1\neq v_2\text{ and }(v_1,v),(v_2,v)\in E, \\
\max\big(\beta(v_1),\beta(v_2)\big) & \text{if $\alpha(v)=\lor$, where }v_1\neq v_2\text{ and }(v_1,v),(v_2,v)\in E.
\end{cases}
\]
is called the \emph{valuation of the circuit $C$ under the input $b_1,\dots,b_n$}. The \emph{output of the circuit} $C$ is then defined to be $\beta(g_\mathrm{out})$.
\end{definition}

The \emph{monotone circuit value problem} ($\MCVP$) is the following decision problem: 
Given a monotone circuit $C$ and an input $b_1,\dots,b_n\in\{0,1\}$, is the output of the circuit $1$?

\begin{proposition}[\cite{mcvp77}]
$\MCVP$ is $\P$-complete w.r.t.\ $\leqlogm$ reductions.	
\end{proposition}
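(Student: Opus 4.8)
The plan is to prove the two halves of $\P$-completeness separately. Membership in $\P$ is immediate: given a monotone circuit $C=(V,E,\alpha)$ and an input $b_1,\dots,b_n$, one processes the gates in a topological order of the acyclic graph $(V,E)$, assigning each input gate its bit and each internal gate the minimum or maximum of the values already computed for its two predecessors; this evaluates $\beta$, and in particular $\beta(g_{\mathrm{out}})$, in linear time, so $\MCVP\in\P$.

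For $\P$-hardness under $\leqlogm$ I would reduce from the general Circuit Value Problem $\mathrm{CVP}$ — circuits over $\{\land,\lor,\neg\}$ — which is well known to be $\P$-complete under logspace reductions (Ladner). The reduction eliminates negations by the standard \emph{double-rail} construction: first push all $\neg$ gates down to the input literals using De Morgan's laws, then replace each remaining gate $v$ by two gates $v^{+}$ and $v^{-}$ intended to carry the value of $v$ and its complement, with $(\land)^{+}=\land$ and $(\land)^{-}=\lor$ over the complemented inputs, symmetrically for $\lor$, and with a negated input literal $\neg x_i$ turned into a fresh input gate; the output is $g_{\mathrm{out}}^{+}$. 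One then pads the result, if necessary, so that it satisfies the syntactic requirements in the definition above (in-degree exactly $0$ or $2$, a unique gate labelled by each $x_i$, a single sink). Correctness is a routine induction on gate depth showing that $v^{+}$ evaluates to the value of $v$ and $v^{-}$ to its complement under every input, and the whole transformation is purely local — the gate set, the edge relation and the labelling of the target circuit are read off from the source by simple index bookkeeping — so it is computable in logarithmic space.

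Alternatively one can argue $\P$-hardness directly from the definition of $\P$: for a language decided by a deterministic Turing machine $M$ in time $n^{k}$, the Cook--Levin computation tableau on input $x$ yields a circuit of size polynomial in $|x|$ whose output gate signals acceptance and whose wiring is given by a fixed local rule indexed by tableau coordinates, hence is logspace-constructible; applying the double-rail trick then makes it monotone. In either route the one point that actually needs checking — and the main obstacle — is that the reduction runs in logarithmic space rather than merely polynomial time: this reduces to the observation that enumerating the gates, wires and labels of the monotone target circuit requires only a constant number of counters ranging over polynomially bounded indices together with elementary arithmetic and pattern matching against the (fixed) source description or transition function of $M$.
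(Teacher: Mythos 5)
The paper itself does not prove this proposition — it imports it by citation to Goldschlager \cite{mcvp77} — and your argument is essentially the classical one behind that citation: topological evaluation of the circuit for membership in $\P$, and for hardness a $\leqlogm$-reduction from the circuit value problem (itself $\P$-complete via the Cook--Levin/Ladner tableau construction) that eliminates negations by the double-rail trick, with the logspace bound checked by the usual counter-and-local-rule bookkeeping. Two small points are worth tidying but are routine: the preliminary ``push negations to the input literals'' step is redundant for DAG-shaped circuits (it is exactly what the double-rail construction implements, NOT gates being handled by swapping the $v^{+}$ and $v^{-}$ rails), and the padding needed to meet this paper's particular syntactic definition (fan-in exactly $0$ or $2$, a unique sink, exactly one gate per variable) should be spelled out, e.g.\ by introducing fresh input gates carrying the complemented bits and absorbing dangling rails into the output via a disjunction with an AND against a fresh input set to $0$, which preserves the computed value; with these remarks your proposal is correct and matches the standard approach.
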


\begin{lemma}\label{Phard}
$\MC(\PLinc)$ under lax semantics is $\P$-hard w.r.t.\ $\leqlogm$.
\end{lemma}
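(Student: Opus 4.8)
The plan is to give a logspace reduction from $\MCVP$ to $\MC(\PLinc)$ under lax semantics. Given a monotone circuit $C=(V,E,\alpha)$ with input gates carrying bits $b_1,\dots,b_n$, I would build a formula $\varphi_C \in \PLinc$ and a team $X$ such that $X \models \varphi_C$ if and only if the output of $C$ is $1$. The key idea is to encode the gates of the circuit as assignments in the team: introduce one proposition symbol $p_v$ for each gate $v \in V$ (or, more economically, enough proposition symbols to give each gate a distinct bit-pattern, so that the construction stays logspace), and design the team $X$ so that it contains one assignment $s_v$ per gate, with the assignments encoding the wiring of $C$. The formula $\varphi_C$ will then use inclusion atoms to propagate "truth" through the circuit: the crucial observation is that an inclusion atom $\vec p \subseteq \vec q$ over a team behaves like a reachability/closure condition, which is exactly what is needed to simulate the monotone evaluation $\beta$, whose fixed-point character matches $\P$.

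More concretely, I would exploit the disjunction in lax semantics, which splits the team into two (possibly overlapping) covering parts, to "select" the set of gates that evaluate to $1$. A natural template is $\varphi_C := \bigl(\,\mathrm{init} \wedge \mathrm{closure} \wedge \mathrm{output}\,\bigr)$ evaluated against a cleverly chosen subteam, where: $\mathrm{init}$ forces the input gates assigned $1$ to be present; $\mathrm{closure}$ is a conjunction of inclusion atoms, one per internal gate, asserting that whenever the two predecessors of an $\land$-gate are "selected" then so is the gate, and whenever one predecessor of an $\lor$-gate is selected then so is the gate; and $\mathrm{output}$ asserts that the assignment coding $g_{\mathrm{out}}$ lies in the selected subteam. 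Because inclusion atoms are not downward closed but are union closed (Proposition~\ref{closureunions}), the minimal subteam satisfying the closure constraints corresponds precisely to the set $\{v : \beta(v)=1\}$, so $X\models\varphi_C$ iff $\beta(g_{\mathrm{out}})=1$. The disjunctive split used to isolate this subteam must be arranged so that the "other side" of each disjunction is trivially satisfiable (e.g. by a dummy assignment that satisfies every relevant atom), ensuring the reduction is sound in both directions.

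The step I expect to be the main obstacle is getting the inclusion atoms to correctly simulate a \emph{conjunction} gate: an inclusion atom $\vec p\subseteq\vec q$ only asserts "for every assignment there is a matching one", which is inherently a disjunctive/existential condition and naturally models $\lor$-gates, whereas an $\land$-gate needs "if \emph{both} predecessors are selected then...". I would handle this by encoding each gate's presence via a pair of predecessor-coordinates so that the pattern coding $v$ can only be matched when both predecessor patterns are already realized in the selected subteam — essentially pushing the conjunction into the structure of the tuples $\vec p,\vec q$ rather than into the logical connectives. Verifying that this encoding is faithful, that the resulting team and formula have size polynomial (indeed logspace-computable) in $|C|$, and that the lax disjunctive split cannot "cheat" by selecting gates not forced by $\beta$, will be the technical heart of the argument; the $\P$-hardness then follows from the $\P$-completeness of $\MCVP$ (Proposition~\cite{mcvp77}) together with transitivity of $\leqlogm$.
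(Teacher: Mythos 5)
Your reduction propagates truth \emph{upward} (from predecessors to gate), and this is where the proposal breaks, in two distinct ways. First, the AND-rule ``if both predecessors are selected then the gate is selected'' has an antecedent that speaks about the joint presence of two different assignments, while an inclusion atom is evaluated one assignment at a time (for every $s$ there is a matching $t$). Worse, the intended meaning of your closure conjunct is provably not expressible: the team property ``if $s_i\in Y$ and $s_j\in Y$ then $s_k\in Y$'' is violated by the union of the two singleton teams $\{s_i\}$ and $\{s_j\}$, each of which satisfies it vacuously, so by closure under unions (Proposition~\ref{closureunions}) no $\PLinc$ formula under lax semantics can define it; your ``pair-of-predecessor-coordinates'' idea does not escape this, since whether a tuple pattern is matched depends only on the values of single assignments in the team, never on two assignments being present simultaneously. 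Second, even granting some closure formula, soundness fails: upward-closure constraints are monotone in the wrong direction --- they can only force \emph{more} gates to be selected, never forbid unjustified selections --- and lax disjunction provides no minimality. With one assignment per gate (including false input gates) in the team, the split that puts the whole team on the right-hand disjunct satisfies init, closure and output no matter what $\beta$ is, so $X\models\varphi_C$ would hold even when the circuit outputs $0$. Your phrase ``the minimal subteam corresponds to $\{v:\beta(v)=1\}$'' has no formal counterpart in team semantics, and you correctly flag the no-cheating issue but give no mechanism that resolves it.

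The paper's proof avoids both problems by reversing the direction of propagation, encoding a \emph{justification} of the output rather than the forward evaluation. There $\psi_{\land}$ is the conjunction of atoms $p_i\subseteq p_j$ over all edges $(v_j,v_i)$ into an AND gate: ``if the gate is selected then each predecessor is selected'', a single-premise implication, which is exactly what an inclusion atom can say once an auxiliary assignment $s_\bot$ (the only one with $p_\bot=1$, hence forced into the right disjunct by $\neg p_\bot$) is present to absorb the unselected assignments. For OR gates the disjunctive \emph{conclusion} ``then $s_i$ or $s_j$ is selected'' is handled by a special proposition $p_{k=i\lor j}$ true exactly at $s_i$ and $s_j$, via $p_k\subseteq p_{k=i\lor j}$. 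Crucially, input gates with bit $0$ are simply omitted from the team, so any selected gate is genuinely justified all the way down to true inputs, and $p_\top\subseteq p_0$ (output selected) then certifies that the output is $1$; conversely selecting precisely the $\beta$-true gates plus $s_\bot$ satisfies everything. If you flip your closure constraints to this downward form and drop the false input gates from the team, your construction essentially becomes the paper's and the logspace analysis you sketch goes through.
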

\begin{proof}
We will establish a $\LOGSPACE$-reduction from $\MCVP$ to the model checking problem of $\PLinc$ under lax semantics. 
Since $\MCVP$ is $\P$-complete, the claim follows. 
More precisely, we will show how to construct, for each monotone Boolean circuit $C$ with $n$ input gates and for each input $\vec b$ for $C$, a team $X_{C,\vec b}$ and a $\PLinc$-formula $\varphi_C$ such that
$
X_{C,\vec b} \models \varphi_C  \text{ iff  the output of the circuit $C$ with the input $\vec b$ is $1$}. 
$

We use teams to encode valuations of the circuit. 
For each gate $v_i$ of a given circuit, we identify an assignment $s_i$. 
The crude idea is that if $s_i$ is in the team under consideration, the value of the gate $v_i$ with respect to the given input is $1$. 
The formula $\varphi_C$ is used to quantify a truth value for each Boolean gate of the circuit, and then for checking that the truth values of the gates propagate correctly.
We next define the construction formally and then discuss the background intuition in more detail.

Let $C=(V,E,\alpha)$ be a monotone Boolean circuit with $n$ input gates and one output gate and let $\vec{b}=(b_1\dots b_n) \in \{0,1\}^n$ be an input to the circuit $C$. 
We define that $V=\{v_0,\dots,v_m\}$ and that $v_0$ is the output gate of $C$.
Define
\begin{align*}
\tau_C \dfn \{p_0,\dots,p_m,p_\top,p_\bot\}\,\cup 
\{p_{k = i \lor j} \mid i < j, \alpha(v_k)=\lor, \text{ and }  (v_i,v_k),(v_j,v_k)\in E \}.	
\end{align*}

For each $i\leq m$, we define the assignment $s_i\colon\tau_C \to \{0,1\}$ as follows:
\[
s_i(p)\!\dfn\!
\begin{cases}
1 &\text{if $p=p_i$ or $p=p_\top$},\\
1 &\text{if $p=p_{k = i \lor j}$ or $p=p_{k = j \lor i}$ for some $j,k\leq m$},\\
0 &\text{otherwise}.
\end{cases}
\]

Furthermore, we define $s_\bot(p)=1$ iff $p=p_\bot$ or $p=p_\top$. 
We note that the assignment $s_\bot$ will be the only assignment that maps $p_{\bot}$ to 1. 
We make use of the fact that for each gate $v_i$ of $C$, it holds that $s_\bot(p_i)=0$. 
We define
\begin{align*}
X_{C,\vec b} \dfn \big\{s_i \mid \alpha(v_i)\in \{\land,\lor\}\big\}\,\cup
\big\{s_i \mid \alpha(v_i)\in \{x_i \mid b_i=1\}\big\}\cup\{s_\bot\},
\end{align*}
that is, $X_{C,\vec b}$ consists of assignments for each of the Boolean gates, assignments for those input gates that are given $1$ as an input, and of the auxiliary assignment $s_{\bot}$.

Let $X$ be any nonempty subteam of $X_{C,\vec b}$ such that $s_{\bot}\in X$. 
We have
\begin{align}
 X\models p_\top \subseteq p_0 &\text{ iff }&& s_0\in X \nonumber\\
   X\models p_i \subseteq p_j &\text{ iff }&& \text{($s_i\in X$ implies $s_j\in X$)}  \label{eq:idea}\\
   X\models p_k \subseteq p_{k=i\lor j} &\text{ iff }&& (i < j, (v_i,v_k),(v_j,v_k)\in E, \alpha(v_k)=\lor\nonumber\\
   &&&\text{ and $s_k\in X$ imply that $s_i\in X$ or $s_j\in X$)}\nonumber
\end{align}

Recall the intuition that $s_i\in X$ should hold iff the value of the gate $v_i$ is 1.
Define
\begin{align*} \label{eq1}
\psi_{\mathrm{out=1}} &\dfn p_\top \subseteq p_0,\\
\psi_{\land} &\dfn \bigwedge \{p_i \subseteq p_j \mid (v_j,v_i)\in E \text{ and } \alpha(p_i)=\land \},\\
\psi_{\lor} &\dfn \bigwedge \{p_k \subseteq p_{k=i\lor j} \mid i < j, (v_i,v_k)\in E, (v_j,v_k)\in E,\text{ and } \alpha(v_k)=\lor\},\\
\varphi_{C} &\dfn \neg p_\bot \lor (\psi_{\mathrm{out}=1} \land \psi_{\land} \land \psi_{\lor} ).
\end{align*}

It is quite straightforward to check 
\iflong (see details below)
\else
(see Appendix \ref{A:circuit} for a detailed proof)
\fi
that
$X_{C,\vec b} \models \varphi_C$ iff the output of $C$ with the input $\vec b$ is $1$.
\iflong

The idea of the reduction is the following: 
The disjunction in $\phi_C$ is used to guess a team $Y$ for the right disjunct that encodes the valuation $\beta$ of the circuit $C$. 
The right disjunct is then evaluated with respect to the team $Y$ with the intended meaning that $\beta(v_i)=1$ whenever $s_i\in Y$. 
Note that $Y$ is always as required in \eqref{eq:idea}. 
The formula $\psi_{\mathrm{out}=1}$ is used to state that $\beta(v_0)=1$, whereas the formulae $\psi_{\land}$ and $\psi_{\lor}$ are used to propagate the truth value $1$ down the circuit. 
The assignment $s_{\bot}$ and the proposition $p_\bot$ are used as an auxiliary to make sure that $Y$ is nonempty and to deal with the propagation of the value $0$ by the subformulae of the form $p_i \subseteq p_j$. 

Now observe that the team $X_{C,\vec b}$ can be easily computed by a logspace Turing machine which scans the input for $\land$-gates, $\lor$-gates, and true input gates, and then outputs the corresponding team members $s_i$ in a bitwise fashion. 
The formula $\varphi_{C}$ can be computed in logspace as well: 
\begin{enumerate}
	\item the left disjunct does not depend on the input,
	\item for $\psi_{\land}$ we only need to scan for the $\land$-gates and output the inclusion-formulae for the corresponding edges, 
	\item for $\psi_{\lor}$ we need to maintain two binary counters for $i$ and $j$, and use them for searching for those disjunction gates that satisfy $i<j$. 
\end{enumerate}
Consequently, the reduction can be computed in logspace. \fi
\end{proof}
For the proof of the above lemma it is not important that lax semantics is considered; the same proof works also for the strict semantics. However, as we will show next, we can show a stronger result for the model checking problem of $\PLinc_s$; namely that it is $\NP$-hard.
In Section \ref{sec:mcml} we will show that the model checking problem for modal inclusion logic with lax semantics is in $\P$ (Lemma \ref{Plemma}). 
Since $\PLinc$ is essentially a fragment of this logic, by combining Lemmas \ref{Phard} and \ref{Plemma}, we obtain the following theorem.
\begin{theorem}\label{mc-plinc-pc}
$\MC(\PLinc)$ under lax semantics is $\P$-complete w.r.t.\ $\leqlogm$.
\end{theorem}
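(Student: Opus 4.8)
The plan is to derive Theorem~\ref{mc-plinc-pc} by sandwiching $\MC(\PLinc)$ between the two bounds that the paper already supplies. The $\P$-hardness direction is exactly Lemma~\ref{Phard}, which reduces $\MCVP$ to $\MC(\PLinc)$ under lax semantics via a $\leqlogm$-reduction; since $\MCVP$ is $\P$-complete, this establishes that $\MC(\PLinc)$ is $\P$-hard with respect to logspace many-one reductions. Nothing further is needed for the lower bound.

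For the upper bound, the strategy is to appeal to the forthcoming Lemma~\ref{Plemma} of Section~\ref{sec:mcml}, which states that model checking for \emph{modal} inclusion logic under lax semantics lies in $\P$. The key observation is that propositional inclusion logic embeds into modal inclusion logic: a propositional team $X$ over a domain $D$ of proposition symbols can be viewed as a team of worlds in a Kripke model where each world is a point carrying exactly the propositional valuation of one assignment $s \in X$ (with no accessibility edges, since $\PLinc$ has no modal operators), and a $\PLinc$-formula is literally a modal inclusion formula that happens to use no $\Diamond$ or $\Box$. Under this translation, $X \models \varphi$ in propositional team semantics holds iff the corresponding team of worlds satisfies $\varphi$ in the modal sense, because the semantic clauses for $p$, $\neg p$, $\wedge$, $\vee$, and $\vec p \subseteq \vec q$ coincide verbatim. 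Moreover the translation is trivially computable in logspace (indeed it is essentially the identity on the formula and a relabelling of the team). Hence an instance of $\MC(\PLinc)$ reduces in logspace to an instance of $\MC$ for modal inclusion logic, which by Lemma~\ref{Plemma} is decidable in polynomial time; composing, $\MC(\PLinc) \in \P$.

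Combining the two directions, $\MC(\PLinc)$ under lax semantics is both $\P$-hard and contained in $\P$ with respect to $\leqlogm$, so it is $\P$-complete. I do not anticipate a genuine obstacle here: the only point requiring care is to make the embedding of $\PLinc$ into modal inclusion logic precise enough that one may legitimately invoke Lemma~\ref{Plemma}, i.e.\ to confirm that ``$\PLinc$ is essentially a fragment of modal inclusion logic'' in the sense needed — that the team-semantic satisfaction relation is preserved and that the syntactic and team-level translations are logspace computable. Since modal inclusion logic restricted to formulae without modalities, evaluated on edge-free Kripke models, is definitionally the same as propositional inclusion logic, this is routine; the substantive content of the theorem resides in Lemmas~\ref{Phard} and~\ref{Plemma}, both of which we are entitled to assume.
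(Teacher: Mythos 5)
Your proposal is correct and follows the paper's own route exactly: $\P$-hardness from Lemma~\ref{Phard}, and membership in $\P$ by viewing $\PLinc$ as a (modality-free) fragment of $\Minc$ evaluated on a Kripke model whose worlds carry the assignments, then invoking Lemma~\ref{Plemma}. The extra care you take in spelling out the embedding is exactly the "essentially a fragment" remark the paper relies on, so there is no substantive difference.
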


\subsection{Model checking in strict semantics is NP-complete}
In this section we reduce the set splitting problem, a well-known $\NP$-complete problem, to the model checking problem of $\PLInc_s$.

\begin{definition}
The \emph{set splitting} problem is the following decision problem:
\begin{description}
	\item[\textit{Input:}] A family $\cF$ of subsets of a finite set $S$.
	\item[\textit{Problem:}] Do there exist subsets $S_1$ and $S_2$ of $S$ such that
\begin{enumerate}
\item $S_1$ and $S_2$ are a partition of $S$ (i.e., $S_1\cap S_2=\emptyset$ and $S_1\cup S_2= S$),
\item for each $A\in \cF$, there exist $a_1,a_2\in A$ such that $a_1\in S_1$ and $a_2\in S_2$?
\end{enumerate} 

\end{description}
\end{definition}

\begin{proposition}[\cite{gajo79}]
The set splitting problem is $\NP$-complete w.r.t.\ $\leqlogm$.
\end{proposition}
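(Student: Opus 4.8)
The plan is to show both that the set splitting problem lies in $\NP$ and that it is $\NP$-hard with respect to $\leqlogm$. Membership is immediate: a solution is witnessed by the partition $(S_1,S_2)$ itself, and given such a pair one checks in polynomial time (in fact in logarithmic space) that $S_1\cap S_2=\emptyset$, that $S_1\cup S_2=S$, and that every $A\in\cF$ intersects both $S_1$ and $S_2$. So I would spend the bulk of the argument on hardness.

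For $\NP$-hardness I would reduce the satisfiability problem of propositional logic in conjunctive normal form, which is $\NP$-complete w.r.t.\ $\leqlogm$ by \cite{coo71a,levin73}, to the set splitting problem. Given a CNF formula with variables $x_1,\dots,x_n$ and clauses $C_1,\dots,C_m$, I put $S\dfn\{x_i,\overline{x_i}\mid 1\le i\le n\}\cup\{F\}$, where $F$ is a fresh \emph{false marker}, and I let $\cF$ consist of the \emph{forcing pairs} $\{x_i,\overline{x_i}\}$ for $1\le i\le n$ together with one set $\{\ell_1,\dots,\ell_k,F\}$ for each clause $C_j=(\ell_1\vee\dots\vee\ell_k)$, where each literal $\ell$ is identified with the element $x_i$ or $\overline{x_i}$ it names. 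Intuitively the side of the partition containing $F$ plays the role of \emph{false}, the forcing pairs guarantee that $x_i$ and $\overline{x_i}$ receive opposite sides (hence a consistent truth assignment), and a clause set is split exactly when, besides $F$, one of its literals lies on the opposite, \emph{true}, side.

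I would then verify the two directions. If the formula is satisfied by an assignment $\sigma$, placing each true literal in $S_1$ and each false literal together with $F$ in $S_2$ splits every forcing pair (it contains exactly one true literal) and every clause set (it contains $F\in S_2$ and, by satisfaction, a true literal in $S_1$). Conversely, given any valid splitting I first invoke the symmetry of the definition to relabel so that $F\in S_2$; the forcing pairs then yield a consistent assignment $\sigma$ via $\sigma(x_i)=1$ iff $x_i\in S_1$, and since each clause set meets $S_1$ while $F\in S_2$, some literal of each clause lies in $S_1$ and is therefore true, so $\sigma$ satisfies the formula. Finally I would note that the map is computable in $\LOGSPACE$: the construction only scans the clause list and emits the corresponding subsets, maintaining a constant number of binary counters for the variable and clause indices.

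The step I expect to require the most care is the correctness of the reduction under the inherent symmetry of set splitting: the problem fixes no distinguished ``true'' side, so the marker $F$ and the relabeling step are precisely what let me recover an ordered truth assignment from an unordered partition. I would also double-check the boundary behaviour of the clause gadget for unit clauses $\{\ell,F\}$, where being split forces $\ell$ onto the side opposite $F$, i.e.\ onto the true side, as required; clauses of larger width are handled uniformly by the same argument.
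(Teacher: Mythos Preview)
The paper does not give its own proof of this proposition: it is quoted as a classical fact from Garey and Johnson, so there is nothing to compare against at the level of argument. Your write-up supplies a self-contained proof where the paper simply cites one.

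Your reduction is correct. The marker $F$ is exactly the device that turns ordinary $\CNFSAT$ into a set-splitting instance: adding $F$ to each clause set forces the ``not-all-on-one-side'' condition to become ``at least one literal on the side opposite $F$'', which recovers satisfiability once the forcing pairs pin down a consistent assignment. The symmetry step (relabelling so that $F\in S_2$) is the right way to break the inherent symmetry of the splitting problem, and your treatment of unit clauses is fine. One small edge case you might mention for completeness is the empty clause: its image is the singleton $\{F\}$, which can never be split, matching the fact that a formula with an empty clause is unsatisfiable. The logspace bound is clear for the reasons you state.

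For context, the textbook route (the one behind the citation) goes via not-all-equal $3$-SAT rather than plain $\CNFSAT$: one first reduces $3$-SAT to NAE-$3$-SAT by adjoining a fresh variable to every clause, and then maps NAE-SAT to set splitting by taking the literal pairs and the clause literal-sets directly, with no marker needed because NAE already asks for both truth values in each clause. Your construction is essentially the composition of these two reductions with the auxiliary variable collapsed to the single marker $F$; it is a bit more direct and avoids the intermediate NAE step, at the modest cost of producing clause sets of non-uniform size.
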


The following proof relies on the fact that strict semantics is considered. It cannot hold for lax semantics unless $\P=\NP$.
\begin{lemma}\label{mcisnph}
$\MC(\PLinc)$ under strict semantics is $\NP$-hard w.r.t.\ $\leqlogm$.
\end{lemma}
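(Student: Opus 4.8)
The plan is to give a logspace reduction from the set splitting problem to $\MC(\PLinc_s)$, exploiting the fact that strict disjunction forces a genuine \emph{partition} of the team, which is exactly what set splitting asks for. Given an instance $(\cF,S)$ with $S=\{a_1,\dots,a_n\}$ and $\cF=\{A_1,\dots,A_m\}$, I would build a team $X$ over a small set of propositions together with a fixed $\PLinc_s$-formula (independent of the instance, apart from a conjunction ranging over the sets in $\cF$) whose strict satisfaction over $X$ is equivalent to the existence of a splitting. The core device: use a disjunction $\psi_1\lor\psi_2$ evaluated in strict semantics, so that any witnessing split $X=Y\sqcup Z$ of the team $X$ induces exactly the two candidate sides $S_1,S_2$ of the partition.

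Concretely, I would take the domain of the assignments to consist of one proposition $p_a$ for each element $a\in S$ (and perhaps a couple of auxiliary propositions, e.g.\ a ``side'' marker and a constant-true proposition), and put one assignment $s_a$ into $X$ for each $a\in S$ — say $s_a$ is the characteristic assignment with $s_a(p_a)=1$ and $s_a(p_b)=0$ for $b\neq a$. Then splitting $X$ into $Y$ and $Z$ is the same thing as splitting $S$ into $S_1=\{a\mid s_a\in Y\}$ and $S_2=\{a\mid s_a\in Z\}$; the strict semantics guarantees $Y\cap Z=\emptyset$, i.e.\ $S_1\cap S_2=\emptyset$, and $Y\cup Z=X$ gives $S_1\cup S_2=S$. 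The remaining job of the formula is to check, on each side, that every $A\in\cF$ meets that side. To express ``$A$ meets this side'' I would use an inclusion atom: over a subteam $W\subseteq X$, the atom $\vec p_A \subseteq \vec p_A$ — or more usefully a trick in the spirit of Example~\ref{ex:spincuc} and the construction in Lemma~\ref{Phard} — can detect whether at least one of the assignments $s_a$ with $a\in A$ lies in $W$. The cleanest implementation is probably: add a constant-true proposition $p_\top$ (with $s_a(p_\top)=1$ for all $a$) and a fresh proposition $q_A$ for each $A\in\cF$ with $s_a(q_A)=1$ iff $a\in A$; then the inclusion atom $p_\top \subseteq q_A$ holds in a subteam $W$ iff some member of $W$ is an $s_a$ with $a\in A$, i.e.\ iff $A$ meets the side encoded by $W$. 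Taking $\psi_{\text{side}}:=\bigwedge_{A\in\cF}(p_\top\subseteq q_A)$, the full formula is $\varphi:=\psi_{\text{side}}\lor\psi_{\text{side}}$, and $X\models_s\varphi$ iff there is a partition $X=Y\sqcup Z$ with both $Y$ and $Z$ meeting every $A\in\cF$ — which is precisely a solution to the set splitting instance.

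The steps, in order, would be: (1) define $\tau$, the assignments $s_a$, and the team $X$; (2) define $q_A$, $p_\top$, and the formula $\varphi$; (3) prove the key local fact that for any $W\subseteq X$, $W\models p_\top\subseteq q_A$ iff $W$ contains some $s_a$ with $a\in A$ (this is immediate from the definition of the inclusion atom, using that $s_a(p_\top)=1$ always and $s_a(q_A)=1$ iff $a\in A$, plus the observation that $q_A$ is nonempty so some target value $1$ exists); (4) chase the equivalence: a witnessing strict split of $X$ corresponds bijectively to a candidate partition $(S_1,S_2)$ of $S$, and $\psi_{\text{side}}$ holding on each side corresponds to condition~2 of set splitting; (5) check that $X$ and $\varphi$ are computable in logspace — $X$ has $|S|$ assignments over $|S|+|\cF|+1$ propositions, written bitwise, and $\varphi$ is just a conjunction over $\cF$ of fixed-shape atoms, disjoined with itself, so a couple of counters suffice.

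I expect the main obstacle to be getting the ``$A$ meets the side'' gadget exactly right while keeping the empty-side case under control: if $\cF$ contains the empty set the instance is trivially a no-instance, and I should make sure the formula reflects that (with $A=\emptyset$, $q_A$ is identically $0$, so $p_\top\subseteq q_A$ fails in every nonempty subteam — good, but I must double-check the degenerate subteam $W=\emptyset$, where every inclusion atom holds vacuously; this is harmless because if some side is empty then the \emph{other} side is all of $X$ and I separately need \emph{both} sides to meet every nonempty $A$, which an empty side cannot do). A secondary subtlety is confirming that no lax-flavoured phenomenon sneaks in: the whole point is that lax disjunction would allow $Y$ and $Z$ to overlap (indeed $Y=Z=X$), collapsing the problem, so I must make explicit use of the strict clause $Y\cap Z=\emptyset$ — and note, as the lemma statement already flags, that the reduction provably cannot work for lax semantics unless $\P=\NP$, consistent with Theorem~\ref{mc-plinc-pc}. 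Once the local fact in step~(3) is pinned down, the rest is a routine unwinding of the strict disjunction clause.
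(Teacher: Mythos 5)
Your overall strategy -- reduce from set splitting, encode each element $a\in S$ as an assignment, detect ``$A$ meets this side'' via the inclusion atom $p_\top\subseteq q_A$, and let the strict disjunction produce the partition -- is the same as the paper's. However, your concrete formula $\varphi:=\psi_{\text{side}}\lor\psi_{\text{side}}$ has a genuine gap: nothing rules out the degenerate strict split $Y=X$, $Z=\emptyset$. The empty team satisfies every $\PLinc$-formula (all semantic clauses, including that for inclusion atoms, quantify universally over the team), so $\emptyset\models_s\psi_{\text{side}}$ holds vacuously, while $X\models_s\psi_{\text{side}}$ holds as soon as every $A\in\cF$ is nonempty. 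Thus for a no-instance such as $S=\{a\}$, $\cF=\{\{a\}\}$ your reduction produces a yes-instance of model checking, so the direction ``$X\models_s\varphi$ implies a splitting exists'' fails. Your remark that ``an empty side cannot meet every nonempty $A$'' addresses the target condition of set splitting, not what the formula actually verifies: the formula only checks inclusion atoms, and these are vacuous on $\emptyset$, so the formula never forces a side to be nonempty.

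The repair is precisely the auxiliary device in the paper's proof (the ``side markers'' you mentioned but never used). Add two extra assignments $s_c,s_d$ to the team, marked by fresh propositions $p_c,p_d$ (each true only at its own marker, true at $p_\top$, false at every $q_A$), and set $\varphi := \bigl(\neg p_c\land\psi_{\text{side}}\bigr)\lor\bigl(\neg p_d\land\psi_{\text{side}}\bigr)$. In any strict split $Y_1\cup Y_2=X$ witnessing $\varphi$, the literal $\neg p_c$ forces $s_c\in Y_2$ and $\neg p_d$ forces $s_d\in Y_1$, so both sides are nonempty; then each atom $p_\top\subseteq q_A$ is non-vacuous on each side and forces that side to contain some $s_a$ with $a\in A$, while the auxiliaries can never serve as witnesses because they falsify every $q_A$. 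With this change your step (3) and the unwinding in step (4) go through in both directions, and the construction remains computable in logarithmic space.
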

\begin{proof}
We give a reduction from the set splitting problem \cite[SP4]{gajo79} to the model checking problem of $\PLinc$ under strict semantics.

Let $\cF$ be an instance of the set splitting problem. 
We stipulate that $\cF=\{B_1,\dots,B_n\}$ and that $\bigcup \cF=\{a_1,\dots, a_k\}$, where $n,k\in\mathbb{N}$. 
We will introduce fresh proposition symbols $p_i$ and $q_j$ for each point $a_i\in \bigcup \cF$ and set $B_j\in \cF$. 
We will then encode the family of sets $\cF$ by assignments over these proposition symbols; each assignment $s_i$ will correspond to a unique point $a_i$. 
Formally, let $\tau_\cF$ denote the set $\{p_1,\dots, p_k, q_1,\dots,q_n,p_\top, p_c,p_d\}$ of proposition symbols. 
For each $i\in\{1,\dots,k,c,d\}$, we define the assignment $s_i\colon\tau_\cF \to \{0,1\}$ as follows:
\[
s_i(p) \dfn
\begin{cases}
1 &\text{if $p=p_i$ or $p=p_\top$},\\
1 &\text{if, for some $j$, $p=q_j$ and $a_i\in B_j$},\\
0 &\text{otherwise}.
\end{cases}
\]

Define $X_\cF\dfn \{s_1,\dots,s_k,s_c,s_d\}$, that is, $X_\cF$ consists of assignments $s_i$ corresponding to each of the points  $a_i\in \bigcup \cF$ and of two auxiliary assignments $s_c$ and $s_d$. 
Note that the only assignment in $X_\cF$ that maps $p_c$ ($p_d$, resp.) to 1 is $s_c$ ($s_d$, resp.) and that every assignment maps $p_\top$ to 1. 
Moreover, note that for $1\leq i\leq k$ and $1\leq j\leq n$, $s_i(q_j)=1$ iff $a_i\in B_j$. Now define
\[
\varphi_\cF \dfn  \big( \neg p_c\land \bigwedge_{i\leq n} p_\top \subseteq q_i \big) \lor \big( \neg p_d \land \bigwedge_{i\leq n} p_\top \subseteq q_i \big).
\]
We claim that
 $X_\cF\models_s \varphi_\cF$ iff the output of the set splitting problem with input $\cF$ is ``yes''.
\iflong

The proof is straightforward. 
Note that $X_\cF\models_s \varphi_\cF$ holds iff $X_\cF$ can be partitioned into two subteams $Y_1$ and $Y_2$ such that
\[
Y_1\models_s  \neg p_c\land \bigwedge_{i\leq n} p_\top \subseteq q_i  \text{ and }  Y_2\models_s  \neg p_d \land \bigwedge_{i\leq n} p_\top \subseteq q_i.
\]

Teams $Y_1$ and $Y_2$ are both nonempty, since $s_d\in Y_1$ and $s_c\in Y_2$. 
Also, for a nonempty subteam $Y$ of $X_\cF$, it holds that
$
Y\models_s p_\top \subseteq q_j$ iff there exists $s_i\in Y$ such that $s_i(q_j)=1$, or
equivalently, $a_i\in B_j$.

It is now evident that if $X_\cF\models_s \varphi_\cF$ holds then the related subteams $Y_1$ and $Y_2$ directly construct a positive answer to the set splitting problem. 
Likewise, any positive answer to the set splitting problem can be used to directly construct the related subteams $Y_1$ and $Y_2$.

In order to compute the assignments $s_i$ and by this the team $X_\cF$ on a logspace machine we need to implement two binary counters to count through $1\leq i\leq k$ for the propositions $p_i$ and $1\leq j\leq n$ for the propositions $q_i$. 
The formula $\varphi_\cF$ is constructed in logspace by simply outputting it step by step with the help of a binary counter for the interval $1\leq i\leq n$. 
As a result the whole reduction can be implemented on a logspace Turing machine.\else For a detailed proof see Appendix \ref{A:split}.\fi
\end{proof}

In Section \ref{sec:mcml} we establish that the model checking problem of modal inclusion logic with strict semantics is in $\NP$ (Theorem \ref{mc-minc-npc}). 
Since $\PLinc$ is essentially a fragment of this logic, together with Lemma \ref{mcisnph}, we obtain the following theorem.
\begin{theorem}\label{mc-plinc-npc}
$\MC(\PLinc)$ under strict semantics is $\NP$-complete w.r.t.\ $\leqlogm$.
\end{theorem}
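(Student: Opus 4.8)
The plan is to combine the hardness already established in Lemma~\ref{mcisnph} with a matching upper bound obtained by treating $\PLinc$ as a syntactic fragment of modal inclusion logic and invoking Theorem~\ref{mc-minc-npc}.

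For the lower bound there is nothing to add: Lemma~\ref{mcisnph} provides a $\leqlogm$-reduction from the $\NP$-complete set splitting problem to $\MC(\PLinc)$ under strict semantics.

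For membership in $\NP$ I would spell out the (essentially trivial) sense in which $\PLinc$ is a fragment of modal inclusion logic. Given a propositional team $X$ with domain $D$, read it as a team pointed Kripke model $K_X$ whose set of worlds is $X$, whose accessibility relation is empty, and in which a world $s$ makes $p\in D$ true exactly when $s(p)=1$; the distinguished team is $X$ itself. Then every $\PLinc$-formula $\varphi$ is, read literally, a formula of modal inclusion logic in which no modal operator occurs, and a straightforward induction on $\varphi$ shows that $X\models_s\varphi$ in the propositional strict semantics iff $K_X$ with team $X$ satisfies $\varphi$ in the strict team semantics of modal logic: the clauses for literals, conjunction, the disjoint-union disjunction of strict semantics, and inclusion atoms all coincide clause by clause. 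As the map $(X,\varphi)\mapsto (K_X, X, \varphi)$ is computable in logarithmic space, this is a $\leqlogm$-reduction of $\MC(\PLinc)$ under strict semantics to the model checking problem of modal inclusion logic under strict semantics, which lies in $\NP$ by Theorem~\ref{mc-minc-npc}; hence $\MC(\PLinc)$ under strict semantics is in $\NP$ as well.

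Combining the two bounds yields $\NP$-completeness with respect to $\leqlogm$. The genuine content lies in Theorem~\ref{mc-minc-npc}, whose $\NP$ algorithm is where the real work happens --- guessing the decompositions of the team that witness the disjunctions, together with the data needed to verify the inclusion atoms --- whereas the reduction used here is merely a change of presentation and poses no real obstacle.
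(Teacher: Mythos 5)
Your proposal is correct and follows essentially the same route as the paper, which obtains hardness from Lemma~\ref{mcisnph} and membership by noting that $\PLinc$ is "essentially a fragment" of $\Minc$ and invoking Theorem~\ref{mc-minc-npc}. The only difference is that you spell out the trivial embedding (team as Kripke model with empty accessibility relation) that the paper leaves implicit.
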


\section{Modal logics with team semantics}
Let $\Phi$ be a set of proposition symbols. The syntax of modal logic $\ML(\Phi)$ is  generated by the following grammar:
\[
\varphi \ddfn p\mid \neg p \mid (\varphi \wedge \varphi)\mid (\varphi \vee \varphi) \mid \Diamond \varphi \mid \Box \varphi, \text{where $p\in\Phi$.}
\]
By $\varphi^\bot$ we denote the formula that is obtained from $\neg \varphi$ by pushing all negation symbols to the atomic level.
A (Kripke) {\em $\Phi$-model} is a tuple $\mK=(W,R,V)$, where $W$, called the {\em domain} of $\mK$, is a non-empty set, $R\subseteq W\times W$ is a binary relation, and $V\colon\Phi\to \mathcal{P}(W)$ is a valuation of the proposition symbols.
By  $\modelsML$ we denote the \emph{satisfaction relation} of modal logic that is defined via pointed \emph{$\Phi$-models} in the standard way.
Any subset $T$ of the domain of a Kripke model $\mK$ is called \emph{a team of $\mK$}. 
Before we define \emph{team semantics} for $\ML$, we introduce some auxiliary notation.
\begin{definition}
Let $\mK=(W,R,V)$ be a model and $T$ and $S$ teams of $\mK$. Define that
\begin{center}
$R[T] := \{w\in W \mid \exists v\in T \text{ s.t. } vRw \}$ and
$R^{-1}[T] := \{w\in W \mid \exists v\in T \text{ s.t. }  wRv\}$.
\end{center}
For teams $T$ and $S$ of $\mK$, we write $T[R]S$ if $S\subseteq R[T]$ and $T\subseteq R^{-1}[S]$.
\end{definition}

Accordingly, $T[R]S$ holds if and only if for every $w\in T$, there exists some $v\in S$ such that $wRv$, and for every $v\in S$, there exists some $w\in T$ such that $wRv$.
We are now ready to define team semantics for $\ML$.
%
\begin{definition}[Lax team semantics]
Let $\mK$ be a Kripke model and $T$ a team of $\mK$. 
The satisfaction relation $\mK,T\models \varphi$ for $\ML(\Phi)$ is defined as follows. 
\begin{align*}
\mK,T\models p  \quad\Leftrightarrow\quad& w\in V(p) \, \text{ for every $w\in T$.}\\
\mK,T\models \neg p \quad\Leftrightarrow\quad& w\not\in V(p) \, \text{ for every $w\in T$.}\\
\mK,T\models (\varphi\land\psi) \quad\Leftrightarrow\quad& \mK,T\models\varphi \text{ and } \mK,T\models\psi.\\
\mK,T\models (\varphi\lor\psi) \quad\Leftrightarrow\quad& \mK,T_1\models\varphi \text{ and } 
\mK,T_2\models\psi \, \text{ for some $T_1$ and $T_2$ s.t.\ $T_1\cup T_2= T$}.\\
\mK,T\models \Diamond\varphi \quad\Leftrightarrow\quad& \mK,T'\models\varphi \text{ for some $T'$ s.t.\ $T[R]T'$}.\\
\mK,T\models \Box\varphi \quad\Leftrightarrow\quad& \mK,T'\models\varphi, \text{ where $T'=R[T]$}.
\end{align*}
\end{definition}

Analogously to the propositional case, we also consider the \emph{strict} variant of team semantics for modal logic.
In the \emph{strict} team semantics, we have the following alternative semantic definitions for the disjunction and diamond (where $W$ denotes the domain of \mK).
\begin{align*}
\mK,T\models_s (\varphi\lor\psi) \quad\Leftrightarrow\quad& \mK,T_1\models\varphi \text{ and } 
\mK,T_2\models\psi \,\\
&\text{for some $T_1$ and $T_2$ such that $T_1\cup T_2= T$ and $T_1\cap T_2= \emptyset$}.\\
\mK,T\models_s \Diamond\varphi \quad\Leftrightarrow\quad& \mK,f(T)\models\varphi \text{ for some $f\colon T\rightarrow W$ s.t.\ $\forall w\in T: wRf(w)$}.
\end{align*}

When $\LL$ is a team-based modal logic, we let $\LL_s$ to denote its variant with strict semantics. 
As in the propositional case, for strict semantics we use $\models_s$ instead of $\models$. 
The formulae of $\ML$ have the following flatness property.
\begin{proposition}[Flatness, see, e.g., \cite{dukovo16}]\label{flatness}
\label{mlextends}
Let $\mK$ be a Kripke model and $T$ be a team of $\mK$. Then, for every formula $\varphi$ of $\ML(\Phi)$:
\(
\mK,T\models \varphi \,\Leftrightarrow\, \forall w\in T:\mK,w\modelsML \varphi.
\)
\end{proposition}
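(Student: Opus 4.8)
The plan is to prove the flatness property for $\ML(\Phi)$ by a straightforward structural induction on $\varphi$, following exactly the pattern of Proposition~\ref{whatever} for the propositional case, but with the two extra modal cases $\Diamond\psi$ and $\Box\psi$ to handle. The base cases $\varphi=p$ and $\varphi=\neg p$ are immediate from the team-semantics clauses: $\mK,T\models p$ says $w\in V(p)$ for all $w\in T$, which is literally $\forall w\in T:\mK,w\modelsML p$, and symmetrically for $\neg p$. The conjunction case is also immediate: $\mK,T\models\varphi_1\land\varphi_2$ iff $\mK,T\models\varphi_1$ and $\mK,T\models\varphi_2$, which by the induction hypothesis is $(\forall w\in T:\mK,w\modelsML\varphi_1)$ and $(\forall w\in T:\mK,w\modelsML\varphi_2)$, i.e.\ $\forall w\in T:\mK,w\modelsML\varphi_1\land\varphi_2$.

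For disjunction $\varphi=\varphi_1\lor\varphi_2$: in the forward direction, pick $T_1,T_2$ with $T_1\cup T_2=T$, $\mK,T_1\models\varphi_1$, $\mK,T_2\models\varphi_2$; by the induction hypothesis every $w\in T_1$ satisfies $\varphi_1$ and every $w\in T_2$ satisfies $\varphi_2$, and since $T=T_1\cup T_2$ each $w\in T$ satisfies $\varphi_1$ or $\varphi_2$, hence $\mK,w\modelsML\varphi_1\lor\varphi_2$. Conversely, given that every $w\in T$ satisfies $\varphi_1$ or $\varphi_2$ in Tarski semantics, set $T_1=\{w\in T\mid \mK,w\modelsML\varphi_1\}$ and $T_2=\{w\in T\mid\mK,w\modelsML\varphi_2\}$; then $T_1\cup T_2=T$, and the induction hypothesis gives $\mK,T_1\models\varphi_1$ and $\mK,T_2\models\varphi_2$, so $\mK,T\models\varphi_1\lor\varphi_2$. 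For $\Box\psi$: by definition $\mK,T\models\Box\psi$ iff $\mK,R[T]\models\psi$, which by the induction hypothesis holds iff every $v\in R[T]$ satisfies $\psi$; and $v\in R[T]$ iff $wRv$ for some $w\in T$, so this is equivalent to: for every $w\in T$, every $R$-successor $v$ of $w$ satisfies $\psi$, i.e.\ $\forall w\in T:\mK,w\modelsML\Box\psi$.

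The one case that needs a little care is $\Diamond\psi$. Forward: if $\mK,T\models\Diamond\psi$ via some $T'$ with $T[R]T'$ and $\mK,T'\models\psi$, then by the induction hypothesis every $v\in T'$ satisfies $\psi$; since $T[R]T'$ implies that every $w\in T$ has some $R$-successor in $T'$, each $w\in T$ has a successor satisfying $\psi$, hence $\mK,w\modelsML\Diamond\psi$. Conversely, if every $w\in T$ has some $R$-successor $v_w$ with $\mK,v_w\modelsML\psi$, let $T'=\{v_w\mid w\in T\}$; then $T'\subseteq R[T]$ and $T\subseteq R^{-1}[T']$, so $T[R]T'$, and every element of $T'$ satisfies $\psi$, so by the induction hypothesis $\mK,T'\models\psi$ and thus $\mK,T\models\Diamond\psi$. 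There is no genuine obstacle here — the only mildly subtle point is checking that the witness team $T'$ one constructs in the converse direction of the $\Diamond$ case genuinely meets both halves of the $T[R]T'$ condition ($S\subseteq R[T]$ and $T\subseteq R^{-1}[S]$), which it does by construction; note also that the statement concerns lax semantics, so we do not need the function-based strict clause, though an essentially identical argument would work there as well. Finally, since $\modelsML$ is the standard Tarskian modal satisfaction relation, all the Tarski-side equivalences used above are just the usual semantic clauses for $\ML$.
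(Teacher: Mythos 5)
Your proof is correct: the structural induction goes through in every case, and you rightly identify the only point needing care, namely that the witness team $T'=\{v_w\mid w\in T\}$ in the converse direction of the $\Diamond$ case satisfies both halves of the $T[R]T'$ condition. The paper itself gives no proof of this proposition (it cites it as a known result from the literature), and your argument is essentially the standard one found there, so there is nothing to add.
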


The syntax of \emph{modal inclusion logic} $\Minc(\Phi)$ and \emph{extended modal inclusion logic} $\EMinc(\Phi)$ is obtained by extending the syntax of $\ML(\Phi)$ by the following grammar rule for each $n\in \N$:
$$
\varphi \ddfn {\varphi_1,\dots,\varphi_n \subseteq \psi_1,\dots,\psi_n},
$$
where $\varphi_1,\psi_1,\dots,\varphi_n,\psi_n\in\ML(\Phi)$. Additionally, for $\Minc(\Phi)$, we require that $\varphi_1$, $\psi_1$, $\dots$, $\varphi_n$, $\psi_n$ are proposition symbols in $\Phi$.
The semantics for these inclusion atoms is defined as follows:
\begin{align*}
&\mK,T\models \varphi_1,\dots,\varphi_n \subseteq \psi_1,\dots,\psi_n\Leftrightarrow \forall w\in T \exists v\in T: \bigwedge_{1 \leq i \leq n}(\mK,\{w\}\models\varphi_i \Leftrightarrow \mK,\{v\}\models\psi_i).
\end{align*}

The following proposition is proven in the same way as the analogous results for first-order inclusion logic \cite{Galliani12}.  A modal logic $\LL$ is union closed if $\mK,T\models \varphi$ and $\mK,S\models \varphi$ implies that  $\mK,T\cup S\models \varphi$, for every $\varphi\in \LL$.
\begin{proposition}[Union Closure]\label{closures}
The logics $\ML$, $\Minc$, $\EMinc$ are union closed.
\end{proposition}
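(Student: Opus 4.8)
The plan is to prove the statement by a single structural induction, and since $\ML$ and $\Minc$ are syntactic fragments of $\EMinc$, it suffices to carry out the induction for $\varphi\in\EMinc$. Fix a Kripke model $\mK=(W,R,V)$ and teams $T,S$ of $\mK$ with $\mK,T\models\varphi$ and $\mK,S\models\varphi$; the goal is $\mK,T\cup S\models\varphi$.

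The base cases are immediate. For a literal $p$ or $\neg p$ the satisfaction condition is a universal statement over the worlds of the team, so it transfers from $T$ and $S$ to $T\cup S$. For an inclusion atom $\varphi_1,\dots,\varphi_n\subseteq\psi_1,\dots,\psi_n$ the observation is that the witness required for a world $w\in T\cup S$ already exists inside $T$ (if $w\in T$) or inside $S$ (if $w\in S$), and both $T$ and $S$ are subsets of $T\cup S$; note that no induction hypothesis is needed here, since the component formulae of an inclusion atom are evaluated only on singletons. For the inductive step, conjunction is trivial. For $\psi\lor\theta$ I would take splittings $T=T_1\cup T_2$ and $S=S_1\cup S_2$ witnessing the two hypotheses, note that $T\cup S=(T_1\cup S_1)\cup(T_2\cup S_2)$, and apply the induction hypothesis to the pairs $T_1,S_1$ and $T_2,S_2$. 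For $\Box\psi$ the key identity is $R[T\cup S]=R[T]\cup R[S]$, which reduces the claim directly to the induction hypothesis applied to the teams $R[T]$ and $R[S]$. For $\Diamond\psi$ I would pick successor teams $T'$ with $T[R]T'$ and $S'$ with $S[R]S'$, both satisfying $\psi$, and then check that $T'\cup S'$ witnesses $\mK,T\cup S\models\Diamond\psi$: the inclusion $T'\cup S'\subseteq R[T\cup S]$ follows from $R[T]\cup R[S]=R[T\cup S]$, the inclusion $T\cup S\subseteq R^{-1}[T'\cup S']$ follows from $R^{-1}[T']\cup R^{-1}[S']=R^{-1}[T'\cup S']$, and the induction hypothesis then gives $\mK,T'\cup S'\models\psi$.

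There is no genuinely hard step; the only place that needs a moment's care is the diamond case, where one has to verify that the union of the two chosen successor teams still stands in the $[R]$-relation with $T\cup S$. This is also the point at which the lax reading of $\lor$ and $\Diamond$ is essential — under strict semantics a partition of $T$ and a partition of $S$ need not combine to a partition of $T\cup S$ (indeed $\PLinc_s$, hence $\Minc_s$, fails to be union closed by Example~\ref{ex:spincuc}) — which is why the proposition is stated for lax semantics only.
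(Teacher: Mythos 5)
Your proof is correct and is essentially the argument the paper intends: the paper gives no explicit proof but states that the proposition ``is proven in the same way as the analogous results for first-order inclusion logic'' \cite{Galliani12}, i.e., by exactly this structural induction whose only delicate points are the lax clauses for $\lor$ and $\Diamond$ and the fact that inclusion-atom parameters are evaluated on singletons. Your closing remark about the failure under strict semantics also matches the paper's discussion around Example~\ref{ex:spincuc}.
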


\begin{table*}\centering
\begin{tabular}{@{}c@{\;}cc@{\;}cc@{\;}cc@{}}\toprule
	& \multicolumn{2}{c}{Satisfiability} & \multicolumn{2}{c}{Validity} & \multicolumn{2}{c}{Model checking}\\\cmidrule{2-3}\cmidrule{4-5}\cmidrule{6-7}
 & strict & lax  & strict & lax   & strict & lax \\\midrule
$\ML$ &  \multicolumn{2}{c}{\Vhrulefill\; $\PSPACE$ \cite{lad77} \Vhrulefill} & \multicolumn{2}{c}{\Vhrulefill\; $\PSPACE$ \cite{lad77} \Vhrulefill}  & \multicolumn{2}{c}{\Vhrulefill\; $\P$ \cite{clemsi86,sc02} \Vhrulefill}\\
%
%
$\Minc$   & $\EXPTIME$ \cite{hkmv15personal}
	      & $\EXPTIME$ \cite{hkmv15}
	      & $\coNEXPTIME$-h. [C.~\ref{cor:val-minc-strict}]
	      & $\coNEXPTIME$-h. [L.~\ref{lem:val-minc-lax-lower-bound}]
              & $\NP$ [Th.~\ref{mc-minc-npc}]
	      & $\P$ [Th.~\ref{mc-minc-pc}]
\\
\bottomrule\\
\end{tabular}
\caption{Complexity of satisfiability, validity and model checking for modal logics under both systems of semantics. The given complexity classes refer to completeness results and ``-{\normalfont h}.'' denotes hardness. 
The complexities for $\Minc$ and $\EMinc$ coincide,  see Theorems \ref{mc-minc-pc}, \ref{mc-minc-npc}, and \ref{a:val-eminc-minc-equiv}.}
\label{ml-results}
\end{table*}

Analogously to the propositional case, it is easy to see that, by Proposition \ref{flatness}, for $\ML$ the strict and the lax semantics coincide. 
Again, as in the propositional case, this does not hold for $\Minc$ or $\EMinc$. (Note that since $\PLinc_s$ is not union closed (cf. Example \ref{ex:spincuc}) neither $\Minc_s$ nor $\EMinc_s$ is as well.)

In contrary to the propositional case, Lemma \ref{singletonslemma} fails in the modal case as the following example illustrates.
\begin{example}\label{ex:smincsing}
Let $\mK$ be as depicted in the table of Figure~\ref{pic0} and let $\varphi$ denote the $\PLinc_s$-formula of Example \ref{ex:spincuc}. Now $\mK,\{w_i\}\models_s \Box \varphi$, for $i\in \{1,2,3\}$, but  $\mK,\{w_1,w_2,w_3\}\not\models_s \Box \varphi$.
\end{example}

\section{Model checking and validity in modal team semantics}
The model checking, satisfiability, and validity problems in the context of team semantics of modal logic are defined analogously to the propositional case.
Let $\LL(\Phi)$ be a  modal logic with team semantics.  
A formula $\varphi\in\LL(\Phi)$ is \emph{satisfiable}, if there exists a Kripke $\Phi$-model $\mK$ and a non-empty team $T$ of $\mK$ such that $\mK,T\models \varphi$. 
A formula $\varphi\in\LL(\Phi)$ is \emph{valid}, if $\mK,T\models \varphi$ holds for every $\Phi$-model $\mK$ and every team $T$ of $\mK$. 
The satisfiability problem $\SAT(\LL)$ and the validity problem $\VAL(\LL)$ are defined in the obvious way: 
Given a formula $\varphi\in\LL$, decide whether the formula is satisfiable (valid, respectively). 
For model checking  $\MC(\LL)$ we consider combined complexity: 
Given a formula $\varphi\in\LL$, a Kripke model $\mK$, and a team $T$ of $\mK$, decide whether $\mK,T\models \varphi$. 
See Table \ref{ml-results} for known complexity results on $\ML$ and $\Minc$, together with partial results of this paper.

\subsection{Complexity of model checking}\label{sec:mcml}

Let $\mK$ be a Kripke model, $T$ be a team of $\mK$, and $\varphi$ be a formula of $\MINC$. 
By $\maxsub(T,\varphi)$, we denote the maximum subteam $T'$ of $T$ such that $\mK,T'\models \varphi$. 
Since $\Minc$ is union closed (cf.  Proposition \ref{closures}), such a maximum subteam always exists.
\iflong\else
For a proof of the following lemma, see Appendix \ref{A:max}.
\fi
\begin{lemma}\label{maxlemma}
If $\varphi$ is  a proposition symbol, its negation, or an inclusion atom, then $\maxsub(T,\varphi)$ can be computed in polynomial time with respect to $\lvert T \rvert + \lvert \varphi \rvert$.
\end{lemma}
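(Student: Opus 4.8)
The plan is to handle each of the three cases for $\varphi$ separately, in each case giving an explicit description of $\maxsub(T,\varphi)$ together with an obvious polynomial-time procedure to compute it. Throughout, write $\mK=(W,R,V)$. The existence of the maximum subteam is guaranteed by union closure (Proposition \ref{closures}), so the only task is computability.

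First I would dispose of the literal cases. If $\varphi=p$ is a proposition symbol, then by the semantics $\mK,T'\models p$ iff $T'\subseteq V(p)$, so the maximum such subteam is simply $T\cap V(p)$; dually, if $\varphi=\neg p$, then $\maxsub(T,\varphi)=T\setminus V(p)$. Both sets are computed by one pass over $T$, checking membership in $V(p)$ for each world, which is linear in $\lvert T\rvert+\lvert\varphi\rvert$.

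The substantive case is the inclusion atom $\varphi=(\varphi_1,\dots,\varphi_n\subseteq\psi_1,\dots,\psi_n)$. Here I would first note that, since each $\varphi_i,\psi_i$ is an $\ML$-formula and hence flat (Proposition \ref{flatness}), the truth value $\mK,\{w\}\models\varphi_i$ depends only on the single world $w$, and each such value can be evaluated by ordinary modal model checking in polynomial time in $\lvert\mK\rvert+\lvert\varphi_i\rvert$; doing this for every $w\in W$ (or just $w\in T$) and every $i$ costs polynomial time overall. Associate to each $w$ its ``$\varphi$-type'' $\alpha(w):=(\chi_{\varphi_1}(w),\dots,\chi_{\varphi_n}(w))\in\{0,1\}^n$ and its ``$\psi$-type'' $\beta(w):=(\chi_{\psi_1}(w),\dots,\chi_{\psi_n}(w))\in\{0,1\}^n$, where $\chi_\theta(w)=1$ iff $\mK,\{w\}\models\theta$. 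Then $\mK,T'\models\varphi$ iff for every $w\in T'$ there is $v\in T'$ with $\alpha(w)=\beta(v)$; that is, writing $B(T'):=\{\beta(v)\mid v\in T'\}$, the condition is $\alpha(w)\in B(T')$ for all $w\in T'$. The key combinatorial observation is that this constraint is monotone in the wrong direction for a naive greedy argument, but union closure tells us the maximum exists, so I would compute it by iterated removal: start with $T_0:=T$, and repeatedly set $T_{i+1}:=\{w\in T_i\mid \alpha(w)\in B(T_i)\}$, i.e.\ delete every world whose $\varphi$-type is not matched by the $\psi$-type of some surviving world. This sequence is decreasing, stabilises after at most $\lvert T\rvert$ steps, and its limit $T^*$ satisfies $\mK,T^*\models\varphi$; moreover any subteam $T'\subseteq T$ with $\mK,T'\models\varphi$ survives every round (by induction, using that $\alpha(w)\in B(T')\subseteq B(T_i)$ once $T'\subseteq T_i$), so $T'\subseteq T^*$, giving $T^*=\maxsub(T,\varphi)$. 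Each round recomputes $B(T_i)$ and rescans $T_i$, costing $O(\lvert T\rvert\cdot n)$, so the whole loop is polynomial in $\lvert T\rvert+\lvert\varphi\rvert$ once the types $\alpha(w),\beta(w)$ are known.

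I expect the main obstacle — more conceptual than technical — to be the inclusion-atom case: one must resist the temptation to think $\maxsub$ is just ``keep the matched worlds once,'' and instead argue that the fixed-point of the removal process is both a model of $\varphi$ and above every submodel, which is where union closure (equivalently, the explicit induction on the decreasing chain) does the real work. For $\MINC$ the types $\alpha,\beta$ are computed by plain modal model checking; if one wanted the same lemma for $\EMINC$ nothing changes except that the $\varphi_i,\psi_i$ are arbitrary $\ML$-formulae, which are still flat and polynomial-time checkable, so the argument goes through verbatim.
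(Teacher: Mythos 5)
Your proposal is correct and follows essentially the same route as the paper's proof: literals are handled directly via flatness, and for the inclusion atom the maximum subteam is obtained as the stable limit of iteratively deleting worlds whose left-hand ($\varphi$-)type is not matched by the right-hand ($\psi$-)type of any surviving world. The paper phrases this deletion process in terms of repeatedly removing out-degree-$0$ vertices from a graph on $T$ whose edges encode exactly your matching condition $\alpha(u)=\beta(v)$, so the algorithm and the correctness argument (the fixed point satisfies $\varphi$ and contains every satisfying subteam) coincide with yours.
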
\iflong
\begin{proof} 
If $\varphi$ is a proposition symbol or its negation, the claim follows from flatness in a straightforward way.
Assume then that $T=\{w_1,\dots w_n\}$ and $\varphi=p_1,\dots,p_k \subseteq q_1,\dots,q_k$. Let $G=(V,E)$ be a directed graph such that $V=T$ and
$
(u,v)\in E$ iff the value of $p_i$ in $u$ is the same as the value of  $q_i$ in $v$, for each $1\leq i\leq k$.

The graph $G$ describes the inclusion dependencies between the points in the following sense: 
if $w \in \maxsub(T,\varphi)$, then there exists some $v\in \maxsub(T,\varphi)$ such that $(w,v)\in E$. 
Clearly $G$ can be computed in time  $\mathcal{O}(n^2k)$. 
In order to construct  $\maxsub(T,\varphi)$, we, round by round, delete all vertices from $G$ with out-degree $0$. 
Formally, we define a sequence $G_0,\dots,G_{n}$ of graphs recursively. 
We define that $G_0:=G$ and that $G_{j+1}$ is the graph obtained from $G_j$ by deleting all of those vertices from $G_j$ that have out-degree $0$ in $G_j$. 
Let $i$ be the smallest integer such that $G_i=(V_i,E_i)$ has no vertices of out-degree 0. 
Clearly $i\leq n$, and moreover, $G_i$ is computable from $G$ in time  $\mathcal{O}(n^3)$. 
It is easy to check that $V_i=\maxsub(T,\varphi)$.
\end{proof}\fi

For the following Lemma it is crucial that lax semantics is considered. The lemma cannot hold for strict semantics unless $\P=\NP$.
\begin{lemma}\label{Plemma}
$\MC(\Minc)$ under lax semantics is in $\P$.
\end{lemma}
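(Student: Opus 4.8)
The plan is to give a polynomial-time algorithm for $\MC(\Minc)$ under lax semantics via a bottom-up labelling of the formula tree, where at each subformula $\psi$ and relevant team we compute the \emph{maximal} subteam satisfying $\psi$. The key structural fact that makes this work is union closure (Proposition~\ref{closures}): since $\Minc$ is union closed, for any team $T$ and any subformula $\psi$ there is a well-defined maximum subteam $\maxsub(T,\psi)$, and crucially $\mK,S\models\psi$ holds for a subteam $S\subseteq T$ if and only if $S\subseteq\maxsub(T,\psi)$. So instead of guessing subteams (which is what makes the strict case $\NP$-hard), we can deterministically track the single largest satisfying subteam.

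First I would fix the input $\mK=(W,R,V)$, $T$, and $\varphi$, and let $n = |W|$ (or $|T|$) and $m=|\varphi|$. The algorithm processes the subformulae of $\varphi$ from the leaves upward. For each subformula $\psi$ we want to compute $\maxsub(U,\psi)$ for the appropriate team $U$ determined by the position of $\psi$ in $\varphi$; a clean way to organise this is to compute, for every subformula $\psi$, the function $U\mapsto\maxsub(U,\psi)$ only at the finitely many teams $U$ that actually arise — but in fact it is cleaner still to observe that $\maxsub(-,\psi)$ is monotone and to just compute $\maxsub(W,\psi)$ and restrict. Concretely: (i) for $\psi$ a literal or an inclusion atom, $\maxsub(U,\psi)$ is computable in polynomial time by Lemma~\ref{maxlemma}; (ii) for $\psi=\psi_1\wedge\psi_2$, $\maxsub(U,\psi)=\maxsub\big(\maxsub(U,\psi_1),\psi_2\big)$ — here union closure is used to argue this equals $\maxsub(U,\psi_1)\cap\maxsub(U,\psi_2)$ and that this intersection indeed satisfies $\psi$; (iii) for $\psi=\psi_1\vee\psi_2$, under \emph{lax} semantics $\maxsub(U,\psi)=\maxsub(U,\psi_1)\cup\maxsub(U,\psi_2)$, because any split $U=U_1\cup U_2$ with $U_i\models\psi_i$ has $U_i\subseteq\maxsub(U,\psi_i)$, and conversely the two maximal subteams themselves form a (possibly overlapping, which lax semantics permits) cover of their union; (iv) for $\psi=\Diamond\psi_1$, $\maxsub(U,\Diamond\psi_1)=\{w\in U\mid R[\{w\}]\cap\maxsub(R[U],\psi_1)\neq\emptyset\}$, using that the lax diamond lets each point choose successors independently and that $\maxsub(R[U],\psi_1)$ absorbs every locally good choice; (v) for $\psi=\Box\psi_1$, $\maxsub(U,\Box\psi_1)=\{w\in U\mid R[\{w\}]\subseteq\maxsub(R[U],\psi_1)\}$. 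Each step is a polynomial-time operation on sets of size $\le n$, and there are $\le m$ subformulae, so the total running time is polynomial. Finally, $\mK,T\models\varphi$ iff $T\subseteq\maxsub(T,\varphi)$ iff $\maxsub(T,\varphi)=T$.

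The main obstacle — and the part needing the most care in the write-up — is the correctness of the $\Diamond$ clause. One must verify that the $\maxsub$ identity for $\Diamond$ genuinely reflects lax semantics: the lax clause requires a single successor team $T'$ with $T[R]T'$ and $\mK,T'\models\psi_1$, and the subtlety is the back-and-forth condition $T[R]T'$ (every point of $T'$ must be hit from $T$), not merely that each $w\in T$ has a successor in $T'$. The resolution is that one can always shrink $T'$ to $T'\cap R[T]$ without losing satisfaction — but this step uses \emph{downward closure of $\ML$-literals is not enough}; rather it uses that $\maxsub(R[U],\psi_1)$ is the maximum satisfying subteam of $R[U]$, so restricting to the actually-reached part stays inside it and, combined with union closure applied to the singleton witnesses, still satisfies $\psi_1$. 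I would state this as an auxiliary claim: for lax semantics, $\mK,U\models\Diamond\psi_1$ iff every $w\in U$ has some successor in $\maxsub(R[U],\psi_1)$, and prove both directions carefully, the nontrivial direction assembling the global successor team as the union of one-successor-per-point choices and invoking Proposition~\ref{closures}. The $\wedge$ and $\Box$ clauses are routine given $\maxsub$ is well-defined; the $\vee$ clause is where lax (as opposed to strict) is essential, and a one-line remark pointing at Lemma~\ref{mcisnph} should flag that the overlap $U_1\cap U_2\neq\emptyset$ is exactly what breaks in the strict case.
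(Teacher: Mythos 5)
There is a genuine gap: your single bottom-up pass is unsound at conjunction (and at $\Box$), precisely because $\Minc$ is not downward closed. Union closure gives you that $\maxsub(U,\psi)$ exists and that $S\models\psi$ implies $S\subseteq\maxsub(U,\psi)$, but \emph{not} the converse, so intersecting or shrinking maximal subteams can destroy satisfaction. Concretely, your clause (ii) fails already propositionally: take two worlds $v_1$ (with $p{=}1,q{=}0,r{=}0$) and $v_2$ (with $p{=}0,q{=}1,r{=}1$), $U=\{v_1,v_2\}$, $\psi_1 = p\subseteq q$, $\psi_2=\neg r$. Then $\maxsub(U,\psi_1)=U$ and $\maxsub(U,\psi_2)=\{v_1\}$, so both of your expressions give $\{v_1\}$, yet $\{v_1\}\not\models p\subseteq q$ and the true maximum subteam for $\psi_1\wedge\psi_2$ is $\emptyset$. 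This overestimate propagates to a wrong top-level answer: for $\varphi = r\lor\big((p\subseteq q)\wedge\neg r\big)$ and $T=U$, your algorithm outputs $\maxsub(U,r)\cup\{v_1\}=T$ and accepts, but $\mK,T\not\models\varphi$, since $v_1$ can only be covered by the right disjunct and the only $\neg r$-subteam containing $v_1$ is $\{v_1\}$, which falsifies the inclusion atom. The same phenomenon breaks your $\Box$ clause: removing worlds $w$ with a successor outside $\maxsub(R[U],\psi_1)$ shrinks the successor team to a subset of $\maxsub(R[U],\psi_1)$ that need not satisfy $\psi_1$. (Your disjunction clause is correct, and, perhaps surprisingly, so is your $\Diamond$ clause -- the argument you sketch can be completed since every point of $\maxsub(R[U],\psi_1)$ automatically lies in $R$ of the retained worlds -- but the $\wedge$ and $\Box$ cases are where the single pass collapses. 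Also, the aside that one may ``compute $\maxsub(W,\psi)$ and restrict'' is false for the same reason: $\maxsub(W,\psi)\cap U$ need not equal $\maxsub(U,\psi)$.)

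The missing idea is that one pass is not enough: the paper's algorithm computes a decreasing sequence of labelling functions on occurrences of subformulae, alternating \emph{bottom-up} rounds (which recompute maximal satisfying subteams, intersect at $\wedge$, take unions at $\lor$, and filter at $\Diamond,\Box$) with \emph{top-down} rounds (which push the team at a compound formula back down to its immediate subformulae, e.g.\ equating both conjunct labels with the conjunction's label and intersecting lower labels with $R$-images of the current upper label). Because labels only shrink, a fixed point $f_\infty$ is reached within $2\cdot|W|\cdot|\varphi|$ rounds, so the whole computation is polynomial; correctness then needs a two-directional argument, showing first that $\mK,f_\infty(\psi)\models\psi$ for every occurrence $\psi$, and second that any family of witnessing teams for $\mK,T\models\varphi$ is contained occurrence-wise in every $f_i$, whence $\mK,T\models\varphi$ iff $f_\infty(\varphi)=T$. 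To repair your proof you would have to add exactly this kind of iteration (a greatest-fixed-point computation) and the accompanying containment argument; the one-shot recursion you propose cannot be patched locally, as the counterexample above shows it is already wrong for $\PLinc$.
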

\begin{proof}
We will present a labelling algorithm for model checking $\mK,T\models \varphi$. 
Let $\subOcc(\varphi)$ denote the set of all \emph{occurrences} of subformulae of $\varphi$. 
Below we denote occurrences as if they were formulae, but we actually refer to some particular occurrence of the formula.
  
A function $f\colon\subOcc(\varphi) \to \pow(W)$ is called a labelling function of $\varphi$ in $\mK$.
We will next give an algorithm for computing a sequence $f_0,f_1,f_2,\ldots$, of 
such labelling functions.
\begin{itemize}
\item Define $\lf_0(\psi)= W$ for each $\psi\in \subOcc(\varphi)$.

\item For odd $i\in\mathbb{N}$, define $f_i(\psi)$ bottom up as follows:
\begin{enumerate}
\item For literal $\psi$, define $\lf_{i}(\psi) \dfn \maxsub(\lf_{i-1}(\psi),\psi).$
\item $\lf_{i}(\psi\land\theta) \dfn \lf_i(\psi)\cap\lf_i(\theta)$.
\item $\lf_{i}(\psi\lor\theta) \dfn \lf_i(\psi)\cup\lf_i(\theta)$.
\item $\lf_{i}(\Diamond\psi) \dfn \{w\in \lf_{i-1}(\Diamond\psi) \mid R[w]\cap \lf_{i}(\psi)\neq \emptyset\}$.
\item $\lf_{i}(\Box\psi) \dfn \{w\in \lf_{i-1}(\Box\psi) \mid R[w]\subseteq \lf_{i}(\psi)\}$.
\end{enumerate}

\item For even $i\in\mathbb{N}$ larger than $0$, define $f_i(\psi)$ top to bottom as follows:
\begin{enumerate}
\item Define $\lf_{i}(\varphi) \dfn \lf_{i-1}(\varphi) \cap T$.
\item If $\psi=\theta \land \gamma$, define $\lf_i(\theta) \dfn \lf_i(\gamma) \dfn \lf_{i}(\theta\land\gamma)$.
\item If $\psi=\theta \lor \gamma$, define $\lf_i(\theta) \dfn  \lf_{i-1}(\theta)\cap\lf_{i}(\theta\lor\gamma)$ and $\lf_i(\gamma) \dfn  \lf_{i-1}(\gamma)\cap\lf_{i}(\theta\lor\gamma)$.
\item If $\psi=\Diamond\theta$, define $\lf_i(\theta) \dfn \lf_{i-1}(\theta) \cap R[\lf_i(\Diamond\theta)]$.
\item If $\psi=\Box\theta$, define $\lf_i(\theta) \dfn \lf_{i-1}(\theta) \cap R[\lf_i(\Box\theta)]$.
\end{enumerate}
\end{itemize}
By a straightforward induction on $i$, we can prove that $f_{i+1}(\psi)\subseteq f_i(\psi)$ holds for every $\psi\in\subOcc(\varphi)$. 
The only nontrivial induction step is that for $f_{i+1}(\theta)$ and $f_{i+1}(\gamma)$, when $i+1$ is even and $\psi=\theta \land \gamma$. 
To deal with this step, observe that, by the definition of $f_{i+1}$ and $f_i$, we have 
$f_{i+1}(\theta)=f_{i+1}(\gamma)=f_{i+1}(\psi)$ and $f_i(\psi)\subseteq f_i(\theta),f_i(\gamma)$,
and by the induction hypothesis on $\psi$, we have $f_{i+1}(\psi)\subseteq f_i(\psi)$.

It follows that there is an integer $j\le 2\cdot|W|\cdot |\varphi|$ such that $f_{j+2}=f_{j+1}=f_j$.
We denote this fixed point $f_j$ of the sequence $f_0,f_1,f_2,\ldots$ by $f_\infty$. 
By Lemma \ref{maxlemma} the outcome of $\maxsub(\cdot,\cdot)$ is computable in polynomial time with respect to its input. 
That being, clearly $f_{i+1}$ can be computed from $f_i$ in polynomial time with respect to $|W|+|\varphi|$. 
On that account $f_\infty$ is also computable in polynomial time with respect to $|W|+|\varphi|$.

We will next prove by induction on $\psi\in\subOcc(\varphi)$ that $\mK,f_\infty(\psi)\models\psi$. 
Note first that there is an odd integer $i$ and an even integer $j$ such that $f_\infty=f_i=f_j$.
\begin{enumerate}
\item If $\psi$ is a literal, the claim is true since $f_\infty=f_i$ and  
$\lf_{i}(\psi)=\maxsub(\lf_{i-1}(\psi),\psi)$. 
\item Assume next that $\psi=\theta\land\gamma$, and the claim holds for $\theta$ and $\gamma$. 
	Since $f_\infty=f_j$, we have $f_\infty(\psi)=
f_\infty(\theta)=f_\infty(\gamma)$, as a result, by induction hypothesis, $\mK,f_\infty(\psi)\models\theta\land\gamma$, as desired.
\item In the case $\psi=\theta\lor\gamma$, we obtain the claim $\mK,f_\infty(\psi)\models\psi$ by using the induction hypothesis, and the observation that 
$f_\infty(\psi)=f_i(\psi)=f_i(\theta)\cup f_i(\gamma)=f_\infty(\theta)\cup f_\infty(\gamma)$.
\item Assume then that $\psi=\Diamond\theta$. Since $f_\infty=f_i$, we have
$f_\infty(\psi) = \{w\in f_{i-1}(\psi) \mid R[w]\cap f_\infty(\theta)\neq \emptyset\}$, as a consequence $f_\infty(\psi)\subseteq R^{-1}[f_\infty(\theta)]$. 
On the other hand, since $f_\infty=f_j$, we have $f_\infty(\theta) = f_{j-1}(\theta) \cap R[f_\infty(\psi)]$, for this reason
$f_\infty(\theta)\subseteq R[f_\infty(\psi)]$. 
Accordingly, $f_\infty(\psi)[R]f_\infty(\theta)$, and using the induction hypothesis, we see that $\mK,f_\infty(\psi)\models\psi$.
\item Assume finally that $\psi=\Box\theta$.  
Since $f_\infty=f_i$, we have $R[f_\infty(\psi)]\subseteq f_\infty(\theta)$. 
On the other hand, since $f_\infty=f_j$, we have $f_\infty(\theta)\subseteq R[f_\infty(\psi)]$. 
This shows that $f_\infty(\theta)=R[f_\infty(\psi)]$, that being the case by the induction hypothesis, $\mK,f_\infty(\psi)\models\psi$.
\end{enumerate}

In particular, if $f_\infty(\varphi)=T$, then $\mK,T\models\varphi$. 
Consequently, to complete the proof of the lemma, it suffices to prove that the converse implication is true, as well. 
To prove this, assume that $\mK,T\models\varphi$.
Then for each $\psi\in\subOcc(\varphi)$, there is a team $T_\psi$ such that 
\begin{enumerate}
\item $T_\varphi=T$.
\item If $\psi=\theta \land \gamma$, then $T_\psi=T_\theta=T_\gamma$.
\item If $\psi=\theta \lor \gamma$, then $T_\psi=T_\theta\cup T_\gamma$.
\item If $\psi=\Diamond\theta$, then $T_\psi[R] T_\theta$.
\item If $\psi=\Box\theta$, then $T_\theta=R[T_\psi]$.
\item If $\psi$ is a literal, then $\mK,T_\psi\models\psi$.
\end{enumerate}
We prove by induction on $i$ that $T_\psi\subseteq f_i(\psi)$ for  all
$\psi\in\subOcc(\varphi)$. For $i=0$, this is obvious, since $f_0(\psi)=W$ for all $\psi$. 
Assume next that $i+1$ is odd and the claim is true for $i$.
We prove the claim $T_\psi\subseteq f_i(\psi)$ by induction on $\psi$.
\begin{enumerate}
\item If $\psi$ is a literal, then $f_{i+1}(\psi)=\maxsub(f_i(\psi),\psi)$. Since
$\mK, T_\psi\models\psi$, and by induction hypothesis, $T_\psi\subseteq f_i(\psi)$, the claim $T_\psi\subseteq f_{i+1}(\psi)$ is true.
\item Assume that $\psi=\theta\land\gamma$. 
By induction hypothesis on $\theta$ and $\gamma$, we have $T_\psi=T_\theta\subseteq f_{i+1}(\theta)$ and $T_\psi=T_\gamma\subseteq f_{i+1}(\gamma)$. 
For this reason, we get $T_\psi\subseteq f_{i+1}(\theta)\cap f_{i+1}(\gamma)=f_{i+1}(\psi)$.
\item The case $\psi=\theta\lor\gamma$ is similar to the previous one; we omit the details.
\item If $\psi=\Diamond\theta$, then 
$f_{i+1}(\psi) = \{w\in f_i(\psi) \mid R[w]\cap f_{i+1}(\theta)\neq \emptyset\}$.
By the two induction hypotheses on $i$ and $\theta$, we have
$\{w\in T_\psi \mid R[w]\cap T_\theta\neq \emptyset\}\subseteq f_{i+1}(\psi)$.
The claim follows from this, since the condition $R[w]\cap T_\theta\neq \emptyset$ holds for all $w\in T_\psi$.
\item The case $\psi=\Box\theta$ is again similar to the previous one, so  we omit the details.
\end{enumerate}
Assume then that $i+1$ is even and the claim is true for $i$. This time
we prove the claim $T_\psi\subseteq f_i(\psi)$ by top to bottom induction on $\psi$.
\begin{enumerate}
\item By assumption, $T_\varphi=T$, on that account by induction hypothesis, 
$T_\varphi\subseteq f_i(\varphi)\cap T=f_{i+1}(\varphi)$.
\item Assume that $\psi=\theta\land\gamma$. By induction hypothesis on $\psi$,  we have $T_\psi\subseteq f_{i+1}(\psi)$. 
Since $T_\psi=T_\theta=T_\gamma$ and $f_{i+1}(\psi)=f_{i+1}(\theta)=f_{i+1}(\gamma)$, this implies that $T_\theta\subseteq f_{i+1}(\theta)$ and $T_\gamma\subseteq f_{i+1}(\gamma)$.
\item Assume that $\psi=\theta\lor\gamma$. Using the fact that $T_\theta\subseteq T_\psi$, and the two induction hypotheses on $i$ and $\psi$, we see that $T_\theta\subseteq f_i(\theta)\cap T_\psi \subseteq f_i(\theta)\cap f_{i+1}(\psi)= f_{i+1}(\theta)$.
Similarly, we see that $T_\gamma\subseteq f_{i+1}(\gamma)$.
\item Assume that $\psi=\Diamond\theta$.
By the induction hypothesis on $i$, we have 
$T_\theta\subseteq f_i(\theta)$, and by the induction hypothesis on $\psi$, we have 
$T_\theta\subseteq R[T_\psi]\subseteq R[f_{i+1}(\psi)]$. Accordingly, we see that 
$T_\theta\subseteq f_i(\theta) \cap R[f_{i+1}(\psi)]=f_{i+1}(\theta)$.
\item The case $\psi=\Box\theta$ is similar to the previous one;
we omit the details.
\end{enumerate}
It follows now that $T=T_\varphi\subseteq f_\infty(\varphi)$. Since $f_\infty(\varphi)
\subseteq f_2(\varphi)\subseteq T$, we conclude that $f_\infty(\varphi)=T$.
This completes the proof of the implication $\mK,T\models\varphi\;\Rightarrow\;
f_\infty(\varphi)=T$.
\end{proof}
\iflong\else
The following lemma then follows, since in the context of model checking, we may replace modal formulae that appear as parameters in inclusion atoms by fresh proposition symbols with the same extension; see Appendix \ref{A:MCEminc} for a detailed proof.
\fi
\begin{lemma}\label{Plemma2}
$\MC(\EMinc)$ under lax semantics is in $\P$.
\end{lemma}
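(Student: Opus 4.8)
The plan is to reduce $\MC(\EMinc)$ under lax semantics to $\MC(\Minc)$ under lax semantics, which is in $\P$ by Lemma~\ref{Plemma}, by a polynomial-time preprocessing step that eliminates the modal subformulae occurring inside inclusion atoms. Suppose we are given a Kripke model $\mK=(W,R,V)$, a team $T$ of $\mK$, and a formula $\varphi\in\EMinc$. Let $\theta_1,\dots,\theta_k$ be a list of all $\ML$-formulae that occur as a parameter in some inclusion atom of $\varphi$ (there are at most $|\varphi|$ of them). For each $i\le k$ introduce a fresh proposition symbol $r_i\notin\Phi$, and define an extended valuation $V'$ on $\Phi\cup\{r_1,\dots,r_k\}$ by setting $V'(p)=V(p)$ for $p\in\Phi$ and $V'(r_i)=\{w\in W\mid \mK,\{w\}\modelsML\theta_i\}$; write $\mK'=(W,R,V')$. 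Finally let $\varphi'$ be the $\Minc$-formula obtained from $\varphi$ by replacing, for each $i$, every occurrence of $\theta_i$ inside an inclusion atom by $r_i$. Note $\varphi'$ really is in $\Minc$, since now all parameters of inclusion atoms are proposition symbols.

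The main claim to verify is that $\mK,T\models\varphi$ iff $\mK',T\models\varphi'$. This follows by a routine induction on the structure of $\varphi$. The only clause that is not literally identical on the two sides is the inclusion atom: by the semantics of inclusion atoms and the definition of $V'$, for any subteam $S$ of $T$ we have $\mK,S\models \theta_{i_1},\dots,\theta_{i_n}\subseteq\theta_{j_1},\dots,\theta_{j_n}$ iff for all $w\in S$ there is $v\in S$ with $w\in V'(r_{i_l})\Leftrightarrow v\in V'(r_{j_l})$ for all $l$, i.e. iff $\mK',S\models r_{i_1},\dots,r_{i_n}\subseteq r_{j_1},\dots,r_{j_n}$; here we use that $\mK$ and $\mK'$ have the same frame, so evaluating the other connectives produces the same decomposition of teams on both sides. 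For the literal, conjunction, disjunction, $\Diamond$ and $\Box$ clauses the induction step is immediate since $V'$ agrees with $V$ on $\Phi$ and the frame is unchanged.

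It remains to check the whole construction is carried out in polynomial time. Computing each set $V'(r_i)=\{w\in W\mid\mK,\{w\}\modelsML\theta_i\}$ is ordinary (single-world, Tarskian) modal model checking, which is in $\P$ \cite{clemsi86,sc02}, and there are at most $|\varphi|$ such formulae, each of size at most $|\varphi|$; so $\mK'$ is computable in polynomial time with respect to $|W|+|\varphi|$. The syntactic substitution yielding $\varphi'$ is clearly polynomial (indeed logspace). Thus on input $(\mK,T,\varphi)$ we compute $(\mK',T,\varphi')$ in polynomial time and then invoke the polynomial-time algorithm of Lemma~\ref{Plemma}, giving $\MC(\EMinc)\in\P$ under lax semantics. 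The only point requiring any care — and hence the "hard part" — is making the inclusion-atom induction step precise, namely observing that flatness (Proposition~\ref{flatness}) guarantees $\mK,\{w\}\models\theta_i$ depends only on $w$, so replacing $\theta_i$ by the proposition $r_i$ with exactly that extension is faithful; everything else is bookkeeping.
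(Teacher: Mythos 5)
Your proposal is correct and follows essentially the same route as the paper's proof: replace the $\ML$-parameters of inclusion atoms by fresh propositions whose extensions are computed by standard (Tarskian) modal model checking, and then invoke Lemma~\ref{Plemma} for $\MC(\Minc)$ under lax semantics. The equivalence argument via flatness and the polynomial-time analysis match the paper's reasoning.
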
\iflong
\begin{proof}
The result follows by a polynomial time reduction to the model checking problem of $\Minc$: 
Let $(W,R,V),T$ be a team pointed Kripke model and $\varphi$ be a formula of $\EMinc$. 
Let $\varphi_1,\dots,\varphi_n$ be exactly those subformulae of $\varphi$ that occur as a parameter of some inclusion atom in $\varphi$ and let $p_1,\dots,p_n$ be distinct fresh proposition symbols. 
Let $V'$ be a valuation defined as follows
\[
V'(p):=
\begin{cases}
\{w \in W\mid (W,R,V),w \modelsML \varphi_i \} & \text{ if $p=p_i$},\\
V(p) &\text{ otherwise}.
\end{cases}
\]
Let $\varphi^*$ denote the formula obtained from $\varphi$ by simultaneously substituting each $\varphi_i$ by $p_i$. 
It is easy to check that $(W,R,V),T\models \varphi$ if and only if $(W,R,V'),T\models \varphi^*$. 
Moreover, $\varphi^*$ can be clearly computed form $\varphi$ in polynomial time. 
Likewise, $V'$ can be computed in polynomial time; since each $\varphi_i$ is a modal formula the truth set of that formula in $(W,R,V)$ can be computed in polynomial time by the standard labelling algorithm used in modal logic (see e.g., \cite{blackburn}), and the numbers of such computations is bounded above by the size of $\varphi$. 
As a consequence the result follows form Lemma \ref{Plemma}.
\end{proof}\fi

By combining Lemmas \ref{Phard}, \ref{Plemma}, and \ref{Plemma2}, we obtain the following theorem.
\begin{theorem}\label{mc-minc-pc}
$\MC(\Minc)$ and $\MC(\EMinc)$ under lax semantics are $\P$-complete w.r.t. $\leqlogm$.
\end{theorem}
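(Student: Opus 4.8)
The plan is to assemble the theorem directly from the three lemmas cited in the statement, so the proof is essentially a bookkeeping argument with one small observation to be checked. First I would note that $\P$-completeness means showing both membership in $\P$ and $\P$-hardness under $\leqlogm$ reductions, and that we have pieces for each.

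For membership: Lemma \ref{Plemma} gives $\MC(\Minc)\in\P$ directly, and Lemma \ref{Plemma2} gives $\MC(\EMinc)\in\P$ directly. So there is nothing further to do on the upper bound side beyond invoking these two results.

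For hardness: Lemma \ref{Phard} establishes that $\MC(\PLinc)$ under lax semantics is $\P$-hard w.r.t.\ $\leqlogm$. The one thing that genuinely needs to be said is that $\PLinc$ embeds into $\Minc$ (and hence into $\EMinc$) in a way that preserves the model checking answer and is computable in logspace: given a propositional team $X$ over domain $D$, view it as a Kripke model $\mK_X$ whose worlds are the assignments of $X$ (with, say, empty accessibility relation and valuation $V(p)=\{s\in X\mid s(p)=1\}$), and view the propositional team as the team $T=X$ of $\mK_X$; a $\PLinc$-formula $\varphi$ is syntactically already a $\Minc$-formula, and an induction on $\varphi$ (using flatness of $\ML$-literals, Proposition \ref{flatness}, and the fact that modal connectives other than $\Diamond,\Box$ behave identically in both semantics) shows $X\models\varphi$ iff $\mK_X,T\models\varphi$. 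This reduction is trivially logspace-computable. Composing with the reduction from $\MCVP$ of Lemma \ref{Phard} yields $\P$-hardness for both $\MC(\Minc)$ and $\MC(\EMinc)$ under lax semantics. Combining the upper and lower bounds gives the theorem.

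The main obstacle is not really an obstacle: it is simply making precise the sense in which ``$\PLinc$ is essentially a fragment of $\Minc$'', i.e.\ spelling out the logspace embedding of propositional team models into Kripke team models and checking by a routine induction that satisfaction is preserved. Everything else is a direct citation of Lemmas \ref{Phard}, \ref{Plemma}, and \ref{Plemma2}. One should also remark that the proof is uniform in the two logics $\Minc$ and $\EMinc$, since the hardness reduction lands in $\PLinc\subseteq\Minc\subseteq\EMinc$ and the membership results cover both.
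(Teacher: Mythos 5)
Your proposal is correct and follows essentially the same route as the paper, which obtains the theorem by combining Lemmas \ref{Phard}, \ref{Plemma}, and \ref{Plemma2}, relying (as you do, just more explicitly) on the observation that $\PLinc$ is essentially a fragment of $\Minc$ so the propositional hardness reduction transfers. Spelling out the logspace embedding of a propositional team into a team-pointed Kripke model is a fine addition but not a departure from the paper's argument.
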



\begin{theorem}\label{mc-minc-npc}
$\MC(\Minc)$ and $\MC(\EMinc)$ under strict semantics are $\NP$-complete w.r.t. $\leqlogm$.
\end{theorem}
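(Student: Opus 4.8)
The plan is to establish $\NP$-hardness and membership in $\NP$ separately. Hardness for $\MC(\Minc)$ under strict semantics is already provided by Lemma~\ref{mcisnph} (the reduction from set splitting), and since $\Minc$ embeds into $\EMinc$, the same lower bound applies to $\EMinc_s$; the only residual point for $\EMinc$ is that allowing $\ML$-formulae as parameters of inclusion atoms does not push the problem outside $\NP$, which will follow from the same parameter-replacement trick used in Lemma~\ref{Plemma2}. So the substantive task is to show $\MC(\Minc)$ under strict semantics is \emph{in} $\NP$.

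For membership, I would describe a nondeterministic polynomial-time algorithm that guesses a witnessing ``evaluation tree'' of subteams. Concretely, recall the analysis in the proof of Lemma~\ref{Plemma}: $\mK,T\models\varphi$ iff there is an assignment $\psi\mapsto T_\psi$ of teams to occurrences $\psi\in\subOcc(\varphi)$ satisfying the six structural conditions listed there, \emph{except} that in strict semantics the disjunction clause becomes $T_\psi = T_\theta\cup T_\gamma$ with $T_\theta\cap T_\gamma=\emptyset$, and the diamond clause requires a function $f\colon T_\psi\to W$ with $wRf(w)$ for all $w$ and $T_\theta = f[T_\psi]$. The algorithm nondeterministically guesses, for each occurrence $\psi$, the set $T_\psi\subseteq W$ (and for each diamond occurrence the function $f$), then deterministically checks in polynomial time that $T_\varphi=T$, that the literal occurrences satisfy $\mK,T_\psi\models\psi$ (flatness, Proposition~\ref{flatness}, makes this a polynomial check; inclusion atoms are checked directly against the semantic clause over the finite team $T_\psi$), and that each connective/modality clause holds. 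Since $|\subOcc(\varphi)|\le|\varphi|$ and each $T_\psi$ and each $f$ has size polynomial in $|W|+|\varphi|$, the guess and the verification are both polynomial, so the procedure runs in $\NP$.

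The correctness of this algorithm rests on the observation that $\mK,T\models_s\varphi$ holds if and only if such a witnessing family $(T_\psi)_{\psi}$ (with the strict disjunction and diamond conditions) exists — this is just the standard unfolding of the team-semantic satisfaction clauses, proved by a routine induction on $\varphi$ in both directions, exactly as the recursion in Lemma~\ref{Plemma} does for the lax case. For $\EMinc_s$ I would then note, as in Lemma~\ref{Plemma2}, that a formula $\varphi\in\EMinc$ can be transformed in polynomial time into an equisatisfiable-at-$(\mK,T)$ instance of $\Minc$ by introducing fresh proposition symbols $p_i$ for each $\ML$-subformula $\varphi_i$ occurring as a parameter of an inclusion atom, extending the valuation by the (polynomial-time computable) truth sets of the $\varphi_i$ via the usual modal labelling algorithm, and substituting $p_i$ for $\varphi_i$; this reduction preserves strict-semantics satisfaction, so $\MC(\EMinc_s)$ reduces to $\MC(\Minc_s)$ and hence also lies in $\NP$. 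Combining the two bounds with Lemma~\ref{mcisnph} yields the theorem.

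The main obstacle is making precise that the ``witnessing family of subteams'' characterisation is tight under strict semantics — in particular, that guessing only the \emph{sets} $T_\psi$ (rather than, say, tracking multiplicities or more refined data) suffices, and that the disjointness and the diamond-function requirements are faithfully captured by the polynomial-time verification step. This is conceptually the same content as the characterisation implicit in the proof of Lemma~\ref{Plemma}, adapted to strict disjunction and strict diamond, so I expect it to be routine but it is the step that needs to be written carefully; everything else (the complexity bookkeeping, the $\EMinc$-to-$\Minc$ reduction, and the hardness transfer from Lemma~\ref{mcisnph}) is straightforward.
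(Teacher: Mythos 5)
Your proposal is correct and follows essentially the same route as the paper: $\NP$-hardness is inherited from Lemma~\ref{mcisnph}, and membership is shown by a nondeterministic guess-and-verify evaluation of the formula (guessing the strict partitions for disjunctions and successor choices for diamonds, with boxes and atoms checked deterministically in polynomial time). The only cosmetic difference is that you handle $\EMinc$ by the parameter-replacement reduction of Lemma~\ref{Plemma2}, whereas the paper runs the brute-force algorithm on $\EMinc$ directly; both are fine.
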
\iflong
\begin{proof}
The  $\NP$-hardness follows from the propositional case, i.e., by Lemma \ref{mcisnph}. 

The obvious brute force algorithm for model checking for $\EMinc$ works in $\NP$: 
For disjunctions and diamonds, we use nondeterminism to guess the correct partitions or successor teams, respectively. 
Conjunctions are dealt sequentially and for boxes the unique successor team can be computed by brute force in quadratic time. 
Checking whether a team satisfies an inclusion atom or a (negated) proposition symbol can be computed by brute force in polynomial time (this also follows directly from Lemma \ref{maxlemma}).
\end{proof}\fi

\iflong
\subsection{Dependency quantifier Boolean formulae}
Deciding whether a given quantified Boolean formula (qBf) is valid is a canonical $\PSPACE$-complete problem. \emph{Dependency quantifier Boolean formulae} introduced by Peterson et~al. \cite{Peterson2001} are variants of qBfs for which the corresponding decision problem is $\NEXPTIME$-complete. 
In this section, we define the related $\coNEXPTIME$-complete complementary problem. 
For the definitions related to dependency quantifier Boolean formulae, we follow  Virtema \cite{virtema14}.
 
QBfs extend propositional logic by allowing a prenex quantification of proposition symbols.
Formally, the set of \emph{qBfs} is built from formulae of propositional logic by the following grammar:
$$
\varphi \ddfn \exists p\, \varphi \mid \forall p\, \varphi \mid \theta,
$$
where $p$ is a propositional variable (i.e., a proposition symbol) and $\theta$ is formula of propositional logic.
The semantics for qBfs is defined via assignments $s\colon\mathrm{PROP}\to \{0,1\}$ in the obvious way.
When $C$ is a set of propositional variables, we denote by $\vec{c}$ the canonically ordered tuple of the variables in the set $C$.
When $p$ is a propositional variable and $b\in \{0,1\}$ is a truth value, we denote by $s(p\mapsto b)$ the modified assignment defined as follows:
\[
s(p\mapsto b)(q)\dfn
\begin{cases}
b & \text{if $q = p$},\\
s(q) &\text{otherwise}.
\end{cases}
\]

A formula that does not have any free variables is called \emph{closed}. 
We denote by $\QBF$ the set of exactly all closed quantified Boolean formulae.

\begin{proposition}[\cite{Stockmeyer:1973}]\label{QBFPSPACE}
The validity problem of $\QBF$ is $\PSPACE$-complete w.r.t.\ $\leqlogm$.
\end{proposition}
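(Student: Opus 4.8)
The plan is to prove the two halves separately; both are classical (this is essentially Stockmeyer's theorem). For \emph{membership in $\PSPACE$}, I would describe the obvious recursive evaluation procedure: given a closed qBf $\varphi$, peel quantifiers off the front one at a time. To evaluate $Q p\,\psi$ under a partial assignment $s$, where $Q\in\{\exists,\forall\}$, recursively evaluate $\psi$ under $s(p\mapsto 0)$ and under $s(p\mapsto 1)$ and combine the two Boolean answers by $\lor$ if $Q=\exists$ and by $\land$ if $Q=\forall$; once the prefix is exhausted, evaluate the quantifier-free propositional matrix $\theta$ under the now-complete assignment in linear time. The recursion depth equals the number of quantifiers, hence is at most $|\varphi|$, and at each level we keep only one extra bit and one bit of the assignment, so the algorithm runs in space $O(|\varphi|)$; thus $\VAL(\QBF)\in\PSPACE$.

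For \emph{$\PSPACE$-hardness}, let $L\in\PSPACE$ be decided by a deterministic Turing machine $M$ running in space $p(n)$ for some polynomial $p$; then on an input $w$ of length $n$ the machine has at most $2^{q(n)}$ distinct configurations for a suitable polynomial $q$, and a configuration is encoded by a block $\vec c$ of $q(n)$ propositional variables. By standard preprocessing we may assume $M$ has a unique initial configuration on $w$ and a unique accepting configuration, described by polynomial-size propositional formulae $\mathrm{Init}_w(\vec c)$ and $\mathrm{Acc}(\vec c)$. The heart of the reduction is to build, by recursion on $i\le q(n)$, qBfs $\rho_i(\vec c,\vec c')$ that are satisfied by an assignment $s$ exactly when the configuration coded by $s(\vec c)$ reaches the one coded by $s(\vec c')$ in at most $2^i$ steps of $M$. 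For $i=0$ this is ``equal, or related by one step of $M$'', a polynomial-size propositional formula read directly off the transition table. For the step, the naive choice $\rho_{i+1}(\vec c,\vec c'):=\exists\vec d\,\big(\rho_i(\vec c,\vec d)\land\rho_i(\vec d,\vec c')\big)$ doubles the size at each level and is therefore useless; instead one uses the Savitch-style fold
\[ \rho_{i+1}(\vec c,\vec c') := \exists\vec d\,\forall\vec x\,\forall\vec y\,\big( \big((\vec x,\vec y)=(\vec c,\vec d) \lor (\vec x,\vec y)=(\vec d,\vec c')\big) \rightarrow \rho_i(\vec x,\vec y)\big), \]
where $(\vec u,\vec v)=(\vec u',\vec v')$ abbreviates the polynomial-size quantifier-free statement that the two blocks agree coordinatewise. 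Now $\rho_i$ occurs only once in $\rho_{i+1}$, so $|\rho_{i+1}|=|\rho_i|+O(q(n))$ and hence $|\rho_{q(n)}|=O(q(n)^2)$. The output of the reduction is $\exists\vec c\,\exists\vec c'\,\big(\mathrm{Init}_w(\vec c)\land\mathrm{Acc}(\vec c')\land\rho_{q(n)}(\vec c,\vec c')\big)$. Every occurrence of $\rho_i$ is in positive position, so pulling all quantifiers to the front (renaming the block variables by their recursion level) turns this into a genuine closed qBf, and it is valid iff $M$ accepts $w$. Since binary counters for the level $i$ and for the coordinates of the blocks suffice, the whole formula is produced by a $\LOGSPACE$ transducer, so the reduction is a $\leqlogm$-reduction.

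The only genuinely non-obvious ingredient, and thus the main obstacle, is the size control in the construction of the $\rho_i$: the straightforward ``compose the two halves'' recursion blows up exponentially, and one must spend the extra pair of universal quantifiers $\forall\vec x\,\forall\vec y$ to collapse the two recursive calls into a single syntactic occurrence of $\rho_i$ -- precisely the idea behind Savitch's theorem. Everything else (writing $\rho_0$, $\mathrm{Init}_w$, $\mathrm{Acc}$, and the block-equality formulae as polynomial-size propositional formulae, checking that all $\rho_i$ stay in positive position so prenexing is sound, and verifying the logspace bound via counters) is routine bookkeeping.
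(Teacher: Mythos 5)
The paper offers no proof of this proposition—it is quoted from Stockmeyer's work—and your argument is exactly the classical one behind that citation: recursive evaluation of the quantifier prefix in polynomial space for membership, and the Stockmeyer--Meyer hardness reduction that controls formula size by using the extra universal block $\forall\vec x\,\forall\vec y$ so that $\rho_i$ occurs only once inside $\rho_{i+1}$. Your proof is correct as it stands; the only cosmetic point is that the paper's $\QBF$ syntax requires the matrix in negation normal form, so the implications and block-equality abbreviations must be unfolded accordingly, which is routine and preserves the logspace bound.
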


A \emph{simple qBf} is a closed qBf of the type
$
\varphi \dfn \forall p_1 \cdots \forall p_{n}\exists q_1\cdots \exists q_k\theta,
$
where $\theta$ is a propositional formula and the propositional variables $p_i,q_j$ are all distinct. 
Any tuple $(C_1,\dots,C_k)$ such that $C_1,\dots,C_k\subseteq \{p_1,\dots,p_n\}$ is called a \emph{constraint} for $\varphi$.
%
%
%
Intuitively, a constraint $C_j = \{ p_1, p_3 \}$ can be seen as a dependence atom $\dep{p_1,p_3, q_j}$\footnote{See Section \ref{sec:further} for a definition.}. 
A constraint $C_j = \{ p_1, p_3 \}$ can be also interpreted to indicate that the semantics of $\exists g_j$ is defined, if skolemised, via a Skolem function $f_j(p_1,p_3)$.
\begin{definition}
A simple qBf $\forall p_1 \cdots \forall p_{n}\exists q_1\cdots \exists q_k\theta$ is \emph{valid under a constraint} $(C_1,\dots,C_k)$, if there exist functions $f_1,\dots,f_k$ with
$
f_i\colon \{0,1\}^{\lvert C_i\rvert}\to \{0,1\}
$
such that for each assignment $s\colon \{p_1,\dots,p_n\}\to\{0,1\}$,
$
s(q_1 \mapsto f_1(s(\vec{c}_1)), \dots, q_k \mapsto f_k(s(\vec{c}_k)))\models \theta.
$
\end{definition}

A \emph{dependency quantifier Boolean formula} is a pair $(\varphi, \vec C)$, where $\varphi$ is a simple quantified Boolean formula and $\vec{C}$ is a constraint for $\varphi$. 
We say that $(\varphi, \vec C)$ is \emph{valid}, if $\varphi$ is valid under the constraint $\vec C$.
Let $\DQBF$ denote the set of all dependency quantifier Boolean formulae.
\begin{proposition}[{\cite[5.2.2]{Peterson2001}}]\label{TDQBF problem}
The validity problem of $\DQBF$ is $\NEXPTIME$-complete w.r.t.\ $\leqlogm$.
\end{proposition}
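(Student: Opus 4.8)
The plan is to treat the two directions of the completeness statement separately: membership in $\NEXPTIME$ is routine, while the entire weight of the argument lies in the hardness proof. For membership, note that a witness for the validity of a dependency quantifier Boolean formula $(\varphi,\vec C)$ with $\varphi = \forall p_1\cdots\forall p_n\exists q_1\cdots\exists q_k\,\theta$ and $\vec C=(C_1,\dots,C_k)$ is exactly a family of Skolem functions $f_1,\dots,f_k$ with $f_i\colon\{0,1\}^{|C_i|}\to\{0,1\}$. Each $f_i$ may be written down as a truth table of $2^{|C_i|}\le 2^n$ entries, so the whole witness has size at most $k\cdot 2^n$, which is exponential in $|(\varphi,\vec C)|$. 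A nondeterministic machine first guesses such a witness and then iterates over all $2^n$ assignments $s\colon\{p_1,\dots,p_n\}\to\{0,1\}$, for each of them reading off $f_i(s(\vec{c}_i))$ and checking, in time polynomial in $|\theta|$, that $s(q_1\mapsto f_1(s(\vec{c}_1)),\dots,q_k\mapsto f_k(s(\vec{c}_k)))\models\theta$. Both the guessing phase and the $2^n$ verification steps run in time $2^{O(n)}$, so $\VAL(\DQBF)\in\NEXPTIME$.

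For hardness I would reduce, in logarithmic space, the acceptance problem of an arbitrary nondeterministic Turing machine $M$ running in time $2^{n^c}$ to $\VAL(\DQBF)$. The decisive feature of the dependency quantifier is that it lets polynomially many quantified variables describe an exponentially large object, namely an accepting computation tableau of $M$ on the input $x$. I would index a tableau cell (a time step together with a tape position) by $2n^c$ bits, take a constant number of universal index blocks $\vec{p}^{\,(1)},\dots,\vec{p}^{\,(r)}$, each of $2n^c$ bits, together with a few auxiliary universal bits, and take existential content blocks $\vec{q}^{\,(1)},\dots,\vec{q}^{\,(r)}$ of constant width, where $\vec{q}^{\,(\ell)}$ is constrained to depend only on $\vec{p}^{\,(\ell)}$ and is intended to carry the content (tape symbol plus head/state information) of the cell addressed by $\vec{p}^{\,(\ell)}$. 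The propositional matrix would then assert: a consistency condition $\vec{p}^{\,(\ell)}=\vec{p}^{\,(\ell')}\to\vec{q}^{\,(\ell)}=\vec{q}^{\,(\ell')}$ forcing all blocks to read off the same content function; that the row for time $0$ spells out $x$ followed by blanks and the initial state; that whenever the index blocks address a legitimate local ``window'' (a cell at time $t$ plus its relevant neighbours at times $t$ and $t+1$) the associated contents obey the transition relation of $M$; and that some cell carries an accepting state. Since the dependencies are restricted exactly to the matching index block, a legal choice of Skolem functions is precisely an encoding of an accepting computation of $M$ on $x$, so $(\varphi,\vec C)$ is valid iff $M$ accepts $x$, and the construction is clearly computable in logarithmic space.

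The main obstacle is the window condition: the matrix $\theta$ is a single propositional formula and can refer to only one value of each existential variable at a time, whereas a Turing-machine window involves several cells simultaneously. This is where the several independent index blocks earn their keep — together with the consistency condition they force all of them to read the same content function, the auxiliary universal bits then choose which block plays the role of which cell of the window, and successor- and neighbour-relations on $n^c$-bit numbers, all expressible by polynomial-size propositional formulae over the index bits, pin down exactly when a tuple of indices really forms a window. Getting this bookkeeping, the treatment of boundary cells, and the initial and accepting conditions precisely right is the laborious part of the argument; since the result is classical, I would ultimately defer to the construction of Peterson et~al.\ \cite{Peterson2001} for the full details.
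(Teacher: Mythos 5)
Your proposal is correct, but note that the paper offers no proof of this proposition at all: it is quoted directly from Peterson et~al.\ \cite[5.2.2]{Peterson2001}, and your sketch (membership by guessing the Skolem functions as exponential-size truth tables and checking all $2^n$ assignments; hardness by encoding an exponential accepting tableau via several universally quantified index blocks, existential content blocks constrained to depend only on their own block, a consistency clause forcing all blocks to realise one content function, and window clauses for the transition relation) is essentially the argument of that cited source. The one small wrinkle is your accepting condition ``some cell carries an accepting state'', which is not directly expressible in the purely universal matrix; the standard fix is to normalise $M$ so that acceptance forces a canonical final configuration and to check the content of a fixed cell address at the final time step.
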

\begin{definition}\label{def:nonvalid}
Given a simple qBf $\forall p_1 \cdots \forall p_{n}\exists q_1\cdots \exists q_k\theta$, we say it is \emph{non-valid under a constraint} $(C_1,\dots,C_k)$, if for all functions $f_1,\dots,f_k$ with
$
f_i\colon \{0,1\}^{\lvert C_i\rvert}\to \{0,1\}, 
$
there exists an assignment $s\colon \{p_1,\dots,p_n\}\to\{0,1\}$ such that
$s(q_1 \mapsto f_1(s(\vec{c}_1)), \dots, q_k \mapsto f_k(s(\vec{c}_k)))\not\models \theta.
$
\end{definition}

It is straightforward to see that non-validity problem of $\DQBF$ is the complement problem of the validity problem of $\DQBF$. 
Accordingly, the following corollary follows.
\begin{corollary}\label{nonTDQBF problem}
The non-validity problem of $\DQBF$ is $\coNEXPTIME$-complete w.r.t.\ $\leqlogm$.
\end{corollary}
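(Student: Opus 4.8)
The statement to prove is Corollary~\ref{nonTDQBF problem}: the non-validity problem of $\DQBF$ is $\coNEXPTIME$-complete with respect to $\leqlogm$.

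The plan is to obtain this directly as the logarithmic-space complement of Proposition~\ref{TDQBF problem}, which asserts that the validity problem of $\DQBF$ is $\NEXPTIME$-complete with respect to $\leqlogm$. First I would argue that the non-validity problem is literally the complement of the validity problem \emph{as sets of instances}: an instance $(\varphi,\vec C)$ is a well-formed dependency quantifier Boolean formula in either case, and by Definition~\ref{def:nonvalid} together with the Definition preceding Proposition~\ref{TDQBF problem}, $(\varphi,\vec C)$ is non-valid if and only if it is not valid — the quantifier alternation "there exist Skolem functions such that for all assignments" is negated to "for all Skolem functions there exists an assignment", which is exactly the definition of non-validity. (One should note the harmless convention that syntactically malformed inputs are rejected by both problems, so the complement is taken within the set of syntactically correct instances, which is itself logspace-decidable.) Hence $\mathrm{non\text{-}VAL}(\DQBF) = \overline{\mathrm{VAL}(\DQBF)}$.

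Next I would invoke the standard fact that complexity classes defined by nondeterministic time bounds closed under the relevant resources are complemented in the obvious way: since $\coNEXPTIME$ is by definition $\{\,\overline{L} : L\in\NEXPTIME\,\}$, and $\VAL(\DQBF)\in\NEXPTIME$ by Proposition~\ref{TDQBF problem}, we get $\mathrm{non\text{-}VAL}(\DQBF)\in\coNEXPTIME$. For hardness, I would use that $\leqlogm$-reductions are closed under complementation of both the source and target problem: if $L$ is $\NEXPTIME$-hard via $\leqlogm$, then for any $L'\in\coNEXPTIME$ we have $\overline{L'}\in\NEXPTIME$, so $\overline{L'}\leqlogm \VAL(\DQBF)$ via some logspace $f$, and then $x\in L' \iff x\notin\overline{L'} \iff f(x)\notin\VAL(\DQBF) \iff f(x)\in\mathrm{non\text{-}VAL}(\DQBF)$, so the same $f$ witnesses $L'\leqlogm \mathrm{non\text{-}VAL}(\DQBF)$. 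Thus $\mathrm{non\text{-}VAL}(\DQBF)$ is $\coNEXPTIME$-hard, and combined with membership it is $\coNEXPTIME$-complete.

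There is essentially no obstacle here; the only point requiring a word of care is the treatment of ill-formed inputs, which is dispatched by observing that recognizing a syntactically valid $\DQBF$ instance is decidable in logarithmic space, so one may assume without loss of generality that all inputs are well-formed and the two problems genuinely partition this set. I would therefore keep the proof to a couple of sentences, citing Proposition~\ref{TDQBF problem} and the preceding remark that non-validity is the complement of validity, and stating the closure-under-complement argument for both membership and $\leqlogm$-hardness.
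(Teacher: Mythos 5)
Your proposal is correct and follows essentially the same route as the paper, which simply observes that non-validity is the complement of validity and then appeals to Proposition~\ref{TDQBF problem}; your extra remarks on well-formed instances and closure of $\leqlogm$-reductions under complementation just make explicit what the paper leaves implicit.
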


\subsection{Complexity of the validity problem is coNEXP-hard}\label{sec:val}
In this  section we give a reduction from the non-validity problem of $\DQBF$ to the validity problem of $\Minc$. 
\else
\subsection{Complexity of validity}
The following result involves a reduction from a specific dependency variant of quantified Boolean formulas \cite{Peterson2001}. The details can be found in the appendix.
\fi
\begin{lemma}\label{lem:val-minc-lax-lower-bound} 
$\VAL(\Minc)$ under lax semantics is $\coNEXPTIME$-hard w.r.t.\ $\leqlogm$.
\end{lemma}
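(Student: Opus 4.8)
The plan is to reduce the non-validity problem of $\DQBF$ (which is $\coNEXPTIME$-complete by Corollary~\ref{nonTDQBF problem}) to $\VAL(\Minc)$ under lax semantics. Fix a dependency quantifier Boolean formula $(\varphi,\vec C)$ with $\varphi = \forall p_1\cdots\forall p_n\,\exists q_1\cdots\exists q_k\,\theta$ and constraint $\vec C = (C_1,\dots,C_k)$, where each $C_j\subseteq\{p_1,\dots,p_n\}$. We must build, in logarithmic space, a modal inclusion formula $\Psi_{(\varphi,\vec C)}$ such that $\Psi_{(\varphi,\vec C)}$ is valid (i.e.\ satisfied by \emph{every} team of \emph{every} Kripke model) if and only if $(\varphi,\vec C)$ is non-valid. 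The key idea is to exploit the fact that validity quantifies universally over models and teams: a cleverly chosen formula can force the team (together with the model) to encode an arbitrary assignment-indexed family of points, and then use inclusion atoms to say ``the putative Skolem functions respect the dependency constraints'', so that the only way for the formula to \emph{fail} on some team is for a genuine set of Skolem functions witnessing validity of $(\varphi,\vec C)$ to exist.

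First I would set up the encoding. Introduce proposition symbols $p_1,\dots,p_n$ (the universally quantified variables), $q_1,\dots,q_k$ (the existentially quantified variables), plus auxiliary symbols used to mark "levels" of a tree-like model and to flag the intended reading of each world. The plan is to write $\Psi = \bigvee_{\text{branch choices}} \cdots$ at the top level so that the disjunction, under lax semantics, lets us split the universally given team $T$ into the $2^n$ pieces indexed by the possible truth-assignments to $\vec p$; on each piece we then demand, via a conjunction of inclusion atoms of the form
\[
 \vec c_j,\, q_j \;\subseteq\; \vec c_j,\, q_j,
\]
together with "coverage" atoms $\top \subseteq (\text{assignment pattern})$, that within a block of worlds agreeing on $C_j$ the value assigned to $q_j$ is constant — this is exactly the functional-dependence condition $\dep{\vec c_j, q_j}$ expressed through inclusion atoms as the excerpt hints. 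Finally a box/diamond prefix of depth $n$ descends through the model and a copy of $\theta^{\bot}$ (the negation normal form of $\theta$) is placed at the leaves, flattening via Proposition~\ref{flatness} to ``$\theta$ fails at this world''. Validity of $\Psi$ over all models then says: no matter how a model provides worlds witnessing all $2^n$ assignments to $\vec p$ together with choices of $q_j$-values, one cannot simultaneously make those choices constraint-respecting and make $\theta$ true everywhere — i.e.\ no winning family of Skolem functions exists, i.e.\ $(\varphi,\vec C)$ is non-valid.

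I would then verify the reduction in both directions. For the easy direction: if $(\varphi,\vec C)$ is valid, pick witnessing Skolem functions $f_1,\dots,f_k$; build the Kripke model whose relevant worlds are indexed by assignments $s\colon\{p_1,\dots,p_n\}\to\{0,1\}$, with $q_j$ set to $f_j(s(\vec c_j))$ and $\theta$ therefore true at each such world, take $T$ to be the full team of these worlds, and check that $T$ falsifies $\Psi$ — the inclusion atoms hold because the $f_j$ depend only on $\vec c_j$, and $\theta^\bot$ fails everywhere, so the $\theta^\bot$-conjunct is violated; hence $\Psi$ is not valid. For the converse: if $\Psi$ is not valid, take a model and team $T$ refuting it; the top disjunction forces a splitting of $T$ and the modal prefix forces successor teams, and from the witnessing sub-teams/successor-functions one extracts, at each assignment $s$ to $\vec p$, well-defined values for the $q_j$'s that by the inclusion atoms depend only on $s(\vec c_j)$ — these define Skolem functions $f_j$, and since $\theta^\bot$ must fail (so that $\Psi$ fails), $\theta$ holds under every such $s$, witnessing validity of $(\varphi,\vec C)$. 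Logarithmic-space computability of $\Psi$ is routine: the formula is a fixed-shape combination of atoms whose indices range over the polynomially many pairs $(j, p_i\in C_j)$ and over the $2^n$ assignment patterns described symbolically by a depth-$n$ modal prefix, all emittable with a constant number of binary counters.

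The main obstacle, I expect, is the faithful simulation of the \emph{exponentially many} assignments to $\vec p$ by a \emph{polynomial-size} formula and an \emph{arbitrary} (adversarially chosen) model: under lax semantics the disjunction's split and the diamond's successor-team choice are existential, so we must engineer the auxiliary propositions and the modal depth-$n$ gadget so that \emph{every} branch of the intended $2^n$-fold split is actually forced to be populated and correctly labelled — otherwise a degenerate model with too few worlds could spuriously falsify $\Psi$ and break the "only if" direction. Getting these coverage/forcing conditions right (typically via atoms $\top\subseteq r$ for marker propositions $r$ that must therefore be realised somewhere in the team, combined with the box-clauses that propagate the markers uniformly) is the delicate heart of the construction; the rest is bookkeeping and an application of flatness of $\ML$ at the leaves.
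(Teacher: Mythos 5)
Your high-level plan (reduce from the non-validity problem of $\DQBF$, realise the $2^n$ assignments to $\vec p$ as worlds reached by a depth-$n$ modal prefix, use inclusion atoms for the constraints and $\theta$ at the leaves) is the same as the paper's, but the central mechanism you propose does not work. The atom $\vec c_j, q_j \subseteq \vec c_j, q_j$ is satisfied by \emph{every} team (take $v=w$ as the witness), so it expresses nothing; and no repair of this step is possible in principle, because $\Minc$ is union closed (Proposition~\ref{closures}) while the condition you want to ``demand'', namely $\dep{\vec c_j, q_j}$, is not union closed and hence is not expressible by any $\Minc$ formula in a positive position. Worse, the polarity of your construction is inverted: if constraint-respect is a conjunct of $\Psi$, then any Kripke model in which some $q_j$ takes both truth values on a single $C_j$-block falsifies $\Psi$, so $\Psi$ could never be valid; equivalently, your backward direction collapses, since a countermodel to $\Psi$ need not satisfy the dependence conjunct at all, and then no Skolem functions can be extracted from it. (Your top-level disjunction ``over branch choices'' is also exponentially large as described; the splitting into assignment classes cannot be written out explicitly.)

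The paper resolves exactly these points the other way round: a constraint violation must make the formula \emph{true}, not be demanded. A flat $\ML$ formula $\varphi_{\text{struc}}$ forces a binary assignment tree of depth $n$ and, below its leaves, auxiliary trees of depth $\ell=\max_j\lvert C_j\rvert$ for fresh variables $t_1,\dots,t_\ell$ together with constant markers; the escape disjunct $\varphi_{\text{struc}}^\bot$ (available by flatness) discards all non-conforming teams, which is what handles your ``degenerate model'' worry without any exponential disjunction. Inside a prefix $\Box^n\Diamond^\ell$ the diamonds existentially fix values of the $t_i$, and the formula $\varphi_{\text{cons}}$, a disjunction over $j$ of $(t_1\cdots t_{n_j}t_\bot\subseteq p_{i_1}\cdots p_{i_{n_j}}q_j)\land(t_1\cdots t_{n_j}t_\top\subseteq p_{i_1}\cdots p_{i_{n_j}}q_j)$, asserts that constraint $C_j$ is \emph{violated}: two worlds carry the guessed $C_j$-values but disagree on $q_j$. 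The other disjunct $p_\bot\subseteq p_\theta$ asserts that $\theta$ fails at some world. Thus the formula is falsified exactly by the structurally correct teams that encode genuine constraint-respecting Skolem functions making $\theta$ true everywhere, which is what the equivalence with non-validity of the $\DQBF$ instance requires. Without this inversion, and without the auxiliary $t$-tree that lets inclusion atoms refer to a guessed value tuple, your reduction does not go through.
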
\iflong
\begin{proof}
We provide a $\leqlogm$-reduction from the non-validity problem of $\DQBF$
to the validity problem of $\Minc$. 

Recall Definition \ref{def:nonvalid}. 
In our reduction we will encode all the possible modified assignments of Definition \ref{def:nonvalid} by points in Kripke models.
First we enforce binary (assignment) trees of depth $n$ in our structures. 
Leafs of the binary tree will correspond to the set of assignments $s\colon \{p_1,\dots,p_n\}\to\{0,1\}$.
The binary trees are forced in the standard way by modal formulae:
The formula $\branch{p_i}\dfn \Diamond p_i\land\Diamond\lnot p_i$ forces that there are $\ge 2$ successor states which disagree on a proposition $p_i$.
The formula $\store{p_i}\dfn (p_i\to\Box p_i)\land(\lnot p_i\to\Box\lnot p_i)$ is used to propagate chosen values for $p_i$ to successors in the tree.
Now define
\begin{align*}
 \tree{p,n}\dfn \branch{p_1}\land\bigwedge_{i=1}^{n-1}\Box^i\Bigl(\branch{p_{i+1}}\land\bigwedge_{j=1}^i\store{p_j}\Bigr),
\end{align*}
where $\Box^i\varphi\dfn\overbrace{\Box\cdots\Box}^{i\text{ many}}\varphi$ is the $i$-times concatenation of $\Box$. 
The formula  $\tree{p,n}$ forces a complete binary assignment tree of depth $n$ for proposition symbols $p_1,\dots,p_n$. 
Notice that $\tree{p,n}$ is an $\ML$-formula and consequently flat (see Proposition \ref{flatness}). Let $\ell:=\mathrm{max}\{ \lvert C_1\rvert,\dots,\lvert C_k\rvert \}.$
Then define
\begin{align*}
\varphi_{\text{struc}}\dfn
&\tree{p,n}\land
\,\Box^n\bigl(\tree{t,\ell}\bigr)\land
\Box^{n+\ell}\bigl((p_\theta\leftrightarrow\theta) \land p_\top \land \neg p_\bot \bigr)\\
&\land\Box^n\Bigl(\;\bigwedge_{\mathclap{1\leq i\leq \ell}}\Box^i\bigl(\;\bigwedge_{\mathclap{1\leq j\leq n}}\store{p_j} \land \bigwedge_{\mathclap{1\leq r\leq k}}\store{q_r}\bigr) \Bigr).
\end{align*}

The formula $\varphi_{\text{struc}}$ enforces the full binary assignment tree w.r.t.\ the $p_i$s, enforces in its leaves trees of depth $\ell$ for variables $t_i$, identifies the truth of $\theta$ by a proposition $p_\theta$ at the depth $n+\ell$ as well as $1$ by $t_\top$ and $0$ by $t_\bot$, and then stores the values of the $p_j$s and $q_r$s consistently in their subtrees of relevant depth. 
The points at depth $n$ are used to encode the modified assignments of Definition \ref{def:nonvalid}.

Recall again Definition \ref{def:nonvalid} and consider the simple qBf $\forall p_1 \cdots \forall p_{n}\exists q_1\cdots \exists q_k\theta$ with constraint $(C_1,\dots,C_k)$. 
Then consider some particular Kripke model with the structural properties described above. 
Shift your attention to those points in the enforced tree that are in depth $n$. 
Note first that if we restrict our attention to proposition symbols $p_1, \dots, p_{n}$ all assignments are present. 
In fact the number points corresponding to some particular assignment can by anything $\geq1$.
Values for the proposition symbols $q_j$ and consequently for the functions $f_j$ arise from the particular model; essentially, since we are considering validity, all possible values will be considered. 
In fact, in some particular models, the values of $q_j$s are not functionally determined according to the related constraint $C_j$.
We will next define a formula that will deal with those models in which, for some $j$, the values for $q_j$ do not give rise to a function $f_j$ in Definition \ref{def:nonvalid}.
%
%
%
%
These unwanted models have to be ``filtered'' out by the formula through satisfaction.
This violation is expressed via $\varphi_{\text{cons}}$ defined as follows. Below we let $n_j = \lvert C_j \rvert$.
\begin{align}\label{quantification}
	\varphi_{\text{cons}}\dfn\bigvee_{\mathclap{\substack{1\leq j\leq k,\\C_j=\{\,p_{i_1},\dots,p_{i_{ n_j }}\}}}}\;\;(t_1\cdots t_{n_j} t_\bot\subseteq p_{i_1}\cdots p_{i_{n_j}}q_j)\land(t_1\cdots t_{n_j} t_\top\subseteq p_{i_1}\cdots p_{i_{n_j}}q_j).
\end{align}
Assume that $t_\top$ and $t_\bot$ correspond to the constant values $1$ and $0$, respectively.
Moreover, for the time being, suppose that the values for the proposition symbols $t_i$ have been existentially quantified (we will later show how this is technically done).
Now the formula $\varphi_{\text{cons}}$ essentially states that there exists a $q_j$ that does not respect the constraint $C_j$.

Finally define
\begin{align}\label{quantification2}
\varphiNonVal\dfn \varphi_{\text{struc}}^\bot\lor \Bigl (\varphi_{\text{struc}} \land
\Box^n\bigl(\Diamond^\ell(\varphi_{\text{cons}}\lor p_\bot\subseteq p_\theta\bigr)\Bigr).
\end{align}

By $\varphi_{\text{struc}}^\bot$, we denote the negation normal form of the $\ML$-formula $\neg \varphi_{\text{struc}}$.  
An important observation is that since $\varphi_{\text{struc}}$ is an $\ML$-formula, it is flat. 
Now the formula $\varphiNonVal$ is valid if and only if
\begin{align}\label{consistency}
\mK,T\models \Box^n\bigl(\Diamond^\ell(\varphi_{\text{cons}}\lor p_\bot\subseteq p_\theta)\bigr)
\end{align}
 holds for every team pointed Kripke model $\mK,T$ that satisfies the structural properties forced by $\varphi_{\text{struc}}$. 
 Let us now return to the formula \eqref{quantification}. 
 There, we assumed that the proposition symbols $t_i$ had been quantified and that the symbols $p_\top$ and $p_\bot$ correspond to the logical constants. 
 The latter part we already dealt with in the formula $\varphi_{\text{struc}}$. 
 Recall that $\varphi_{\text{struc}}$ forces full binary assignment trees for the $t_i$s that start from depth $n$. 
 The quantification of the $t_i$s is done by selecting the corresponding successors by the diamonds $\Diamond^\ell$ in the formula \eqref{quantification2}.
If $\mK,T$ is such that, for some $j$, $q_j$ does not respect the constraint $C_j$, we use $\Diamond^\ell$ to guess a witness for the violation. 
It is then easy to check that the whole team obtained by evaluating the diamond prefix satisfies the formula $\varphi_{\text{cons}}$.
On the other hand, if $\mK,T$ is such that for each $j$ the value of $q_j$ respects the constraint $C_j$, then the subformula $p_\bot\subseteq p_\theta$ forces that there exists a point $w$ in the team obtained from $T$ by evaluating the modalities in \eqref{consistency} such that $\mK,\{w\}\not\models p_\theta$. 
In our reduction this means that $w$ gives rise to a propositional assignment that falsifies $\theta$ as required in Definition \ref{def:nonvalid}.

It is now quite straightforward to show that a simple qBf $\forall p_1 \cdots \forall p_{n}\exists q_1\cdots \exists q_k\theta$ is \emph{non-valid under a constraint} $(C_1,\dots,C_k)$ iff the $\Minc$-formula $\varphiNonVal$ obtained as described above is valid. 

In the following we show the correctness of the constructed reduction. By the observation made in \eqref{consistency} it suffices to show the following claim:
\begin{claim}
	$\forall p_1 \cdots \forall p_{n}\exists q_1\cdots \exists q_k\theta$ is non-valid under $(C_1,\dots,C_k)$ iff $\mK,T\models \Box^n\bigl(\Diamond^\ell(\varphi_{\text{cons}}\lor p_\bot\subseteq p_\theta)\bigr)$ 
 holds for every team pointed Kripke model $\mK,T$ that satisfies the structural properties forced by $\varphi_{\text{struc}}$.
\end{claim}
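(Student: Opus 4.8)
The plan is to prove both directions by exploiting a dictionary between the points at depth $n$ of a model $\mK,T$ satisfying $\varphi_{\text{struc}}$ and the data occurring in Definition~\ref{def:nonvalid}. Writing $T_n\dfn R^n[T]$ for the team selected by $\Box^n$, I would first record what $\varphi_{\text{struc}}$ buys us (all of this by flatness, since $\varphi_{\text{struc}}$ is an $\ML$-formula): every assignment $s\colon\{p_1,\dots,p_n\}\to\{0,1\}$ occurs at least once as the $p$-pattern of a point of $T_n$; each point of $T_n$ additionally carries values for $q_1,\dots,q_k$; each such point sits above a full binary $t$-tree of depth $\ell$ whose leaves, the points at depth $n+\ell$, inherit the $p$- and $q$-values of their depth-$n$ ancestor and realise all $2^\ell$ patterns of $t_1,\dots,t_\ell$; and at those leaves $p_\theta$ equals the truth value of $\theta$, the propositions $t_\top,t_\bot$ are the constants $1,0$, and $\neg p_\bot$ holds. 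I would also note two elementary facts about lax semantics that get used repeatedly: the empty team satisfies every inclusion atom and hence every disjunct $\psi_j$ of $\varphi_{\text{cons}}$ (the $j$-th conjunction of two inclusion atoms), so $T'\models\varphi_{\text{cons}}$ as soon as $T'\models\psi_j$ for a single $j$; and that evaluating $\Diamond^\ell$ at $T_n$ may produce either $R^\ell[T_n]$ or, more delicately, any family of depth-$(n+\ell)$ descendants that still covers $T_n$, the nested witnessing teams being obtained by taking ancestor paths.

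For the implication ``$\varphi$ non-valid under $(C_1,\dots,C_k)$ $\Rightarrow$ every structural $\mK,T$ satisfies the formula'' I would fix such a $\mK,T$ and split into two cases according to whether the $q_j$-values in $T_n$ behave functionally. In the first case some $q_j$ does not depend functionally on the $C_j$-coordinates, so there are points $w_0,w_1\in T_n$ with the same $C_j$-pattern $\vec c^{*}$ but $q_j(w_0)=0$ and $q_j(w_1)=1$; I would let $\Diamond^\ell$ select, from every point of $T_n$, a leaf whose $t_1\cdots t_{n_j}$-coordinates equal $\vec c^{*}$ (one such leaf exists in every $t$-tree), making sure the selected leaves below $w_0$ and $w_1$ are included. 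The resulting team exhibits a single $t_1\cdots t_{n_j}$-pattern and contains, over $\vec c^{*}$, witnesses with $q_j=0$ and with $q_j=1$, so it satisfies $\psi_j$, hence $\varphi_{\text{cons}}$. In the second case $q_j$ is, for each $j$, a genuine total function $f_j$ of the $C_j$-coordinates (totality from the full $p$-tree); applying non-validity of $\varphi$ to $(f_1,\dots,f_k)$ yields $s^{*}$ with $s^{*}(q_1\mapsto f_1(s^{*}(\vec c_1)),\dots)\not\models\theta$, so the depth-$n$ point with $p$-pattern $s^{*}$ has all its leaves satisfying $\neg p_\theta$; here I would let $\Diamond^\ell$ take all of $R^\ell[T_n]$, which contains such a leaf and every point of which satisfies $\neg p_\bot$, so $p_\bot\subseteq p_\theta$ holds. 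Either way $\mK,T_n\models\Diamond^\ell(\varphi_{\text{cons}}\lor p_\bot\subseteq p_\theta)$, hence $\mK,T\models\Box^n(\Diamond^\ell(\varphi_{\text{cons}}\lor p_\bot\subseteq p_\theta))$.

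For the converse I would argue contrapositively: assuming $\varphi$ is valid under the constraint via functions $f_1,\dots,f_k$, I would build the canonical tree model $\mK,T$ in which each $p$-pattern occurs exactly once at depth $n$, the point with pattern $s$ is given $q_j$-value $f_j(s(\vec c_j))$, and the $t$-trees and the propagation-and-evaluation gadgets at depth $n+\ell$ are added as dictated by $\varphi_{\text{struc}}$; one checks $\mK,T\models\varphi_{\text{struc}}$. Supposing $\mK,T\models\Box^n(\Diamond^\ell(\varphi_{\text{cons}}\lor p_\bot\subseteq p_\theta))$, a witnessing team $T'$ of leaves splits as $T'=T'_1\cup T'_2$ with $T'_1\models\varphi_{\text{cons}}$ and $T'_2\models p_\bot\subseteq p_\theta$. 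If $T'_1$ were nonempty, some $\psi_j$ would be realised by a nonempty subteam, which forces two leaves with a common $C_j$-pattern but $q_j$-values $0$ and $1$; their depth-$n$ ancestors then contradict that $q_j$ equals $f_j$ of the $C_j$-pattern there, so $T'_1=\emptyset$ and $T'=T'_2$ is a nonempty team satisfying $p_\bot\subseteq p_\theta$. This yields a leaf with $\neg p_\theta$; tracing to its depth-$n$ ancestor with pattern $s^{*}$ and reading off the $q$-values shows $s^{*}(q_1\mapsto f_1(s^{*}(\vec c_1)),\dots)\not\models\theta$, contradicting the choice of the $f_j$. Hence $\varphi$ is non-valid under the constraint.

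The step I expect to be the main obstacle is the precise control of the team produced by $\Diamond^\ell$ under lax semantics, combined with the fact that the inclusion atoms in $\varphi_{\text{cons}}$ are not downward closed: in the forward direction one must choose the successor team so that exactly the intended $t$-patterns survive (for the $\psi_j$ case) or so that a $\theta$-falsifying leaf survives (for the $p_\bot\subseteq p_\theta$ case), and in the converse direction one must argue that in the canonical model \emph{no} choice of successor team makes $\varphi_{\text{cons}}$ true. The remaining items---that $\varphi_{\text{struc}}$ really forces the stated structure, that the canonical model satisfies it, and that the ``thin'' intermediate teams legitimately witness the nested $\Diamond^\ell$ via the $T[R]S$ conditions at each level---are routine bookkeeping.
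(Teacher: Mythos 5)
Your proposal is correct and takes essentially the same route as the paper's proof: the same two-case analysis in the forward direction (choosing a successor team whose $t$-prefix equals the fixed pattern $\vec{c}^{\,*}$ when some $q_j$ is non-functional, and taking the full image so that $p_\bot\subseteq p_\theta$ is witnessed by a $\theta$-falsifying leaf otherwise), and the same canonical-model argument for the converse, which you merely phrase contrapositively. The bookkeeping points you flag (lax witnesses for $\Diamond^\ell$ via ancestor paths, empty teams absorbing the unused disjuncts of $\varphi_{\text{cons}}$) are handled the same way in the paper, so there is no gap.
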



\begin{proof}[Proof of Claim]
	``$\Rightarrow$'': Assume that the formula $\varphi\dfn\forall p_1 \cdots \forall p_{n}\exists q_1\cdots \exists q_k\theta$ is non-valid under the constraint $(C_1,\dots,C_k)$. 
	As a consequence, for every sequence of functions $f_1,\dots,f_k$ of appropriate arities there exists an assignment $s\colon\{p_1,\dots,p_n\}\to\{0,1\}$ such that
	\begin{equation}\label{eq:modified}
	s(q_1 \mapsto f_1\big(s(\vec{c}_1)), \dots, q_k \mapsto f_k(s(\vec{c}_k)))\not\models \theta.
	\end{equation}
	We will show that
	\begin{equation}\label{eq:boxed}
	\mK,T\models \Box^n\bigl(\Diamond^\ell(\varphi_{\text{cons}}\lor p_\bot\subseteq p_\theta)\bigr),
	\end{equation}
	for each team pointed Kripke model $\mK,T$ that satisfies the structural properties forced by $\varphi_{\text{struc}}$. 
	
    Let $\mK,T$ be an arbitrary team pointed Kripke model that satisfies the required structural properties. 
    Denote by $S$ the team obtained from $T$ after evaluating the first $n$ $\Box$-symbols in \eqref{eq:boxed}. 
    Note that each tuple of values assigned to $\vec p\dfn (p_1,\dots,p_n)$ is realised in $S$ as the tree structure enforces all possible assignments over $\vec p$. 
    Due to the forced structural properties, $S$ and of any team obtainable from $S$ by evaluating the $k$ $\Diamond$-symbols in \eqref{eq:boxed} realise exactly the same assignments for $\{p_1,\dots,p_n,q_1,\dots q_k\}$. 
    Let $S_k$ denote the set of exactly all points reachable from $S$ by paths of length exactly $k$. 
    For each point $w$ denote by $w(\vec q)$ the value of $\vec q$ in the world $w$.  
    Note that for every $\ell$-tuple of bits $\vec{b}$ and every point $w\in S$ there exists a point $v\in S_k$ such that $v(p_1,\dots,p_n,q_1,\dots q_k)=w(p_1,\dots,p_n,q_1,\dots q_k)$ and $v(t_1,\dots,t_\ell)=\vec{b}$. 
    Moreover, for any fixed $\vec{b}$, the team
    \[
    \{w\in S_k \mid w(t_1,\dots,t_\ell)= \vec{b}\}
    \]
is obtainable from $S$ by evaluating the $k$ $\Diamond$-symbols in \eqref{eq:boxed}.
We have two cases:		
	\begin{enumerate}
		\item There exists a constraint $C_i$, $1\leq i\leq k$, and points $w,w'\in S$ with $w(\vec{c_i})=w'(\vec{c_i})$ but $w(q_i)\neq w'(q_i)$.
		Now let $S'$ be a team obtained from $S$  by evaluating the $k$ $\Diamond$-symbols in \eqref{eq:boxed} such that, for every $w'\in S'$, $w(t_1,\dots,t_\ell)$ is an expansion of $w(\vec{c_i})$. 
		Now clearly $\mK,S'\models \varphi_{\text{cons}}$ and as a consequence $\mK,S\models \Diamond^\ell(\varphi_{\text{cons}}\lor p_\bot\subseteq p_\theta)$. 
		From this \eqref{eq:boxed} follows.
		\item For each $C_i$, $1\leq i\leq k$, and every $w,w'\in S$ it holds that if $w(\vec{c_i})=w'(\vec{c_i})$ then $w(q_i)=w'(q_i)$. 
		Let $f_1,\dots,f_k$ be some functions that arise from the fact that the constraints $(C_1,\dots,C_k)$ are satisfied in $S$. 
		Since, by assumption, $\varphi$ is non-valid under the constraint $(C_1,\dots,C_k)$, it follows that there exists an assignment $s\colon\{p_1,\dots,p_n\}\to\{0,1\}$ such that \eqref{eq:modified} holds. 
		Now recall that each tuple of values assigned to $\vec p\dfn (p_1,\dots,p_n)$ is realised in $S$. 
		Accordingly, in particular, $s$ and $s(q_1 \mapsto f_1\big(s(\vec{c}_1)), \dots, q_k \mapsto f_k(s(\vec{c}_k)))$ are realised in $S$. 
		For this reason $\mK,S\models \Diamond^\ell(p_\bot\subseteq p_\theta)$, from which \eqref{eq:boxed} follows in a straightforward manner.
	\end{enumerate}
``$\Leftarrow$'': Assume that $\mK,T\models \Box^n\bigl(\Diamond^\ell(\varphi_{\text{cons}}\lor p_\bot\subseteq p_\theta)\bigr)$ 
 holds for every team pointed Kripke model $\mK,T$ that satisfies the structural properties forced by $\varphi_{\text{struc}}$.
 We need to show that $\varphi$ is non-valid under the constraint $(C_1,\dots,C_k)$. 
 In order to show this, let $f_1,\dots,f_k$ be arbitrary functions with arities that correspond to the constraint $(C_1,\dots,C_k)$. 
 Let $\mK,T$ be a team pointed Kripke model and $S$ a team of $\mK$ such that
\begin{enumerate} 
\item[a)] $\mK,T$ satisfies the structural properties forced by $\varphi_{\text{struc}}$,
\item[b)]  $S$ is obtained from $T$ by evaluating the $n$ $\Box$-symbols,
\item[c)]  $f_i\big(w(\vec{c}_i)\big)= w(q_i)$, for each $w\in S$ and $1\leq i\leq k$. 
 \end{enumerate} 
 It is easy to check that such a model and teams always exist. From the assumption we then obtain that
 \begin{equation}\label{eq:last}
 \mK,S\models \Diamond^\ell(\varphi_{\text{cons}}\lor p_\bot\subseteq p_\theta).
 \end{equation}

But since the values of $q_i$s, by construction, do not violate the constraint $(C_1,\dots,C_k)$, we obtain, with the help of the structural properties, that for \eqref{eq:last} to hold is must be the case that $\mK,S_k\models p_\bot\subseteq p_\theta$, where $S_k$ is some team obtained from $S$ by evaluating the $k$ $\Diamond$-symbols in \eqref{eq:last}. 
But this means that there exists an assignment $s\colon\{p_1,\dots,p_n\}\to\{0,1\}$ such that
	\begin{equation}
	s(q_1 \mapsto f_1\big(s(\vec{c}_1)), \dots, q_k \mapsto f_k(s(\vec{c}_k)))\not\models \theta.
	\end{equation}
Consequently, the claim holds.
\end{proof}

In order to compute $\varphiNonVal$ two binary counters bounded above by $n+k+\ell$ need to be maintained. 
Note that $\log(n+k+\ell)$ is logarithmic with respect to the input length. 
That being the case, the reduction is computable in logspace and the lemma applies.
\end{proof}

The construction in the previous proof works also for strict semantics. In the proof of the claim a small adjustment is needed to facilitate the strict semantics of diamond.
As a result we obtain the following.
\fi
\begin{corollary}\label{cor:val-minc-strict}
$\VAL(\Minc)$ under strict semantics is $\coNEXPTIME$-hard w.r.t.\ $\leqlogm$.
\end{corollary}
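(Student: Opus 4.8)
The plan is to re-use, essentially verbatim, the reduction constructed in the proof of Lemma~\ref{lem:val-minc-lax-lower-bound}: given a simple qBf $\forall p_1\cdots\forall p_n\exists q_1\cdots\exists q_k\,\theta$ with constraint $(C_1,\dots,C_k)$, we output the very same $\Minc$-formula $\varphiNonVal$, and we claim that the qBf is non-valid under the constraint iff $\varphiNonVal$ is valid \emph{under strict semantics}. Since the non-validity problem of $\DQBF$ is $\coNEXPTIME$-complete (Corollary~\ref{nonTDQBF problem}) and the formula, hence the logspace bound, is unchanged, this suffices for the claimed lower bound.

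First I would isolate the points at which strict and lax semantics can differ inside $\varphiNonVal$. The subformula $\varphi_{\text{struc}}$, and likewise its negation normal form $\varphi_{\text{struc}}^\bot$, are $\ML$-formulae, hence flat (Proposition~\ref{flatness}), so their strict and lax meanings coincide; moreover conjunctions and boxes are interpreted identically under both systems. Thus the only genuinely new features are: the top-level disjunction $\varphi_{\text{struc}}^\bot\lor(\varphi_{\text{struc}}\land\dots)$, the disjunction $\bigvee_j$ inside $\varphi_{\text{cons}}$, and the modal prefix $\Diamond^\ell$. For the top-level disjunction, for any team $T$ take the \emph{disjoint} split $T_2:=\{w\in T\mid \mK,\{w\}\models\varphi_{\text{struc}}\}$ and $T_1:=T\setminus T_2$; by flatness $\mK,T_2\models_s\varphi_{\text{struc}}$ and $\mK,T_1\models_s\varphi_{\text{struc}}^\bot$, and conversely whenever $\mK,T$ satisfies the structural properties forced by $\varphi_{\text{struc}}$ this split is forced (no world can satisfy both $\varphi_{\text{struc}}$ and $\varphi_{\text{struc}}^\bot$). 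Hence, exactly as in the lax case, validity of $\varphiNonVal$ under strict semantics reduces to the statement that $\mK,S\models_s\Diamond^\ell(\varphi_{\text{cons}}\lor p_\bot\subseteq p_\theta)$ holds for every $\mK$ and every $S$ arising as $R^n[T]$ from a structural team $T$. For the disjunction $\bigvee_j$ in $\varphi_{\text{cons}}$ it suffices to note that whenever a nonempty team satisfies some disjunct it may be placed entirely in that disjunct, the others receiving $\emptyset$, so the strict and lax readings of $\varphi_{\text{cons}}$ agree on which teams satisfy it.

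It then remains to re-examine the proof of the Claim in the proof of Lemma~\ref{lem:val-minc-lax-lower-bound} under the strict diamond. Direction ``$\Leftarrow$'' of the Claim carries over unchanged: one builds the same tree model realising the chosen Skolem functions $f_1,\dots,f_k$, picks \emph{any} successor-path function witnessing $\Diamond^\ell$, and observes that, since the model respects the constraint, no nonempty subteam of the resulting depth-$(n+\ell)$ team can satisfy any disjunct of $\varphi_{\text{cons}}$ (two points with the same $\vec{c}_j$-values would be forced to carry different $q_j$-values, contradicting $q_j=f_j(\vec{c}_j)$); consequently the whole team must satisfy $p_\bot\subseteq p_\theta$, which yields a point with $p_\theta=0$, i.e.\ a falsifying assignment for $\theta$. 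In direction ``$\Rightarrow$'', case~2 (the team $S$ respects every constraint) is also unchanged, since one may again pick an arbitrary successor-path function and use that $p_\bot\subseteq p_\theta$ has identical semantics under both systems. The only step that needs an adjustment is case~1, where some $C_i$ and points $w,w'\in S$ witness $w(\vec{c}_i)=w'(\vec{c}_i)$ and $w(q_i)\neq w'(q_i)$: instead of choosing a specially shaped subteam reachable by the lax $\Diamond^\ell$, I would let the successor-path function send \emph{every} point of $S$ to the (unique) leaf of \emph{its own} $t$-subtree whose $t$-bits encode the fixed tuple $w(\vec{c}_i)$, padded to length $\ell$; such a leaf exists and is reachable because $\tree{t,\ell}$ forces a complete binary $t$-tree below each node of $S$. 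The image $g(S)$ is then a legitimate strict-$\Diamond^\ell$ witness in which every point has $(t_1,\dots,t_{n_i})=w(\vec{c}_i)$, while $g(w)$ and $g(w')$ provide, respectively, the $q_i=0$ and $q_i=1$ companions demanded by the two inclusion atoms of the $i$-th disjunct of $\varphi_{\text{cons}}$; hence $\mK,g(S)\models_s\varphi_{\text{cons}}$, and therefore $\mK,S\models_s\Diamond^\ell(\varphi_{\text{cons}}\lor p_\bot\subseteq p_\theta)$.

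I expect this last point — forcing the whole team reachable by the one-successor-per-world strict diamond, rather than a chosen subteam, to satisfy $\varphi_{\text{cons}}$ — to be the only real obstacle, and the ``uniform target leaf'' choice above is exactly what removes it. Everything else (flatness of $\varphi_{\text{struc}}$, the identical behaviour of boxes, conjunctions, and the atoms $p_\bot\subseteq p_\theta$ and $t_1\cdots t_{n_j}t_\top\subseteq\vec{c}_jq_j$, and the logspace computability of the reduction) is literally the same as in the lax case, so the corollary follows.
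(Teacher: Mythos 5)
Your proposal is correct and follows the paper's own route: the paper reuses the reduction of Lemma~\ref{lem:val-minc-lax-lower-bound} verbatim and notes only that a small adjustment is needed in the Claim's proof for the strict diamond, which is precisely your ``uniform target leaf'' successor-function argument in case~1 (the remaining observations about flatness of $\varphi_{\text{struc}}$, the forced disjoint split, and the unchanged cases are the same implicit checks). No gap; the minor point that the target leaf need not be unique is harmless, since any leaf with the prescribed $t$-values works.
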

While the exact complexities of the problems $\VAL(\Minc)$ and $\VAL(\EMinc)$ remain open, it is easy to establish that the complexities coincide.
\begin{theorem}\label{a:val-eminc-minc-equiv}		
Let $\classFont{C}$ be a complexity class that is closed under polynomial time reductions. 
Then $\VAL(\Minc)$ under lax (strict) semantics in complete for $\classFont{C}$ if and only if $\VAL(\EMinc)$ under lax (strict) semantics in complete for $\classFont{C}$.
\end{theorem}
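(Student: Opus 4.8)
One direction is immediate: every $\Minc$-formula is syntactically an $\EMinc$-formula, and the definition of validity is the same for both logics, so the identity map is a (trivially $\leqlogm$) reduction from $\VAL(\Minc)$ to $\VAL(\EMinc)$, uniformly for lax and strict semantics. All the work is in producing, in logarithmic space, a translation $\varphi\mapsto\varphi'$ from $\EMinc$ to $\Minc$ that preserves validity. Given both reductions, the equivalence follows by the usual composition argument: if $\VAL(\Minc)$ is $\classFont{C}$-complete, then $\classFont{C}$-hardness of $\VAL(\EMinc)$ follows by composing a hardness reduction with the identity reduction, and $\VAL(\EMinc)\in\classFont{C}$ follows because $\classFont{C}$ is closed under the polynomial-time reduction $\varphi\mapsto\varphi'$; the converse implication is symmetric, and since both reductions are logspace, $\leqlogm$-completeness is transferred as well.

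For the translation, let $\varphi\in\EMinc(\Phi)$, let $\psi_1,\dots,\psi_m\in\ML(\Phi)$ be the formulae occurring as parameters of inclusion atoms in $\varphi$, and choose fresh proposition symbols $p_1,\dots,p_m$. Let $\varphi^{*}\in\Minc$ be obtained from $\varphi$ by replacing each parameter occurrence $\psi_i$ by $p_i$ (so $\varphi^{*}$ has only proposition symbols inside its inclusion atoms), let $d$ be the modal depth of $\varphi$, and set
\[ \chi\;\dfn\;\bigwedge_{1\le i\le m}\ \bigwedge_{0\le j\le d}\ \Box^{j}(p_i\leftrightarrow\psi_i), \]
where $\Box^{j}$ denotes $j$ nested boxes; thus $\chi$ is an $\ML$-formula that, at any world $w$, forces $p_i$ and $\psi_i$ to agree at every world reachable from $w$ in at most $d$ steps. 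Put $\varphi'\dfn\chi^{\bot}\lor(\chi\wedge\varphi^{*})$; this is a $\Minc$-formula, computable from $\varphi$ in logspace (the iterated boxes add only $O(|\varphi|^{2})$ symbols).

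To show $\varphi$ valid iff $\varphi'$ valid, I would use the parameter-substitution observation behind Lemma~\ref{Plemma2}: for every model $\mK=(W,R,V)$, writing $\mK^{+}$ for the expansion that additionally interprets each $p_i$ by $\{\,w\mid\mK,w\modelsML\psi_i\,\}$, an easy induction (valid for lax and strict semantics alike) yields $\mK,T\models\varphi$ iff $\mK^{+},T\models\varphi^{*}$ for every team $T$. Forward direction: given any $\mK,T$ over the vocabulary of $\varphi'$, every world satisfies exactly one of $\chi$ and $\chi^{\bot}$, so $T$ is the disjoint union of $T_{1}$ (worlds satisfying $\chi^{\bot}$) and $T_{2}$ (worlds satisfying $\chi$) --- a legal split under both semantics. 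By flatness $\mK,T_{1}\models\chi^{\bot}$ and $\mK,T_{2}\models\chi$; and since evaluating $\varphi^{*}$ from $T_{2}$ inspects only worlds within distance $d$ of $T_{2}$, where $\chi$ has already pinned each $p_i$ to the truth value of $\psi_i$, we get $\mK,T_{2}\models\varphi^{*}$ iff $\mK^{+},T_{2}\models\varphi^{*}$ iff $\mK,T_{2}\models\varphi$, which holds because $\varphi$ is valid. Hence $\mK,T\models\varphi'$. Converse: if $\varphi'$ is valid, take any $\mK,T$ over $\Phi$ and apply validity of $\varphi'$ to $\mK^{+},T$; every world of $\mK^{+}$ satisfies $\chi$, hence none satisfies $\chi^{\bot}$, so the left disjunct can only contribute the empty subteam and we obtain $\mK^{+},T\models\varphi^{*}$, whence $\mK,T\models\varphi$. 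Thus validity is preserved, and the reduction is logspace; the same formula $\varphi'$ works for strict semantics.

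The one non-routine point is the locality step in the forward direction: one must verify that evaluating $\varphi^{*}$ from a team all of whose points satisfy $\chi$ never reads a symbol $p_i$ at a world where it has not already been forced equal to $\psi_i$, and that the truth value of $\psi_i$ itself is the same in $\mK$ and $\mK^{+}$ because $\psi_i\in\ML(\Phi)$ and the two models differ only on the new symbols. The bound $d=\mathrm{md}(\varphi)$ in $\chi$ is exactly what matches the reach of $\varphi^{*}$, and passing to strict semantics needs nothing extra: the single disjunction introduced by the translation is split into disjoint halves by construction, and substituting proposition symbols for flat modal parameters commutes with the strict clauses for disjunction and diamond just as it does with the lax ones.
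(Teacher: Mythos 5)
Your proposal is correct and takes essentially the same route as the paper: the paper also replaces the inclusion-atom parameters by fresh propositions, forces their agreement with the original $\ML$-parameters up to the modal depth via a big conjunction of boxed biconditionals $\varphi_{\text{subst}}$, and reduces validity through the formula $\varphi_{\text{subst}}^{\bot}\lor(\varphi_{\text{subst}}\land\varphi^{+})$, which is exactly your $\chi^{\bot}\lor(\chi\wedge\varphi^{*})$. The only difference is that the paper leaves the correctness check and the (polynomial-time) computability remark terse, whereas you spell out the team split, the locality/depth argument, and the strict-semantics case explicitly.
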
\iflong
\begin{proof}
Let  $\varphi$ be a formula of $\EMinc$ and $k$ the modal depth of $\varphi$. 
Let $\varphi_1,\dots,\varphi_n$ be exactly those subformulae of $\varphi$ that occur as a parameter of some inclusion atom in $\varphi$ and let $p_1,\dots,p_n$ be distinct fresh proposition symbols.
Define
\begin{align*}
\varphi_\text{subst} &\dfn \big(\bigwedge_{0\leq i\leq k} \Box^i \bigwedge_{1\leq j\leq n} (p_j\leftrightarrow \varphi_j) \big), \\
\varphi^* &\dfn \varphi_\text{subst}^\bot \lor  (\varphi_\text{subst} \land \varphi^+),
\end{align*}
where $\varphi_\text{subst}^\bot$ denotes the negation normal form of $\neg \varphi_\text{subst}$ and $\varphi^+$ is the formula obtained from $\varphi$ by simultaneously substituting each $\varphi_i$ by $p_i$. 
It is easy to check that $\varphi$ is valid if and only if the $\Minc$ formula  $\varphi^*$ is. 
Clearly  $\varphi^*$ is computable from  $\varphi$ in polynomial time.
\end{proof}\fi

\section{Conclusion}\label{conclusions}
In this paper we investigated the computational complexity of model checking and validity for propositional and modal inclusion logic in order to complete the complexity landscape of these problems in the mentioned logics. 
In particular we emphasise on the subtle influence of which semantics is considered: strict or lax. 
The model checking problem for these logics under strict semantics is $\NP$-complete and under lax semantics $\P$-complete. 
The validity problem is shown to be $\co\NP$-complete for the propositional strict semantics case. 
For the modal case we achieve a $\coNEXPTIME$ lower bound under lax as well as strict semantics. 
The upper bound is left open for further research. 
It is however easy to establish that, if closed under polynomial reductions, the complexities of $\VAL(\Minc)$ and $\VAL(\EMinc)$, and $\VAL(\Minc_s)$ and $\VAL(\EMinc_s)$ coincide, respectively, see Proposition~\ref{a:val-eminc-minc-equiv}.

\iflong
\section{Related work and further research}\label{sec:further}
Tables~\ref{tbl:sat}--\ref{tbl:val} give an overview of the current state of research for satisfiability, model checking and validity in the propositional and modal team semantics setting for both strict and lax variants. In the tables $\AEXP{\poly}$ refers to alternating exponential time with polynomially many alternations. 
We also identify the unclassified cases open for further research. 
As these tables also mention atoms which have not been considered elsewhere in this paper, we will introduce them shortly:

Let $\vec p$, $\vec q$, and $\vec r$ be tuples of proposition symbols and $q$ a proposition symbol. Then $\dep{\vec{p}, r}$ is a \emph{propositional dependence atom} and $\indep{\vec p}{\vec q}{\vec r}$ is a \emph{conditional independence atom} with the following semantics:
\begin{align*}
X\models \dep{\vec p,q} &\;\Leftrightarrow\;  \forall s,t\in X: s(\vec p)=t(\vec p) \text{ implies }s(q)=t(q).\\
X\models \indep{\vec p}{\vec q}{\vec r} &\;\Leftrightarrow\; \forall s,t\in X: \text{ if }s(\vec p)=t(\vec p), \text{then }
\exists u\in X: u(\vec p\vec q)=s(\vec p\vec q) \text{ and }   u(\vec r)=t(\vec r).
\end{align*}
Intuitively, $\indep{\vec p}{\vec q}{\vec r}$ states that for any fixed value for $\vec p$, $\vec q$ and $\vec r$ are informationally independent. We also consider the contradictory negation $\sim$ in our setting:
\[
X\models\,\sim\!\!\varphi \text{ iff } X\not\models\varphi.
\]
Semantics for these atoms in the modal setting is defined analogously. When $\mathcal{C}$ is a set of atoms, we denote by $\PL(\mathcal{C})$ and $\ML(\mathcal{C})$ the extensions of $\PL$ and $\ML$, in the team semantics setting, by the atoms in $\mathcal{C}$, respectively.

A fruitful direction for future research is to study automatic reasoning in the team semantics setting.
\fi

\iflong

\begin{table}\centering
\begin{tabular}{c@{}cc}\toprule
	& \multicolumn{2}{c}{$\PL$ Satisfiability Problem} \\\cmidrule{2-3}
Operator & strict & lax \\\midrule
$\emptyset$ & \multicolumn{2}{c}{\Vhrulefill\; $\NP$ \cite{coo71a,levin73} \Vhrulefill}\\
$\dep{\cdot}$ & \multicolumn{2}{c}{\Vhrulefill\; $\NP$ \cite{lv13} \Vhrulefill}\\
$\subseteq$ & $\EXPTIME$ \cite{hkmv15personal}
		    & $\EXPTIME$ \cite{hkmv15}\\
$\bot$ & $\NP^\star$
       & $\NP$ \cite{hkvv15}\\
$\sim$ & $\AEXP{\poly}^\star$ 
       & $\AEXP{\poly}$ \cite{hkvv15,hkvv15ext}\\
all& $\AEXP{\poly}^\star$ 
       & $\AEXP{\poly}$ \cite{hkvv15,hkvv15ext}\\\bottomrule
\end{tabular}
\begin{tabular}{cc}\toprule
  \multicolumn{2}{c}{$\ML$ Satisfiability Problem} \\\cmidrule{1-2}
  strict & lax \\\midrule
  \multicolumn{2}{c}{\Vhrulefill\; $\PSPACE$ \cite{lad77} \Vhrulefill}\\
  \multicolumn{2}{c}{\Vhrulefill\; $\NEXPTIME$ \cite{sev09} \Vhrulefill}\\
  $\EXPTIME$ \cite{hkmv15personal} & $\EXPTIME$ \cite{hkmv15}\\
  $\NEXPTIME^\star$  & $\NEXPTIME$ \cite{kms16}\\
  ?    & ?\\
 ?     & ?\\\bottomrule
\end{tabular}

\caption{Complexity of Satisfiability, where $\mathrm{all}=\{\dep\cdot, \subseteq, \bot, \sim \}$. \newline $\star$: Proof for lax semantics works also for strict semantics. \newline ?: No nontrivial result is known.}\label{tbl:sat}
\end{table}

\begin{table}\centering
\begin{tabular}{ccc}\toprule
	& \multicolumn{2}{c}{$\PL$ Model Checking}\\\cmidrule{2-3}
Operator & strict & lax\\\midrule
$\emptyset$ & \multicolumn{2}{c}{\Vhrulefill\; $\NC{1}$ \cite{bus87} \Vhrulefill}\\
$\dep{\cdot}$ & \multicolumn{2}{c}{\Vhrulefill\; $\NP$ \cite{el12}\Vhrulefill}\\
$\subseteq$ & $\NP$ [Thm. \ref{mc-plinc-npc}]
		    & $\P$ [Thm. \ref{mc-plinc-pc}]\\
$\bot$ & $\NP^\star$
       & $\NP$ \cite{hkvv15}\\
$\sim$ & $\PSPACE^\star$
       & $\PSPACE$ \cite{hkvv15,muellerDiss}\\
all& $\PSPACE^\star$
       & $\PSPACE$ \cite{hkvv15,muellerDiss}\\\bottomrule
\end{tabular}
\begin{tabular}{cc}\toprule
\multicolumn{2}{c}{$\ML$ Model Checking}\\\cmidrule{1-2}
strict & lax\\\midrule
\multicolumn{2}{c}{\Vhrulefill\; $\P$ \cite{clemsi86,sc02} \Vhrulefill}\\
\multicolumn{2}{c}{\Vhrulefill\; $\NP$ \cite{el12} \Vhrulefill}\\
$\NP$ [Thm. \ref{mc-minc-npc}] & $\P$ [Thm. \ref{mc-minc-pc}]\\
$\NP^\star$ & $\NP$ \cite{kms16}\\
$\PSPACE^\star$ & $\PSPACE$ \cite{muellerDiss}\\
$\PSPACE^\star$ & $\PSPACE$ \cite{hkvv15}\\\bottomrule
\end{tabular}
\caption{Complexity of Model Checking, where $\mathrm{all}=\{\dep\cdot, \subseteq, \bot, \sim \}$. \newline $\star$: Proof for lax semantics works also for strict semantics.}	\label{tbl:mc}
\end{table}

\begin{table}\centering
\begin{tabular}{@{}c@{}cc@{}}\toprule
	& \multicolumn{2}{c}{$\PL$ Validity Problem}\\\cmidrule{2-3}
Operator & strict & lax\\\midrule
$\emptyset$ & \multicolumn{2}{c}{\Vhrulefill\; $\co\NP$ \cite{coo71a,levin73} \Vhrulefill}\\
$\dep{\cdot}$ & \multicolumn{2}{c}{\Vhrulefill\; $\NEXPTIME$ \cite{virtema14} \Vhrulefill}\\
$\subseteq$ & $\co\NP$ [Thm. \ref{val-plinc-conp}]
		    & $\co\NP$ \cite{hkvv15}\\
$\bot$ & $\in\co\NEXPTIME^\NP{}^\star$
       & $\in\co\NEXPTIME^\NP$ \cite{hkvv15}\\
$\sim$ & $\AEXP{\poly}^\star$
       & $\AEXP{\poly}$ \cite{hkvv15,hkvv15ext}\\
all & $\AEXP{\poly}^\star$
       & $\AEXP{\poly}$ \cite{hkvv15,hkvv15ext}\\\bottomrule
\end{tabular}
\begin{tabular}{@{}cc@{}}\toprule
\multicolumn{2}{c}{$\ML$ Validity Problem}\\\cmidrule{1-2}
strict & lax\\\midrule
\multicolumn{2}{c}{\Vhrulefill\; $\PSPACE$ \cite{lad77} \Vhrulefill}\\
\multicolumn{2}{c}{\Vhrulefill\; $\in\NEXPTIME^{\NP}$ \cite{virtema14} \Vhrulefill}\\
$\coNEXPTIME$-h [Cor. \ref{cor:val-minc-strict}]
		    & $\coNEXPTIME$-h [Lem. \ref{lem:val-minc-lax-lower-bound}]\\
?
       & ?\\
?
       & ?\\
?
       & ?\\\bottomrule
\end{tabular}
\caption{Complexity of Validity, where $\mathrm{all}=\{\dep\cdot, \subseteq, \bot, \sim \}$. Complexity classes refer to completeness results, ``-{\normalfont h}.'' denotes hardness and ``$\in$'' denotes containment. \newline $\star$: Proof for lax semantics works also for strict semantics. \newline ?: No nontrivial result is known.}	\label{tbl:val}
\end{table}
\fi

\bibliographystyle{plainnat}
\bibliography{mincmc}

\begin{thebibliography}{10}

\bibitem{blackburn}
P.~Blackburn, M.~de~Rijke, and Y.~Venema.
\newblock {\em Modal Logic}.
\newblock Cambridge Univ. Press, 2001.

\bibitem{bus87}
S.~R. Buss.
\newblock The {B}oolean formula value problem is in {$\mathsf{ALOGTIME}$}.
\newblock In {\em Proc. 19th STOC}, pages 123--131, 1987.

\bibitem{clemsi86}
E.~Clarke, E.~A. Emerson, and A.~Sistla.
\newblock Automatic verification of finite-state concurrent systems using
  temporal logic specifications.
\newblock {\em ACM ToPLS}, 8(2):244--263, 1986.

\bibitem{coo71a}
S.~A. Cook.
\newblock The complexity of theorem proving procedures.
\newblock In {\em Proc. 3rd STOC}, pages 151--158, 1971.

\bibitem{dukovo16}
A.~Durand, J.~Kontinen, and H.~Vollmer.
\newblock Expressivity and complexity of dependence logic.
\newblock In S.~Abramsky, J.~Kontinen, J.~V{\"a}{\"a}n{\"a}nen, and H.~Vollmer,
  editors, {\em Dependence Logic: Theory and Applications}, pages 5--32. 2016.

\bibitem{el12}
J.~Ebbing and P.~Lohmann.
\newblock Complexity of model checking for modal dependence logic.
\newblock In {\em 38th Proc. {SOFSEM}}, pages 226--237, 2012.

\bibitem{Galliani12}
P.~Galliani.
\newblock Inclusion and exclusion dependencies in team semantics - on some
  logics of imperfect information.
\newblock {\em Ann. Pure Appl. Logic}, 163(1):68--84, 2012.

\bibitem{ghk13}
P.~Galliani, M.~Hannula, and J.~Kontinen.
\newblock Hierarchies in independence logic.
\newblock In {\em Proc. 22nd CSL}, volume~23 of {\em LIPIcs}, pages 263--280,
  2013.

\bibitem{GH13}
P.~Galliani and L.~Hella.
\newblock Inclusion logic and fixed point logic.
\newblock In {\em Proc. 22nd CSL}, LIPIcs, pages 281--295, 2013.

\bibitem{gajo79}
M.~R. Garey and D.~S. Johnson.
\newblock {\em Computers and Intractability, A Guide to the Theory of
  NP-Completeness}.
\newblock Freeman, New York, 1979.

\bibitem{mcvp77}
L.~M. Goldschlager.
\newblock The monotone and planar circuit value problems are log-space complete
  for {P}.
\newblock {\em SIGACT News}, 9:25--29, 1977.

\bibitem{gv13}
E.~Gr{\"{a}}del and J.~V{\"{a}}{\"{a}}n{\"{a}}nen.
\newblock Dependence and independence.
\newblock {\em Studia Logica}, 101(2):399--410, 2013.

\bibitem{hkvv15}
M.~Hannula, J.~Kontinen, J.~Virtema, and H.~Vollmer.
\newblock Complexity of propositional independence and inclusion logic.
\newblock In {\em Proc. 40th MFCS}, pages 269--280, 2015.

\bibitem{hkvv15ext}
M.~Hannula, J.~Kontinen, J.~Virtema, and H.~Vollmer.
\newblock Complexity of propositional logics in team semantics.
\newblock {\em CoRR, extended version of \cite{hkvv15}}, abs/1504.06135, 2015.

\bibitem{hk15}
Miika Hannula and Juha Kontinen.
\newblock Hierarchies in independence and inclusion logic with strict
  semantics.
\newblock {\em J. Log. Comput.}, 25(3):879--897, 2015.

\bibitem{hkmv15}
L.~Hella, A.~Kuusisto, A.~Meier, and H.~Vollmer.
\newblock Modal inclusion logic: Being lax is simpler than being strict.
\newblock In {\em Proc. 40th MFCS}, pages 281--292, 2015.

\bibitem{hkmv15personal}
L.~Hella, A.~Kuusisto, A.~Meier, and H.~Vollmer.
\newblock Satisfiability of modal inclusion logic: Lax and strict semantics.
\newblock 2017.
\newblock Corrected version of \cite{hkmv15}, to appear soon on
  arXiv:1504.06409.

\bibitem{hs15}
L.~Hella and J.~Stumpf.
\newblock The expressive power of modal logic with inclusion atoms.
\newblock In {\em Proc. 6th GandALF}, pages 129--143, 2015.

\bibitem{Hodges97c}
W.~Hodges.
\newblock Compositional semantics for a language of imperfect information.
\newblock {\em Logic Journal of the IGPL}, 5(4):539--563, 1997.

\bibitem{kms16}
J.~Kontinen, J.-S. M{\"u}ller, H.~Schnoor, and H.~Vollmer.
\newblock Modal independence logic.
\newblock {\em Journal of Logic and Computation}, 2016.

\bibitem{lad77}
R.~Ladner.
\newblock The computational complexity of provability in systems of modal
  propositional logic.
\newblock {\em SIAM Journal on Computing}, 6(3):467--480, 1977.

\bibitem{levin73}
L.~A. Levin.
\newblock Universal sorting problems.
\newblock {\em Problems of Inform. Transm.}, 9:265--266, 1973.

\bibitem{lv13}
P.~Lohmann and H.~Vollmer.
\newblock Complexity results for modal dependence logic.
\newblock {\em Studia Logica}, 101(2):343--366, 2013.

\bibitem{muellerDiss}
J.-S. M{\"u}ller.
\newblock {\em Satisfiability and Model Checking in Team Based Logics}.
\newblock PhD thesis, Leibniz University of Hannover, 2014.

\bibitem{Peterson2001}
G.~Peterson, J.~Reif, and S.~Azhar.
\newblock Lower bounds for multiplayer noncooperative games of incomplete
  information.
\newblock {\em Computers \& Math. with Applications}, 41(7-8):957 -- 992, 2001.

\bibitem{sv15b}
K.~Sano and J.~Virtema.
\newblock Characterizing frame definability in team semantics via the universal
  modality.
\newblock In {\em Proc. of WoLLIC 2015}, pages 140--155, 2015.

\bibitem{sv16}
K.~Sano and J.~Virtema.
\newblock Characterizing relative frame definability in team semantics via the
  universal modality.
\newblock In {\em Proc. of WoLLIC 2016}, pages 392--409, 2016.

\bibitem{sc02}
P.~Schnoebelen.
\newblock The complexity of temporal logic model checking.
\newblock In {\em Proc. 4th AiML}, pages 393--436, 2002.

\bibitem{sev09}
M.~Sevenster.
\newblock Model-theoretic and computational properties of modal dependence
  logic.
\newblock {\em Journal of Logic and Computation}, 19(6):1157--1173, 2009.

\bibitem{Stockmeyer:1973}
L.~J. Stockmeyer. and A.~R. Meyer.
\newblock Word problems requiring exponential time(preliminary report).
\newblock In {\em Proc. 5th STOC}, pages 1--9, New York, NY, USA, 1973. ACM.

\bibitem{vaananen07}
J.~V\"a\"an\"anen.
\newblock {\em Dependence Logic}.
\newblock Cambridge University Press, 2007.

\bibitem{virtema14}
J.~Virtema.
\newblock Complexity of validity for propositional dependence logics.
\newblock {\em Information and Computation}, 2016.
\newblock Online first.

\bibitem{vol99}
H.~Vollmer.
\newblock {\em Introduction to Circuit Complexity -- A Uniform Approach}.
\newblock Texts in Theoretical Computer Science. Springer Verlag, Berlin
  Heidelberg, 1999.

\end{thebibliography}

\end{document}